\documentclass[journal,twoside,web]{ieeecolor}
\usepackage{generic}
\usepackage{cite}
\usepackage{amsmath,amssymb,amsfonts}
\usepackage{algorithmic}
\usepackage{graphicx}
\usepackage{algorithm,algorithmic,array,subcaption}

\usepackage{hyperref}
\hypersetup{hidelinks=true}
\usepackage{textcomp}
\usepackage{upgreek}
\usepackage{comment}

\usepackage{mathrsfs}
\usepackage{enumitem,boondox-cal}
\usepackage{subfloat}
\usepackage{algorithm,float}  

\usepackage{amsthm}
\usepackage[T1]{fontenc}
\DeclareMathAlphabet{\mathcal}{OMS}{cmsy}{m}{n}

\usepackage[
noabbrev,
capitalise,
nameinlink,
]{cleveref}
\crefname{lemma}{Lemma}{Lemmas}
\crefname{lemmax}{Lemma}{Lemmas}
\crefname{theorem}{Theorem}{Theorems}
\crefname{theoremx}{Theorem}{Theorems}
\crefname{assumption}{Assumption}{Assumptions}
\crefname{proposition}{Proposition}{Propositions}
\crefname{propositionx}{Proposition}{Propositions}
\crefname{corollary}{Corollary}{Corollaries}
\crefname{corollaryx}{Corollary}{Corollaries}
\crefname{appen}{Appendix}{Appendices}
\crefname{figure}{Figure}{Figures}
\crefname{algorithm}{Algorithm}{Algorithms}
\crefname{remark}{Remark}{Remarks}
\crefname{section}{Section}{Sections}
\crefname{subsection}{Subsection}{Subsections}


\theoremstyle{remark} 
\newtheorem{theorem}{Theorem}
\newtheorem{proposition}{Proposition}

\newtheorem{propositionx}{Proposition}[section]
\newtheorem{lemmax}{Lemma}[section]
\newtheorem{corollaryx}{Corollary}[section]
\newtheorem{assumption}{Assumption}
\newtheorem{definition}{Definition}

\theoremstyle{remark} 
\newtheorem{remark}{Remark}
\newtheorem{remarkx}{Remark}[section]

\def\tr{^\top}

\newcommand{\Absl}[1]{\lVert #1 \rVert}
\newcommand{\abs}[1]{\left| #1\right|}
\newcommand{\Abs}[1]{\left\| #1\right\|}
\newcommand{\FRAC}[2]{\left. #1 \middle/ #2 \right.}
\def\mE{\mathbb{E}}
\def\mP{\mathbb{P}}
\def\as{\text{a.s.}}

\def\col{\operatorname{col}}
\def\Bone{\mathbf{1}}

\def\Neighbour{\mathcal{N}}
\def\mL{\mathcal{L}}
\def\mG{\mathcal{G}}

\def\udf{\underline{\mathtt{f}}}
\def\diag{\text{diag}}

\newcommand{\MODIFY}[1]{#1}
\newcommand{\MODIFYY}[1]{#1}
\newcommand{\MODIFYYY}[1]{#1}


\usepackage{tikz}
\usetikzlibrary{shapes.geometric}
\usetikzlibrary{arrows,arrows.meta}

\tikzstyle{ellipse}=[draw, rectangle, minimum width=2.8em, rounded corners=6pt,line width=0.5pt]
\tikzstyle{pxsbx}=[trapezium, trapezium left angle=75, trapezium right angle=105, minimum width=3em, text centered, draw = black, fill=white,line width=0.5pt]
\tikzstyle{lingxing}=[draw,diamond,shape aspect=3,inner sep = 0.4pt,thick,font=\itshape,line width=0.5pt]
\usetikzlibrary{positioning, arrows.meta, calc}
%
\tikzset{
	arrow1/.style = {
		draw = black, thick, -{Latex[length = 4mm, width = 1.5mm]},
	}
}
\tikzset{
	nonterminal/.style = {
		rectangle,
		align = center,
		minimum size = 6mm,
		very thick,
		draw = red!50!black!50,
		top color = white,
		bottom color = red!50!black!20,
	}
}
\tikzset{
	terminal/.style = {
		rectangle,
		align = center,
		minimum size = 6mm,
		rounded corners = 3mm,
		very thick,
		draw = black!50,
		top color = white,
		bottom color = black!20,
	}
}


\def\BibTeX{{\rm B\kern-.05em{\sc i\kern-.025em b}\kern-.08em
    T\kern-.1667em\lower.7ex\hbox{E}\kern-.125emX}}
\markboth{\hskip25pc IEEE TRANSACTIONS ON AUTOMATIC CONTROL}
{Ke \MakeLowercase{\textit{et al.}}: 
Privacy-Preserving Distributed Estimation with Limited Data Rate
}
\begin{document}
\title{
	Privacy-Preserving Distributed Estimation with Limited Data Rate
}
\author{Jieming Ke, Jimin Wang, \IEEEmembership{Member, IEEE}, and Ji-Feng Zhang, \IEEEmembership{Fellow, IEEE}
\thanks{This work was supported by National Natural Science Foundation of China under Grants 62433020 and T2293772. Corresponding author: Ji-Feng Zhang.}
\thanks{Jieming Ke is with 
	the State Key Laboratory of Mathematical Sciences, Academy of Mathematics and Systems Science, Chinese Academy of Sciences, Beijing 100190, and also with the School of Mathematical Sciences, University of Chinese Academy of Sciences, Beijing 100049, China. (e-mail: kejieming@amss.ac.cn)}
\thanks{Jimin Wang is with 
	the School of Automation and Electrical Engineering, University of Science and Technology Beijing, Beijing 100083; and also with the Key Laboratory of Knowledge Automation for Industrial Processes, Ministry of Education, Beijing
	100083, China (e-mail: jimwang@ustb.edu.cn)}
\thanks{Ji-Feng Zhang is with 
	the School of Automation and Electrical Engineering, Zhongyuan University of Technology, Zhengzhou 450007, Henan Province; and also with the State Key Laboratory of Mathematical Sciences, Academy of Mathematics and	Systems Science, Chinese Academy of Sciences, Beijing 100190, China. (e-mail: jif@iss.ac.cn)}
}

\maketitle

\begin{abstract}
	This paper focuses on the privacy-preserving distributed estimation  problem with a limited data rate, where the observations are the sensitive information. 
	Specifically, a binary-valued quantizer-based privacy-preserving distributed estimation algorithm is developed, which improves the algorithm's privacy-preserving capability and simultaneously reduces the communication costs. The algorithm's privacy-preserving capability, measured by the Fisher information matrix, is dynamically enhanced over time. 
	Notably, the Fisher information matrix of the output signals with respect to the sensitive information converges to zero at a polynomial rate, and \MODIFYY{the improvement in privacy brought by the quantizers is quantitatively characterized as a multiplicative effect.}
	Regarding the communication costs, each sensor transmits only 1 bit of information to its neighbours at each time step. Additionally, the assumption on the negligible quantization error for real-valued messages is not required. While achieving the requirements of privacy preservation and reducing communication costs, 
	the algorithm ensures that its estimates converge almost surely to the true value of the unknown parameter by establishing a co-design guideline for the time-varying privacy noises and step-sizes. 
	 A polynomial almost sure convergence rate is obtained, and then the trade-off between privacy and convergence rate is established. 
	Numerical examples demonstrate the main results.
\end{abstract}

\begin{IEEEkeywords}
	 Distributed estimation; privacy preservation; limited data rate; Fisher information. 
\end{IEEEkeywords}

\section{Introduction}
\label{sec:introduction}

\IEEEPARstart{D}{istributed} estimation has received close attention in the past decade due to its extensive applications in various fields, such as biological networks, online machine learning, and smart grids \cite{Sayed2013Diffusion,Kar2012TIT}. Different from traditional centralized estimation, the observations of distributed estimation are collected by different sensors in the communication network. Therefore, network communication is required to fuse the observations from each sensor. However, in actual distributed systems, observations may contain sensitive information, and the network communication may lead to sensitive information leakage. For example, medical research usually requires clinical observation data of patients from different hospitals, which involves the patients' personal data \cite{Li2018DPDOL,Smith2011Privacy}. Motivated by this practical background, this paper investigates how to achieve distributed estimation while preventing leakage of the observations.

The current literature offers several privacy-preserving methods for distributed systems. One of the methods is the homomorphic encryption method \cite{ZhuHM2018PPDO,Hadjicostis2020PPDA,Regueiro2021Privacy-enhancing,Li2024scis}, which provides high-dimensional security while ensuring control accuracy. Another commonly used method is the stochastic obfuscation method \cite{HuGQ2021Nash,Farokhi2019ensuring,Gratton2021TIFS,Wei2024SCIS,Nozari2017DPAC,Jayaraman2018NIPS}, which has the advantages of low computational complexity and high timeliness. 
Other methods include the state decomposition method  \cite{WangYQ2019decomposition} and the privacy mask method  \cite{Altafini2020system-theoretic}. Specifically, for the distributed estimation problem, \cite{Liu2018gossip} proposes an observation perturbation differential privacy method, while \cite{WangJM2022IJRNC,WangJM2023JSSC,WangYQ2023online} give output perturbation differential privacy methods. The methods in \cite{WangJM2022IJRNC,WangJM2023JSSC,Liu2018gossip,WangYQ2023online} provide strong privacy, but their communication relies on the transmission of real-valued messages, which causes quantization errors and high communication costs when applied to digital networks based on quantized communications.

For distributed estimation problem under quantized communications, \cite{Kar2012TIT} proposes a distributed estimation algorithm under infinite-level quantized communications. Under limited data rate, \cite{Alistarh2017QSGD,Wangni2018Gradient,Koloskova2019Decentralized,Kovalev2021linearly,Carpentiero2024Compressed,Michelusi2022Finite,Carpentiero2023distributed} investigate the quantization methods following the biased compression rule \cite{Michelusi2022Finite}. The realization of limited data rate relies on an assumption on the negligible quantization error for real-valued messages. Without such an assumption, \cite{Sayin2014} designs a single-bit diffusion strategy under binary-valued communications, and \cite{Nassif2023Quantization} proposes a distributed estimation algorithm based on variable-rate quantizers. But, the algorithms' estimates in \cite{Sayin2014,Nassif2023Quantization} do not converge to the true value. 

To achieve privacy preservation and quantized communications simultaneously, quantizer-based privacy-preserving methods have recently received significant attention \cite{Chen2023PPDED,GaoL2021DPC,Lang2023federated,WangYQ2023quantization,LiuL2023CDC}. 
For example, \cite{Chen2023PPDED} proposes a dynamic quantization-based homomorphic encryption method. For higher computational efficiency, \cite{Lang2023federated} designs special privacy noises and dither signals in dithered lattice quantizers, and \cite{GaoL2021DPC} proposes a dynamic coding scheme with Laplacian privacy noises. Both of them realize $ \epsilon $-differential privacy. \cite{WangYQ2023quantization,LiuL2023CDC} treat the dither signals in the quantizers as privacy noises, and prove that using the dithered lattice quantizer (i.e., ternary quantizer in \cite{WangYQ2023quantization} and stochastic quantizer in \cite{LiuL2023CDC}) can achieve $ (0,\delta) $-differential privacy. 
Intuitively, the incorporation of quantizers increases the difficulty for adversaries to infer sensitive information, but existing works lack quantitative characterization of improvement in privacy brought by quantizers. 

Building on the above excellent works, this paper answers several key questions. How to simultaneously achieve privacy preservation, ensure a limited data rate, and guarantee the convergence of estimates for distributed estimation problems? How to quantitatively characterize the improvement in privacy brought by quantizers? And, what is the trade-off between privacy and convergence rate under our quantizer-based method for the distributed estimation problem? 

To answer these questions, a novel binary-valued quantizer-based privacy-preserving distributed estimation algorithm is proposed. For quantized communications, our algorithm achieves message transmission at a limited data rate by using the comparison of adjacent binary-valued signals. Based on this technique, the biased compression rule \cite{Michelusi2022Finite} for quantizers can be avoided, and hence, our analysis does not rely on the assumption on the negligible quantization error for real-valued messages as in \cite{Alistarh2017QSGD,Wangni2018Gradient,Koloskova2019Decentralized,Kovalev2021linearly,Carpentiero2024Compressed,Michelusi2022Finite,Carpentiero2023distributed}, and the information receiver is not required to know the upper bounds of the estimate's norms to decode the quantized data as in \cite{WangYQ2023quantization,Lang2023federated}. For the privacy, dither signals in quantized communications \cite{WangY2024JSSC,WangY2024efficient} are also treated as privacy noises. In addition, binary-valued quantizers also make sensitive information more difficult to infer. 

To quantitatively characterize the improvement in privacy of our quantizer-based method, Fisher information is adopted as the privacy metric because of its following advantages. Firstly, Fisher information is related to the Cram{\'e}r-Rao lower bound, and thereby can intuitively quantify the capability of potential adversaries to infer sensitive information. Hence, Fisher information has been adopted as the privacy metric for the privacy-preserving smart meters \cite{Farokhi2017batteries}, the privacy-preserving database query \cite{Farokhi2019ensuring} and privacy-preserving average consensus \cite{Ke2023Differentiated}. 
\MODIFYYY{More importantly, compared to differential privacy, Fisher information provides a more precise measure of privacy. Differential privacy has difficulty in accurately characterizing the improvement of deterministic quantizers on privacy \cite{GaoL2021DPC}, whereas Fisher information can explicitly capture this improvement by reducing adversaries' estimation accuracy for the sensitive information.}
	


By using Fisher information, the binary-valued quantizer-based privacy-preserving distributed estimation algorithm is shown to achieve a dynamically enhanced privacy. The Fisher information matrix of the output signals with respect to the sensitive information converges to zero at a polynomial rate. The dynamic enhanced privacy can be achieved because under our algorithm, the privacy noises can be constant or even increasing, in contrast to the decaying ones in existing privacy-preserving distributed estimation algorithms \cite{WangJM2022IJRNC,WangJM2023JSSC}. 
Furthermore, dynamically enhanced privacy can be used to reveal the trade-off between privacy and convergence rate. When privacy is enhanced at a higher rate, the convergence rate will decrease.

This paper proposes a novel binary-valued quantizer-based privacy-preserving distributed estimation algorithm. 
The main contributions of this paper are summarized as follows.

\begin{enumerate}
	\MODIFYY{
		\item The improvement in privacy brought by the quantizers has been quantitatively characterized. 
		Specifically, under Gaussian privacy noises, the introduction of binary-valued quantizers can improve the privacy-preserving level by at least $\frac{\pi}{2}$ times, which reveals the impact of quantizers on the privacy-preserving level as a multiplicative effect.}
	\item The privacy-preserving capability of the proposed algorithm is dynamically enhanced. The Fisher information matrix of the output signals with respect to the sensitive information converges to zero at a polynomial rate. 
	\MODIFYY{Notably, the privacy analytical framework is unified for general privacy noise types, including Gaussian, Laplacian and even heavy-tailed ones.}
	\item Under the proposed algorithm, each sensor transmits only 1 bit of information to its neighbours at each time step. This is the lowest data rate among existing quantizer-based privacy-preserving distributed algorithms \cite{Chen2023PPDED,Lang2023federated,GaoL2021DPC,WangYQ2023quantization}. Additionally, the assumption on the negligible quantization error for real-valued messages  \cite{Alistarh2017QSGD,Wangni2018Gradient,Koloskova2019Decentralized,Kovalev2021linearly,Carpentiero2024Compressed,Michelusi2022Finite,Carpentiero2023distributed} is not required.
	\item 
	\MODIFYY{A co-design guideline for the time-varying privacy noises and step-sizes under quantized communications is proposed to ensure the almost sure convergence of the algorithm.}
	A polynomial almost sure convergence rate is also obtained. 
	\item The trade-off between privacy and convergence rate is established. Better privacy implies a slower convergence rate, and vice versa. Furthermore, the sensor operators can determine their own preference for the privacy and convergence rate by properly selecting privacy noises and step-sizes. 
\end{enumerate}

The rest of this paper is organized as follows. \cref{sec:prob_form} formulates the problem, and introduces the Fisher information-based privacy metric. \cref{sec:algo} proposes our privacy-preserving distributed estimation algorithm. \cref{sec:privacy} analyzes the privacy-preserving capability of the algorithm. \cref{sec:conv} proves the almost sure convergence of the algorithm, and calculates the almost sure convergence rate. \cref{sec:tradeoff} establishes the trade-off between privacy and convergence rate. \cref{sec:simu} uses numerical examples to demonstrate the main results. \cref{sec:concl} gives a concluding remark for this paper.

\subsection*{Notation}

In the rest of the paper, $ \mathbb{N} $, $\mathbb{R}$, $\mathbb{R}^{n}$, and $ \mathbb{R}^{n\times m} $ are the sets of natural numbers, real numbers, $n$-dimensional real vectors, and $ n \times m $-dimensional real matrices, respectively.  $\|x\|$ is the Euclidean norm for vector $ x $, and $\| A \|$ is the induced matrix norm for matrix $ A $.  
$ A^+ $ is the pseudo-inverse of matrix $ A $. 
$ I_n $ is an $n\times n$ identity matrix.  $\mathbb{I}_{\{\cdot\}}$ denotes the indicator function, whose value is 1 if its argument (a formula) is true; and 0, otherwise. $ \Bone_n $ is the $ n $-dimensional vector whose elements are all ones. $ \diag\{\cdot\} $ denotes the block matrix formed in a diagonal manner
of the corresponding numbers or matrices. $ \col\{\cdot\} $ denotes the column vector stacked by the corresponding vectors. $ \otimes $ denotes the Kronecker product. 
$ \mathcal{N}(0,\sigma^2) $, $ \text{\textit{Lap}}(0,b) $ and $ \text{\textit{Cauchy}}(0,r) $ represent Gaussian distribution with density function $ \frac{1}{\sqrt{2\pi}\sigma} \exp\left( - {x^2}/{2\sigma^2} \right) $, Laplacian distribution with density function $ \frac{1}{2b} \exp\left( - {\abs{x}}/{b} \right) $ and Cauchy distribution with density function $ \left. {1}\middle/{\left(\pi r \left[ 1 + \left( x \middle/ r \right)^2 \right]\right)} \right. $, respectively.


\section{Preliminaries and problem formulation}\label{sec:prob_form}

\subsection{Preliminaries on graph theory}\label{ssec:graph}

In this paper, the communication graph is switching among topology graphs $ \mathcal{G}^{(1)}, \ldots, \mG^{(M)} $, where $ \mG^{(u)} = \left( \mathcal{V}, \mathcal{E}^{(u)}, \right.$  $ \left. \mathcal{A}^{(u)} \right) $ for all $ u = 1,\ldots,M $. $ \mathcal{V} = \{ 1, \ldots, N \} $ is the set of the sensors. $ \mathcal{E}^{(u)} \in \{ (i,j) : i,j \in \mathcal{V} \} $ is the edge set.  $ \mathcal{A}^{(u)} = (a_{ij}^{(u)})_{N\times N} $ represents the symmetric weighted adjacency matrix of the graph whose elements are all non-negative. $ a_{ij}^{(u)} > 0 $ if and only if $ (i,j)\in \mathcal{E}^{(u)} $. Besides, $ \Neighbour_{i}^{(u)} = \{j : (i,j)\in\mathcal{E}^{(u)}\}$ denotes the sensor $ i $'s the neighbour set corresponding to the graph $ \mG^{(u)} $. Define Laplacian matrix as $ \mL^{(u)} = \mathcal{D}^{(u)} - \mathcal{A}^{(u)} $, where $ \mathcal{D}^{(u)} = \diag\left(\sum_{i\in\Neighbour_1}a_{i1}^{(u)},\ldots,\sum_{i\in\Neighbour_N}a_{iN}^{(u)}\right) $. 

The union of $ \mathcal{G}^{(1)}, \ldots, \mG^{(M)} $ is denoted by $ \mG = \left( \mathcal{V}, \mathcal{E}, \mathcal{A} \right) $, where $ \mathcal{E} = \bigcup_{r=1}^{M} \mathcal{E}^{(u)} $, and $ \mathcal{A} = \sum_{u=1}^{M} \mathcal{A}^{(u)} $. Besides, set $ \Neighbour_{i} = \{j : (i,j)\in\mathcal{E}\}$. 

\begin{assumption}\label{assum:connect}
	The union graph $ \mG $ is connected.
\end{assumption}

\MODIFY{
	\begin{remark}
		Instead of requiring instantaneous connectivity at each time step in \cite{Ke2024SCDE,WangYQ2023quantization,Michelusi2022Finite}, \cref{assum:connect} only requires the joint connectivity of the switching topologies $ \mathcal{G}^{(1)}, \ldots, \mG^{(M)} $ over time. 
	\end{remark}
}


The communication graph at time $ k $, denoted by $ \mathtt{G}_k $, is associated with a homogeneous Markovian chain $ \{\mathtt{m}_k:k\in\mathbb{N}\} $ with a state space $ \{1,\ldots,M\} $, transition probability $ p_{uv} = \mP\{\mathtt{m}_k=v|\mathtt{m}_{k-1}=u\} $, and stationary distribution $ \pi_u = \lim_{k\to\infty} \mP\{\mathtt{m}_k = u\} $. If $ \mathtt{m}_k = u $, then $ \mathtt{G}_k = \mathcal{G}^{(u)} $. 
Denote $ q_{ij,k} = \mP\{(i,j)\in\mathcal{E}^{(\mathtt{m}_k)}\} $.
For convenience, $ \mathcal{E}^{(\mathtt{m}_k)} $, $ a_{ij}^{(\mathtt{m}_k)} $, $ \mathcal{N}_i^{(\mathtt{m}_k)} $ and $ \mathcal{L}^{(\mathtt{m}_k)} $ are abbreviated as $ \mathtt{E}_k $, $ \mathtt{a}_{ij,k} $, $ \mathtt{N}_{i,k} $ and $ \mathtt{L}_k $, respectively, in the rest of this paper. 

\begin{remark}
	Markovian switching graphs can be used to model the link failures \cite{Kar2009consensus,ZhangQ2012}. $ \mathtt{a}_{ij,k} > 0 $ implies that the communication link between the sensors $ i $ and $ j $ is normal. $ \mathtt{a}_{ij,k} = 0 $ implies that the communication link fails. 
\end{remark}

\begin{remark}
	Given $ p_{u,1} = \mP\{\mathtt{G}_1 = \mG^{(u)}\} $, $ q_{ij,k} $ can be recur- sively obtained by $q_{ij,k}=\sum_{u\in\mathbb{G}_{ij}} p_{u,k}$, $p_{u,k+1} = \mP\{\mathtt{G}_{k+1} = \mG^{(u)}\} = \sum_{v=1}^{M} p_{v,k} p_{vu}$,
	where $ \mathbb{G}_{ij} = \{u:(i,j)\in\mathcal{E}^{(u)}\} $. By Theorem 1.2 of \cite{Seneta}, we have $ q_{ij,k} = \sum_{u\in\mathbb{G}_{ij}} \pi_{u} + O\left( \lambda_p^k \right) $ for some $ \lambda_p \in (0,1) $. 
	Specifically, when the initial distribution $ \{p_{u,1}: u =1,\ldots,M\} $ is the stationary distribution $ \{\pi_{u}: u =1,\ldots,M\} $, we have $ q_{ij,k}=\sum_{u\in\mathbb{G}_{ij}} \pi_{u} $. 
\end{remark}

\subsection{Observation model}

In the multi-sensor system coupled by the Markovian switching graphs, the sensor $ i $ observes the unknown parameter $ \theta \in \mathbb{R}^n $ from the observation model 
\begin{align}\label{sys:observation}
	\mathtt{y}_{i,k} = \mathtt{H}_{i,k} \theta + \mathtt{w}_{i,k},
	\ i = 1,\ldots,N,
	\ k\in\mathbb{N},
\end{align}
where $ \theta $ is the unknown parameter, $ k $ is the time index, $ \mathtt{w}_{i,k} \in \mathbb{R}^{m_i} $ is the observation noise, and $ \mathtt{y}_{i,k} \in \mathbb{R}^{m_i} $ is the observation.
$ \mathtt{H}_{i,k} \in \mathbb{R}^{m_i\times n} $ is the random measurement matrix. 

Assumptions for the observation model \eqref{sys:observation} are given as follows. 

\begin{assumption}\label{assum:input}
	The random measurement matrix $ \mathtt{H}_{i,k} $ is not necessarily available, but its mean value $ \bar{H}_i $ is known. 
	$ \sum_{i=1}^{N} \bar{H}_i^\top \bar{H}_i $ is invertible.  
\end{assumption}


\begin{remark}
	 \MODIFYY{The invertibility on $ \sum_{i=1}^{N} \bar{H}_i^\top \bar{H}_i $ is the cooperative observability assumption  \cite{Kar2012TIT,Jakovetic2023heavy,Vukovic2024heavy}.  
	 Additionally, \cite{Kar2012TIT} uses the unknown $\mathtt{H}_{i,k}$ to model sensor failure. 
	 Under \cref{assum:input}, the subsystem of each sensor is not necessarily observable. $ \bar{H}_i $ can be even $ 0 $ for some sensor $ i $. Hence, communications between sensors are necessary to fuse data collected by different sensors.}
\end{remark}



\begin{assumption}\label{assum:noise}
	$ \{ \mathtt{w}_{i,k}, \mathtt{H}_{i,k} : i\in\mathcal{V}, k\in\mathbb{N} \} $ is an independent sequence\footnote{\MODIFY{A random variable sequence is said to be independent if any pair of random variables in the sequence are independent of each other.}} such that
	\begin{equation}\label{condi:w_rho}
		\mE \mathtt{w}_{i,k} = 0,\ \sup_{i\in\mathcal{V},\ k\in\mathbb{N}} \mE \Abs{\mathtt{w}_{i,k}}^{\rho} < \infty, 
	\end{equation}
	\begin{equation}\label{condi:H_rho}
		\sup_{i\in\mathcal{V},\ k\in\mathbb{N}} \mE \Abs{\mathtt{H}_{i,k}-\bar{H}_i}^{\rho} < \infty,
	\end{equation} 
	for some $ \rho > 2 $, and independent of $ \{\mathtt{G}_k: k\in\mathbb{N}\} $. 
\end{assumption}

\begin{remark}
	If $ \rho $ in \eqref{condi:w_rho} and \eqref{condi:H_rho} takes different values, for example $ \rho_1 $ and $ \rho_2 $, respectively, then by Lyapunov inequality \cite{Shiryaev}, \eqref{condi:w_rho} and \eqref{condi:H_rho} still hold for $ \rho = \min \{ \rho_1, \rho_2 \} $. 
\end{remark}

\subsection{Dynamically enhanced privacy}

This section will formulate the privacy-preserving distributed estimation problem. 
\MODIFY{Notably, in some medical research \cite{Smith2011Privacy}, the observation $ \mathtt{y}_{i,k} $ is the private clinical observation data held by different hospitals. 
Such privacy scenarios motivate us to protect the observation $ \mathtt{y}_{i,k} $. }

The set containing all the information transmitted in network is denoted as $ \mathtt{S} = \left\{ \mathtt{s}_{ij,k} : (i,j)\in\mathtt{E}_k, k\in\mathbb{N} \right\} $, where $ \mathtt{s}_{ij,k} $ is the signal that the sensor $ i $ transmits to the sensor $ j $ at time $ k $. 
Then, we introduce Fisher information as a privacy metric to quantify the privacy-preserving capability. 

\begin{definition}[Fisher information, \cite{zamir1998Fisher}]
	Fisher information of $ \mathtt{S} $ with respect to sensitive information $ \MODIFYYY{\mathtt{y}} $ is defined as 
	\begin{align*}
		\mathcal{I}_\mathtt{S}(\mathtt{y})
		= \mE \left[ \left[\frac{\partial \ln(\mP(\mathtt{S}|\mathtt{y}))}{\partial \mathtt{y}}\right]\left[\frac{\partial \ln(\mP(\mathtt{S}|\mathtt{y}))}{\partial \mathtt{y}}\right]\tr \middle| \mathtt{y} \right]. 
	\end{align*}
	Given a random variable $ \MODIFYYY{\mathtt{x}} $, the conditional Fisher information is defined as  
	\begin{align*}
		\mathcal{I}_\mathtt{S}(\mathtt{y}|\mathtt{x})
		= \mE \left[ \left[\frac{\partial \ln(\mP(\mathtt{S}|\mathtt{x},\mathtt{y}))}{\partial \mathtt{y}}\right]\left[\frac{\partial \ln(\mP(\mathtt{S}|\mathtt{x},\mathtt{y}))}{\partial \mathtt{y}}\right]\tr \middle| \mathtt{y} \right].
	\end{align*}
\end{definition}

Fisher information can be used to quantify the privacy-preserving capability because of the following proposition.

\begin{proposition}[Cram{\'e}r-Rao lower bound, \cite{zamir1998Fisher}]\label{prop:crbound}
	If $ \mathcal{I}_\mathtt{S}(\mathtt{y}) $ is invertible, then for any unbiased estimator $ \hat{\mathtt{y}} = \hat{\mathtt{y}}(\mathtt{S}) $ of $ \mathtt{y} $, $ \mE(\hat{\mathtt{y}}-\mathtt{y})(\hat{\mathtt{y}}-\mathtt{y})\tr \geq \mathcal{I}_\mathtt{S}^{-1}(\mathtt{y}) $. 
\end{proposition}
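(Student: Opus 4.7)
The plan is to follow the classical score-function derivation, adapted to the vector-valued setting described here. Let $ V(\mathtt{S},y) = \frac{\partial \ln \mP(\mathtt{S}|y)}{\partial y} $ denote the score, so that $ \mathcal{I}_\mathtt{S}(y) = \mE[V V^\top \mid y] $ by definition. I also know that $ \mE[V \mid y] = 0 $, because differentiating the identity $ \int \mP(s\mid y)\, ds = 1 $ under the integral sign gives $ \int \frac{\partial \mP(s\mid y)}{\partial y}\, ds = \int V(s,y) \mP(s\mid y)\, ds = 0 $. This regularity step (exchange of integration and differentiation) is the standard caveat in Cram\'er--Rao type results and is implicit in the hypotheses.

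Next I would exploit unbiasedness. Writing $ \mE[\hat{\mathtt{y}}(\mathtt{S}) - y \mid y] = 0 $ as $ \int (\hat{\mathtt{y}}(s) - y) \mP(s\mid y)\, ds = 0 $ and differentiating componentwise with respect to $ y $, I obtain
\begin{equation*}
    -I_n + \int (\hat{\mathtt{y}}(s) - y)\, \frac{\partial \ln \mP(s\mid y)}{\partial y}^\top \mP(s\mid y)\, ds = 0,
\end{equation*}
that is, $ \mE[(\hat{\mathtt{y}} - y)\, V^\top \mid y] = I_n $. Combining this with the mean-zero property of the score, I can write the cross-covariance cleanly as $ \mathrm{Cov}((\hat{\mathtt{y}} - y), V \mid y) = I_n $.

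Finally I would invoke the matrix Cauchy--Schwarz inequality (or equivalently, positive semidefiniteness of the joint covariance matrix of $ (\hat{\mathtt{y}} - y, V) $). Concretely, consider the block matrix
\begin{equation*}
    \mE\!\left[\begin{pmatrix} \hat{\mathtt{y}} - y \\ V \end{pmatrix}\begin{pmatrix} \hat{\mathtt{y}} - y \\ V \end{pmatrix}^{\!\top}\Bigg| y\right] = \begin{pmatrix} \Sigma & I_n \\ I_n & \mathcal{I}_\mathtt{S}(y) \end{pmatrix} \succeq 0,
\end{equation*}
where $ \Sigma = \mE[(\hat{\mathtt{y}} - y)(\hat{\mathtt{y}} - y)^\top \mid y] $. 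Since $ \mathcal{I}_\mathtt{S}(y) $ is invertible by assumption, applying the Schur complement to the above PSD block matrix yields $ \Sigma - \mathcal{I}_\mathtt{S}(y)^{-1} \succeq 0 $, which is exactly the claimed bound $ \mE(\hat{\mathtt{y}} - y)(\hat{\mathtt{y}} - y)^\top \geq \mathcal{I}_\mathtt{S}^{-1}(y) $ after taking the outer expectation over $ y $ (or simply viewing the stated inequality as conditional on $ y $, consistent with the definition of $ \mathcal{I}_\mathtt{S}(y) $ used in the paper).

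The only real obstacle is justifying the interchange of differentiation and integration in the second step; this requires mild regularity on $ \mP(\mathtt{S}\mid y) $ (dominated convergence, differentiability of the density in $ y $), which is the standard hidden hypothesis in Cram\'er--Rao-type bounds and is absorbed by appealing to \cite{zamir1998Fisher}. Everything else is linear algebra: the passage from the score identity to the PSD block matrix, and from there to the Schur complement bound, is mechanical.
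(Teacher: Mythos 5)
Your derivation is correct, but note that the paper does not prove this proposition at all: it is imported verbatim from \cite{zamir1998Fisher} as a known classical result, so there is no in-paper argument to compare against. What you wrote is the standard multivariate Cram\'er--Rao proof — zero-mean score from differentiating $\int \mP(s\mid y)\,ds = 1$, the cross-moment identity $\mE[(\hat{\mathtt{y}}-y)V\tr \mid y] = I$ from differentiating the unbiasedness constraint, and then positive semidefiniteness of the joint second-moment matrix of $(\hat{\mathtt{y}}-y, V)$ combined with a Schur complement (using invertibility, hence positive definiteness, of $\mathcal{I}_\mathtt{S}(y)$). Two small points: the inequality should indeed be read conditionally on $y$, consistent with the paper's definition of $\mathcal{I}_\mathtt{S}(y)$ as a conditional expectation, so your parenthetical resolution is the right one (no outer expectation over $y$ is taken in the statement); and the interchange of differentiation and integration is, as you say, the implicit regularity hypothesis absorbed by the citation — here $\mathtt{S}$ is a family of discrete (binary-valued) signals, so the "integral" is a sum over outcomes and the regularity needed is differentiability of the conditional probabilities in $y$, which holds under \cref{assum:privacy_noise}. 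No gap.
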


\begin{remark}
	Fisher information is a natural privacy metric \cite{Farokhi2017batteries,Farokhi2019ensuring,Ke2023Differentiated}, because by \cref{prop:crbound}, smaller $ \mathcal{I}_\mathtt{S}(\mathtt{y}) $ implies less information leaks, and vice versa. 
	Besides, Fisher information is closely related to other common privacy metrics. For example, 
	\cite{Barnes2020Fisher} reveals the positive correlation between $ \epsilon $-differential privacy and upper bounds of Fisher information. There are other advantages for Fisher information as the privacy metric. Firstly, compared to mutual information \cite{MIP2016,Entropy2022}, Fisher information is unrelated to the \textit{a priori} knowledge on the sensitive information, and hence, it suitable for privacy-preserving distributed estimation where the distribution of the sensitive information $ \mathtt{y}_{i,k} $ contains the unknown parameter $ \theta $. Secondly, compared to maximal leakage \cite{MaxLeak2020}, Fisher information allows $ \mathtt{y}_{i,k} $ to be both continuous and discrete. 
	\MODIFYYY{Thirdly, compared to differential privacy \cite{GaoL2021DPC}, Fisher information has advantage in characterizing the improvement in privacy brought by deterministic quantizers.} 
\end{remark}

Under the above privacy metric, we aim to achieve
the dynamically enhanced privacy as defined below. 

\begin{definition}
	If the privacy-preserving capability of an algorithm is said to be dynamically enhanced, then given any $ i \in \mathcal{V} $ and $ k $ with $ \mE\mathcal{I}_\mathtt{S}(\mathtt{y}_{i,k}) > 0 $, there exists $ T > k $ such that $ \mE\mathcal{I}_\mathtt{S}(\mathtt{y}_{i,t}) < \mE\mathcal{I}_\mathtt{S}(\mathtt{y}_{i,k}) $ for all $ t \geq T $. 
\end{definition}

\begin{remark}
	By \cref{lemma:enhance} in \cref{appen}, $ \lim_{k\to\infty} \mE \mathcal{I}_{\mathtt{S}} (\mathtt{y}_{i,k}) $  $ = 0 $ is sufficient for the dynamically enhanced privacy. 
\end{remark}

\subsection{Problem of interest}

This paper mainly seeks to develop a new privacy-preserving distributed estimation algorithm which can simultaneously achieve
\begin{enumerate}[leftmargin=1.4em]
	\item The privacy is dynamically enhanced over time; 
	\item The sensor $ i $ transmits only 1 bit of information to its neighbour $ j $ at each time step;
	\item And, the estimates for all sensors converge to the true value of the unknown parameter almost surely. 
\end{enumerate}

\section{Privacy-preserving algorithm design}\label{sec:algo}

This subsection will firstly give the binary-valued quantizer-based method, and then propose a binary-valued quantizer-based privacy-preserving distributed estimation algorithm. 

The traditional consensus+innovations type distributed estimation algorithms \cite{Kar2012TIT,Kar2011rate} fuse the observations through the transmission of estimates $ \hat{\uptheta}_{i,k-1} $, which would lead to sensitive information leakage. For the privacy issue, the following binary-valued quantizer-based method is designed to transform them into binary-valued signals before transmission. Firstly, if $ k = n q+l $ for some $ q \in \mathbb{N} $ and $ l \in \{1,\ldots,n\} $, then the sensor $ i $ generates $ \varphi_k $ as the $ n $-dimensional vector whose $ l $-th element is $ 1 $ and the others are $ 0 $. The sensor $ i $ uses $ \varphi_{k} $ to compress the previous local estimate $ \hat{\uptheta}_{i,k-1} $ into the scalar $\mathtt{x}_{i,k} = \varphi_{k}\tr \hat{\uptheta}_{i,k-1}$. 
Secondly, the sensor $ i $ generates the privacy noise $ \mathtt{d}_{ij,k} $ with distribution $ F_{ij,k}(\cdot) $ for all $ j\in\mathtt{N}_{i,k} $. Then, given the threshold $ C_{ij} $, the sensor $ i $ generates the binary-valued signal
\begin{equation}\label{step:quantize}
	\mathtt{s}_{ij,k}
 	= 
	\begin{cases}
		1,& \text{if}\ \mathtt{x}_{i,k} + \mathtt{d}_{ij,k} \leq C_{ij};\\
		-1,& \text{otherwise}.
	\end{cases}
\end{equation}

\begin{remark}
	The threshold $ C_{ij} $ can be any real number. From the communication perspective, \MODIFYYY{when $\mathtt{d}_{ij,k}$ is unbiased, the communication efficiency of the quantizer \eqref{step:quantize} will be better if $ C_{ij} $ gets closer to $ \mathtt{x}_{i,k} $ \cite{WangY2024JSSC}. Since $ \mathtt{x}_{i,k} $ should converge to $\varphi_k\tr\theta$, we can set $C_{ij} = \arg\min_{C} \mE_{\theta\sim p(\theta)} \Abs{\theta - C \mathbf{1}_n}^2 = \sum_{i=1}^{n} \mu_{\theta,i} $, where $p(\theta)$ is the \textit{a priori} distribution for $ \theta $, $\mu_{\theta,i}$ is the $i$-th element of $\mE_{\theta\sim p(\theta)} \theta$, and $ \mathbf{1}_n $ is the $n$ dimensional vector with all elements being 1.} 
\end{remark}

By using the binary-valued quantizer-based method \eqref{step:quantize}, a novel privacy-preserving distributed estimation algorithm is proposed in \cref{algo:DE}. 

\begin{algorithm}[H]
	\caption{Binary-valued quantizer-based privacy-preserving distributed estimation algorithm.}
	\label{algo:DE}
	\begin{algorithmic}
		\STATE \textbf{Input:}
		initial estimate sequence $ \{\hat{\theta}_{i,0}\} $, threshold sequence $ \{C_{ij}\} $ with $ C_{ij} = C_{ji} $, noise distribution sequence $ \{F_{ij,k}(\cdot)\} $ with $ F_{ij,k}(\cdot) = F_{ji,k}(\cdot) $, step-size sequences $ \{\alpha_{ij,k}\} $ with $ \alpha_{ij,k} = \alpha_{ji,k} > 0 $ and $ \{\beta_{i,k}\} $ with $ \beta_{i,k} > 0 $.
		\STATE \textbf{Output:}
		estimate sequence $ \{\hat{\uptheta}_{i,k}\} $.
		\STATE \textbf{for} $ k = 1,2,\ldots, $ \textbf{do}
		\STATE \quad \textbf{Privacy preservation:} Use the binary-valued quantizer-based method \eqref{step:quantize} to transform the previous local estimate $ \hat{\uptheta}_{i,k-1} $ into the binary-valued signal $ \mathtt{s}_{ij,k} $, and send the binary-valued signal $ \mathtt{s}_{ij,k} $ to the neighbour $ j $.
		\STATE \quad \textbf{Information fusion:} Fuse neighbourhood information. 
		\begin{equation*}
			\check{\uptheta}_{i,k} = \hat{\uptheta}_{i,k-1} + \varphi_k \sum_{j\in \mathtt{N}_{i,k}} \alpha_{ij,k} \mathtt{a}_{ij,k} \left( \mathtt{s}_{ij,k} - \mathtt{s}_{ji,k} \right).
		\end{equation*}
		\vspace{-0.6em}
		\STATE \quad \textbf{Estimate update:} Use $ \mathtt{y}_{i,k} $ to update the local estimate.
		\begin{equation*}
			\hat{\uptheta}_{i,k} = \check{\uptheta}_{i,k} + \beta_{i,k}\bar{H}_{i}\tr \left(\mathtt{y}_{i,k}-\bar{H}_{i} \hat{\uptheta}_{i,k-1}\right), 
		\end{equation*}
		where $ \bar{H}_i \MODIFYY{= \mE \mathtt{H}_{i,k}} $ as in \cref{assum:input}. 
		\STATE \textbf{end for}
	\end{algorithmic}
\end{algorithm}

\begin{remark}\label{remark:quan}
	\MODIFYYY{
	In \cref{algo:DE}, the expectation of $\mathtt{s}_{ij,k}-\mathtt{s}_{ji,k}$ is $ F\!\big(C_{ij}-\varphi_k^\top \hat{\uptheta}_{i,k-1}\big)
	-
	F\!\big(C_{ij}-\varphi_k^\top \hat{\uptheta}_{j,k-1}\big) $, 	which captures the difference in the statistical properties of the quantized data under different estimates. Therefore, when $F(\cdot)$ is strictly monotonically increasing, this communication mechanism is able to achieve information fusion of quantized data over a distributed network without relying on controlling an upper bound on the quantization error. This communication mechanism has the following two advantages:}
	\begin{enumerate}[leftmargin=1.4em]
		\item \MODIFYYY{It does not rely on the assumption that certain real-valued messages can be transmitted with negligible quantization error as required by many existing stochastic compression methods \cite{Alistarh2017QSGD,Wangni2018Gradient,Koloskova2019Decentralized,Kovalev2021linearly,Carpentiero2024Compressed,Michelusi2022Finite,Carpentiero2023distributed}. Compression methods in these works satisfy} the biased compression rule
		\begin{align}\label{condi:biasc}
			\sqrt{\mE \left[ \lVert \mathtt{Q}(\mathtt{x}) - \mathtt{x} \rVert^2 \middle| \mathtt{x} \right]} \leq \kappa \lVert \mathtt{x} \rVert + \iota, 
		\end{align}
		where $ \mathtt{Q}(\cdot) $ is the stochastic compression operator, and $ \kappa\in[0,1) $, $ \iota \geq 0 $. However, when the decoder does not know the \textit {a priori} upper bound of $ \lVert \MODIFYYY{\mathtt{x}} \rVert $, under \eqref{condi:biasc}, $ \mathtt{Q}(\cdot) $ cannot compress a real-valued vector into finite bits. This is because under finite-level quantizer $ \mathtt{Q}(\cdot) $, $ \mathtt{Q}(\MODIFYYY{\mathtt{x}}) $ is uniformly bounded, leading to 
		\begin{align*}
			\lim_{\lVert \mathtt{x} \rVert \to \infty} \frac{\sqrt{\mE \left[ \lVert \mathtt{Q}(\mathtt{x}) - \mathtt{x} \rVert^2 \middle| \mathtt{x} \right]}}{\lVert \mathtt{x} \rVert} = 1 > \kappa,
		\end{align*}
		which is contradictory to \eqref{condi:biasc}. Due to this limitation,  \cite{Alistarh2017QSGD,Wangni2018Gradient,Koloskova2019Decentralized,Kovalev2021linearly,Carpentiero2024Compressed,Michelusi2022Finite,Carpentiero2023distributed} assume that certain real-valued messages can be transmitted with negligible quantization error. \MODIFYYY{The quantizer \eqref{step:quantize} does not follow \eqref{condi:biasc}, and hence, does not require this assumption.}
		\MODIFYYY{\item It does not require a decoder, thereby reducing the computational complexity. For comparison, dynamic quantization  method \cite{Chen2023PPDED,GaoL2021DPC}, empirical measurement method \cite{ZhaoYL2019consensus} and recursive projection method \cite{WangY2026distributed} require decoding the quantized data into approximate values of the original estimates before they can be used for information fusion. In contrast, \cref{algo:DE} avoids the decoder using $\mathtt{s}_{ij,k}-\mathtt{s}_{ji,k}$, and hence, achieves a similar computational complexity with the traditional ones without quantization \cite{ZhangQ2012,WangJM2022IJRNC}.}
	\end{enumerate}
\end{remark}


	\begin{remark}
		\cref{algo:DE} does not rely on the value of $ \mathtt{H}_{i,k} $ due to \cref{assum:input}. 
		Therefore, under \cref{algo:DE},
		preserving the privacy of $\mathtt{y}_{i,k}$ inherently ensures the privacy of $ \mathtt{H}_{i,k} $. 
	\end{remark}
	
	\begin{remark}\label{remark:C_phi}
		\MODIFYYY{\cref{algo:DE} achieves a data rate of 1 bit per time. Such an extremely low data rate can be achieved because it does not rely on transmission of real-valued messages as shown in \cref{remark:quan}, and the introduction of $\varphi_k$ makes the data rate irrelevant of the dimension of the estimates. For comparison, \cite{Lang2023federated,WangYQ2023quantization} requires no less than $ n $ and $\lceil n \log_2 3 \rceil$ bits per time, respectively.}
		When not pursuing such an extremely low data rate, $\varphi_k$ can be \MODIFYYY{replaced by a periodic selection matrix $\Psi_k \in \mathbb{R}^{\psi\times n}$. Under this replacement, the sensor generates a $\psi$-bit $s_{ij,k}$ by applying independent privacy noise and binary-valued quantization to each element of the vector $\mathtt{x}_{i,k} = \Psi_{k}\tr \hat{\uptheta}_{i,k-1}$. This approach improves the estimation accuracy with data rate of $\psi$ bits per time.}		
	\end{remark}


Assumptions for privacy noises and step-sizes in \cref{algo:DE} are given as follows. 

\begin{assumption}\label{assum:privacy_noise}
	The privacy noise sequence $ \{\mathtt{d}_{ij,k}: (i,j)\in\mathcal{E}, k\in\mathbb{N}\} $ satisfies
	\begin{enumerate}[label={\roman*)},leftmargin=1.4em]
		\item The density function $ f_{ij,k}(\cdot) $ of $ \mathtt{d}_{ij,k} $ exists;
		\item $ \eta_{ij,k} = \sup_{x\in\mathbb{R}} \frac{f_{ij,k}^2\left(x\right)}{F_{ij,k}(x)\left( 1 - F_{ij,k}(x) \right)} < \infty $;
		\item There exists a sequence $ \{\zeta_{ij,k}\} $ such that for all compact set $ \mathcal{X} $, $ \inf_{(i,j)\in\mathcal{E},k\in\mathbb{N},x\in\mathcal{X}} \frac{f_{ij,k}(x)}{\zeta_{ij,k}} > 0 $;
		\item $ \{\mathtt{d}_{ij,k}:(i,j) \in\mathcal{E}, k\in\mathbb{N}\} $ is an independent sequence, and independent of $ \{\mathtt{w}_{i,k},\mathtt{G}_k,\mathtt{H}_{i,k}:i\in \mathcal{V},\ k\in\mathbb{N}\} $. 
	\end{enumerate}
\end{assumption}

\begin{remark}
	The \cref{assum:privacy_noise} ii) is for the privacy analysis, and iii) is for the convergence analysis.
\end{remark}

\begin{assumption}\label{assum:step}
	The step-size sequences $ \{\alpha_{ij,k}:(i,j)\in\mathcal{E}, k\in\mathbb{N}\} $ and $ \{\beta_{i,k}:i\in\mathcal{V},k\in\mathbb{N}\} $ satisfy
	\begin{enumerate}[label={\roman*)},leftmargin=1.4em]
		\item $ \sum_{k=1}^{\infty} \alpha^2_{ij,k} < \infty $ and $ 
		\alpha_{ij,k} = O\left( \alpha_{ij,k+1} \right) $ for all $ (i,j)\in\mathcal{E} $;
		\item $ \sum_{k=1}^{\infty} \beta^2_{i,k} < \infty $ and $
		\beta_{i,k} = O\left( \beta_{i,k+1} \right) $ for all $ i\in\mathcal{V} $; 
		\item $ \sum_{k=1}^{\infty} z_k = \infty $ for $ z_k = \min\{ \alpha_{ij,k}\zeta_{ij,k}: (i,j)\in\mathcal{E}\} \cup \{ \beta_{i,k}: i\in\mathcal{V} \} $ \MODIFYY{and  $\zeta_{ij,k}$ follows \cref{assum:privacy_noise} iii).} 
	\end{enumerate}
\end{assumption}

\begin{remark}
	\cref{assum:step} is the stochastic approximation condition for distributed estimation. \MODIFYY{Such step-sizes are typically set as polynomial functions of $k$ \cite{Ke2024SCDE}.} When the step-sizes are all polynomial, \cref{assum:step} iii) is equivalent to $ \sum_{k=1}^{\infty} \alpha_{ij,k}\zeta_{ij,k} = \infty $ for all $ (i,j)\in\mathcal{E} $ and $ \sum_{k=1}^{\infty} \beta_{i,k} = \infty $ for all $ i\in\mathcal{V} $. 
	In \cref{assum:step}, the step-sizes are not necessarily the same for all sensors, in contrast to the centralized step-sizes adopted in many distributed algorithms \cite{Kar2012TIT,HuGQ2021Nash,ZhuHM2018PPDO}. Therefore, the sensor operators can properly select their step-sizes based on their own requirements. 
\end{remark}


\MODIFYY{
	\begin{remark}
		\cref{assum:privacy_noise,assum:step} indicate that \cref{algo:DE} establishes a time-varying design method for privacy noises and step-sizes under quantized communications upon the algorithm framework of \cite{Ke2024SCDE}: 
		\begin{enumerate}[label={\roman*)},leftmargin=1.4em]
			\item Diversified design of privacy noises is allowed to meet different privacy-preserving requirements.  By Lemma 5.3 of \cite{WangY2024JSSC} and Lemmas \ref{lemma:zeta}-\ref{lemma:Cauchy} in \cref{appen2}, \cref{assum:privacy_noise} accommodates not only standard differential privacy noises like Laplacian and Gaussian ones but also heavy-tailed Cauchy noise for outlier protection \cite{Ito2021heavy}; 
			\item Privacy noises are allowed to be time-varying for a better privacy-preserving level. By \cref{prop:assum4} in \cref{appen2}, \cref{assum:step} permits polynomially increaing privacy noises under a maximum allowable growth rate.
			\item  A co-design guideline for the privacy noises and step-sizes is presented to ensure the convergence of \cref{algo:DE} under quantized communications.   
			Crucially, this guideline differs fundamentally from the non-quantized case \cite{WangYQ2023online,WangJM2024bipartite,WangYQ2024Tailoring}. In non-quantized settings, larger noise increases communication signal variance, requiring smaller step-sizes to mitigate its impact on estimation accuracy. Under quantized communications, however, the communication signal variance remains uniformly bounded regardless of noise magnitude.  Instead, larger noise reduces the previous estimate information carried by the communication signal, necessitating larger step-sizes to ensure efficient information utilization. 
		\end{enumerate}
	\end{remark}
}

\section{Privacy analysis}\label{sec:privacy}

The section will analyze the privacy-preserving capability of \cref{algo:DE}. 
\cref{thm:privacy/finite} below proves that privacy-preserving capability of \cref{algo:DE} is dynamically enhanced over time. \cref{thm:priv/improve}  quantify the improvement of the privacy-preserving capability brought by the binary-valued quantizers. 


\begin{theorem}\label{thm:privacy/finite}
	Suppose Assumptions \ref{assum:input}, \ref{assum:noise}, 
	\ref{assum:privacy_noise} i), ii), iv) and \ref{assum:step} ii), iii) hold, and
	\begin{enumerate}[label={\roman*)},leftmargin=1.4em]
		\item $ \beta_{i,k} \lambda_{\max} (Q_i) < 1 $, where $ Q_i = \bar{H}_i\tr\bar{H}_i $ and $ \lambda_{\max} (Q_i) $ is the maximum eigenvalue of $ Q_i $;
		\item $ \sum_{t=1}^\infty\! \prod_{l=1}^{t} \eta_{ij,t} \! \left( 1 \! - \! \lambda_{\min}^+(Q_i)\beta_{i,l} \right)^2 \! <\! \infty $, where $ \lambda_{\min}^+(Q_i) $ is the minimum positive eigenvalue of $ Q_i $. 
	\end{enumerate}
	Then,  
	\begin{align}\label{concl:privacy}
		& \mE \mathcal{I}_{\mathtt{S}}(\mathtt{y}_{i,k}) \nonumber\\
		\leq & \sum_{j\in\mathcal{N}_i} \sum_{t=k+1}^\infty \beta_{i,k}^2  q_{ij,t} \eta_{ij,t} \left( \prod_{l=k+1}^{t-1} \left( 1 - \lambda^+_{\min}(Q_i) \beta_{i,l} \right) \! \right)^2 \!\!\! \bar{H}_i\bar{H}_i\tr \nonumber\\
		< & \infty, 
	\end{align} 
	where $ q_{ij,k} $ is given in \cref{ssec:graph}. 
	\MODIFY{Furthermore, if
	\begin{enumerate}[label={\roman*)},leftmargin=1.4em]
		\addtocounter{enumi}{+2}
		\item $ p_{u,1}=\mP\{\mathtt{G}_1=\mG^{(u)}\} = \pi_u $; 
		\item $ \eta_{ij,k} \leq \frac{\eta_{ij,1}}{k^{2 \epsilon_{ij}}} $ with $ \eta_{ij,1} > 0  $ and $ \epsilon_{ij} \geq 0 $;
		\item $ \beta_{i,k} = \frac{\beta_{i,1}}{k^{\delta_i}} $ if $ k \geq k_{i,0} $; and $ 0 $, otherwise, where $ \delta_i \in (1/2,1] $ , $ \beta_{i,1} \in (0, k_{i,0}^{\delta_i}) $ and $ 2 \lambda_{\min}^+(Q_i) \beta_{i,1} + 2 \epsilon_{ij} > 1 $; 
	\end{enumerate}
	then 
	\begin{align}\label{concl:privacy_rate}
		\mE\mathcal{I}_{\mathtt{S}} (\mathtt{y}_{i,k}) 
		\leq& \sum_{j\in\mathcal{N}_i} \sum_{u\in\mathbb{G}_{ij}} \pi_{u} R_{ij,k} \beta_{i,k}\eta_{ij,k} \bar{H}_{i} \bar{H}_{i}\tr \nonumber\\
		=& O\left( \sum_{j\in\mathcal{N}_i} \frac{1}{k^{\delta_i+2\epsilon_{ij}}} \right), 
	\end{align}
	where 
	\begin{align*}
		R_{ij,k}\! = \!
		\begin{cases}
			\frac{\beta_{i,1}}{2\lambda^+_{\min}(Q_i)\beta_{i,1}+2 \epsilon_{ij}-1} 
			\frac{(k+1)^{2\lambda^+_{\min}(Q_i)\beta_{i,1}} k^{2\epsilon_{ij}}}{(k-1)^{2\lambda^+_{\min}(Q_i)\beta_{i,1}+2\epsilon_{ij}}}, \text{if}\ \delta_i = 1 ;\\
			\frac{\beta_{i,1}}{ 2 \lambda_{\min}^+(Q_i) \beta_{i,1} - (\delta_i-2 \epsilon_{ij}) k^{\delta_i-1}}, \qquad\quad\  \text{if}\ \delta_i \in (1/2,1).\\
		\end{cases}
	\end{align*}
	Therefore, \cref{algo:DE} achieves the dynamically enhanced privacy.}
\end{theorem}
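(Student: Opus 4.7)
The plan is to bound $\mE\mathcal{I}_\mathtt{S}(\mathtt{y}_{i,k})$ by a chain rule along the signal stream that localises the contribution to sensor $i$'s own transmissions after time $k$ and then propagates $\mathtt{y}_{i,k}$ forward through the estimate recursion. First, I would introduce the auxiliary vector $\mathtt{Z}$ collecting all of $\{\mathtt{w}_{j,k'}:(j,k')\neq(i,k)\}$, $\{\mathtt{H}_{j,k'}\}$, $\{\mathtt{G}_{k'}\}$, and the initial estimates. Since $\mathtt{Z}$ is independent of $\mathtt{y}_{i,k}$, the Fisher-information data-processing inequality gives $\mathcal{I}_\mathtt{S}(\mathtt{y}_{i,k})\leq\mE\mathcal{I}_\mathtt{S}(\mathtt{y}_{i,k}|\mathtt{Z})$, and the conditional chain rule expands the right-hand side into a sum of $\mathcal{I}_{\mathtt{s}_{jm,t}}(\mathtt{y}_{i,k}|\mathtt{s}_{<t},\mathtt{Z})$ over all transmitted signals. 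The key observation is that whenever $j\neq i$ or $t\leq k$, conditioning on $\mathtt{Z}$ and $\mathtt{s}_{<t}$ renders $\hat{\uptheta}_{j,t-1}$ a deterministic function not involving $\mathtt{y}_{i,k}$, so those terms vanish.

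For each remaining signal I would differentiate the estimate update with respect to $\mathtt{y}_{i,k}$: at time $k$ this yields $\partial\hat{\uptheta}_{i,k}/\partial \mathtt{y}_{i,k}^\top=\beta_{i,k}\bar{H}_i^\top$, and since after time $k$ the fusion step $\check{\uptheta}_{i,t}$ depends on $\mathtt{y}_{i,k}$ only through $\hat{\uptheta}_{i,t-1}$, induction produces $\partial\hat{\uptheta}_{i,t-1}/\partial \mathtt{y}_{i,k}^\top=\prod_{l=k+1}^{t-1}(I-\beta_{i,l}Q_i)\,\beta_{i,k}\bar{H}_i^\top$. A direct score-function computation on the Bernoulli law of $\mathtt{s}_{ij,t}$, with success probability $F_{ij,t}(C_{ij}-\varphi_t^\top\hat{\uptheta}_{i,t-1})$, then gives
\begin{align*}
\mathcal{I}_{\mathtt{s}_{ij,t}}(\mathtt{y}_{i,k}|\cdot)=\frac{f_{ij,t}^2}{F_{ij,t}(1-F_{ij,t})}\left(\frac{\partial\hat{\uptheta}_{i,t-1}}{\partial \mathtt{y}_{i,k}^\top}\right)^{\!\!\top}\!\!\varphi_t\varphi_t^\top\frac{\partial\hat{\uptheta}_{i,t-1}}{\partial \mathtt{y}_{i,k}^\top},
\end{align*}
whose scalar prefactor is bounded by $\eta_{ij,t}$ via Assumption~\ref{assum:privacy_noise}~ii).

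To collapse the matrix product I would invoke the commutativity of polynomials in $Q_i$, the inequality $\varphi_t\varphi_t^\top\preceq I_n$, and the fact that $\bar{H}_i^\top$ maps into $\mathrm{range}(Q_i)$; together with Assumption~i) these yield the PSD bound $\bar{H}_i(I-\beta_{i,l}Q_i)^2\bar{H}_i^\top\preceq(1-\beta_{i,l}\lambda_{\min}^+(Q_i))^2\bar{H}_i\bar{H}_i^\top$, which recovers exactly the scalar factor appearing in \eqref{concl:privacy}; taking expectation over the graph sequence introduces $q_{ij,t}$, and finiteness is immediate from Assumption~ii). For the polynomial rate \eqref{concl:privacy_rate}, the stationary initialisation iv) reduces $q_{ij,t}$ to the constant $\sum_{u\in\mathbb{G}_{ij}}\pi_u$; substituting the polynomial forms of $\beta_{i,l}$ and $\eta_{ij,t}$ from v) and vi) and using $\prod_l(1-\lambda_{\min}^+\beta_{i,l})\leq\exp(-\lambda_{\min}^+\sum_l\beta_{i,l})$ converts the tail sum into an integral that is a pure power of $k/t$ when $\delta_i=1$ and a Laplace-type tail when $\delta_i\in(1/2,1)$, matching $R_{ij,k}$.

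The principal obstacle is the sigma-algebra argument in the chain-rule step: rigorously showing that once past signals are conditioned, no information about $\mathtt{y}_{i,k}$ leaks into other sensors' update paths—even though sensor $i$'s binary signals themselves depend on $\mathtt{y}_{i,k}$—requires care about the enumeration order of signals and the fact that each sensor's fusion consumes only signals already in the conditioning. A secondary difficulty is the explicit asymptotic expansion of the tail sum for $\delta_i\in(1/2,1)$, where the bound must be coaxed into the closed form $R_{ij,k}$ and positivity of its denominator verified under Assumption~vi).
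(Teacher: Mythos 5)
Your proposal is correct and follows essentially the same route as the paper's proof: condition on variables independent of $\mathtt{y}_{i,k}$ so that the Fisher-information chain rule localises everything to sensor $i$'s own post-$k$ binary signals, compute the Bernoulli score with the sensitivity $\partial\mathtt{x}_{i,t}/\partial\mathtt{y}_{i,k}=\beta_{i,k}\bar H_i\bigl(\prod_{l=k+1}^{t-1}(I_n-\beta_{i,l}Q_i)\bigr)^{\!\top}\varphi_t$, bound the noise factor by $\eta_{ij,t}$ and average over the Markovian graph to obtain $q_{ij,t}$, and then carry out the polynomial tail estimates for the two cases of $\delta_i$. Your range-of-$Q_i$ argument for replacing the spectral factor by $1-\beta_{i,l}\lambda^+_{\min}(Q_i)$ is just a rephrasing of the paper's pseudo-inverse device $J_i=Q_i^+Q_i$ (Lemmas A.4–A.5), so no substantive difference or gap remains.
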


\begin{proof}
	Firstly, we expand the sequence $ \mathtt{S} = \{ \mathtt{s}_{ij,k} : (i,j)\in\mathtt{E}_k, k\in\mathbb{N} \} $ to $ \breve{\mathtt{S}} = \left\{ \mathtt{s}_{ij,k} : (i,j)\in\mathcal{E}, k\in\mathbb{N} \right\} $. Note that we have expanded the noise sequence $ \{\mathtt{d}_{ij,k}: (i,j)\in\mathtt{E}_k, k\in\mathbb{N}\} $ to $ \{\mathtt{d}_{ij,k}: (i,j)\in\mathcal{E}, k\in\mathbb{N}\} $ in \cref{assum:privacy_noise}. Then, for all $ (i,j) \in \mathcal{E} $, define 
	\begin{align}
		\mathtt{a}_{ij,k}^\prime & = \begin{cases}
			1, & \text{if}\ (i,j)\in\mathtt{E}_k;\\
			0, & \text{otherwise},
		\end{cases} \label{def:a'} \\
		\mathtt{s}_{ij,k}^\prime & = \begin{cases}
			1, & \text{if}\ \mathtt{x}_{i,k} + \mathtt{d}_{ij,k} \leq C_{ij};\\
			-1, & \text{otherwise}. 
		\end{cases} \nonumber
	\end{align}
	For $ (i,j)\in\mathcal{E}\setminus \mathtt{E}_k $, define $ \mathtt{s}_{ij,k} = 0 $. Then, $ \mathtt{s}_{ij,k} = \mathtt{a}_{ij,k}^\prime \mathtt{s}_{ij,k}^\prime $ and $ \mE\mathcal{I}_\mathtt{S}(\mathtt{y}_{i,k}) = \mE \mathcal{I}_{\breve{\mathtt{S}}} (\mathtt{y}_{i,k}) $. 
	
	Note that $ \mathcal{I}_{\{\mathtt{y}_{i,l}:{l \neq k}\}}(\mathtt{y}_{i,k}) = 0 $. 
	Then, by \cref{coro:chain} in \cref{appen}, 
	\begin{align}\label{ineq:FI_trans}
		\mathcal{I}_{\breve{\mathtt{S}}} (\mathtt{y}_{i,k}) 
		\leq  \mathcal{I}_{\breve{\mathtt{S}}} (\mathtt{y}_{i,k}|\{\mathtt{y}_{i,l}:{l \neq k}\}) 
	\end{align}
	Note that for any $ (u,v) \in \mathcal{E} $, $ \mathtt{d}_{uv,t} $ is independent of $ \mathtt{M}^{-}_{i,t-1,k} $ and $ \mathtt{y}_{i,k} $,  and $ \mathtt{x}_{u,t} $ is $ \sigma(\mathtt{M}^{-}_{i,t-1,k}\cup\{\mathtt{y}_{i,k}\}) $-measurable, where $ \sigma(\cdot) $ is the minimum $ \sigma $-algebra containing the corresponding set, and $ \mathtt{M}^{-}_{i,t,k} = \{\mathtt{y}_{i,l}:{l \neq k}\} \cup \{ \mathtt{s}_{uv,l} : (u,v)\in\mathcal{E},l\leq t \} $. Then, given $ \mathtt{M}^{-}_{i,t-1,k} $ and $ \mathtt{y}_{i,k} $, one can get $ \{\mathtt{s}_{uv,t}^\prime:(u,v)\in\mathcal{E}\} $ is independent. Besides, given $ \mathtt{M}^{-}_{i,t-1,k} $ and $ \mathtt{y}_{i,k} $, we have $ \mathtt{s}_{uv,t}^\prime $ is uniquely determined by $ \mathtt{d}_{uv,t} $, and $ \mathtt{a}_{uv,t}^\prime $ is uniquely determined by $ \mathtt{G}_k $. Then, by \cref{assum:privacy_noise}, given $ \mathtt{M}^{-}_{i,t-1,k} $ and $ \mathtt{y}_{i,k} $, one can get $ \{\mathtt{s}_{uv,t}^\prime:(u,v)\in\mathcal{E}\} $ is independent of $ \{a_{uv,t}^\prime:(u,v)\in\mathcal{E}\} $. Therefore, by \cref{lemma:Fisher_independent} in \cref{appen},
	\begin{align}\label{eq:I[lemma]}
		\mathcal{I}_{\breve{\mathtt{S}}} (\mathtt{y}_{i,k}|\{\mathtt{y}_{i,l}:{l \neq k}\}) 
		= & \sum_{t=1}^{\infty} \sum_{(u,v)\in\mathcal{E}} \mathcal{I}_{\mathtt{s}_{uv,t+1}}(\mathtt{y}_{i,k}|\mathtt{M}^{-}_{i,t,k}) \nonumber\\
		= & \sum_{t=1}^{\infty} \sum_{j\in\mathcal{N}_i} \mathcal{I}_{\mathtt{s}_{ij,t+1}}(\mathtt{y}_{i,k}|\mathtt{M}^{-}_{i,t,k}),
	\end{align} 
	
	Denote $ \bar{\mathtt{q}}_{ij,t} = \mP\{(i,j)\in\mathcal{E}|\mathtt{G}_{t-1}\} $, and note that $ \{\mathtt{d}_{ij,k}:k\in\mathbb{N}\} $ is independent. Then, we have 
	\begin{align*}
		& \ln \mP\left\{\mathtt{s}_{ij,t} \middle| \mathtt{y}_{i,k},\mathtt{M}^{-}_{i,t,k-1}\right\}\cr
		= & \ln \left(\bar{\mathtt{q}}_{ij,t} F_{ij,t}(C_{ij}-\mathtt{x}_{i,t})\right)\mathbb{I}_{\{\mathtt{s}_{ij,t}=1\}}
		+ \ln (1-\bar{\mathtt{q}}_{ij,t}) \mathbb{I}_{\{\mathtt{s}_{ij,t}=0\}}  \nonumber\\
		& + \ln \left(\bar{\mathtt{q}}_{ij,t} \left(1- F_{ij,t}(C_{ij}-\mathtt{x}_{i,t})\right) \right) \mathbb{I}_{\{\mathtt{s}_{ij,t}=-1\}},
	\end{align*}
	which implies
	\begin{align}\label{eq:partial_lnP}
		& \frac{\partial}{\partial \mathtt{y}_{i,k}} \ln \left( \mP\left\{\mathtt{s}_{ij,t}\middle|\mathtt{y}_{i,k},\mathtt{M}^{-}_{i,t-1,k} \right\} \right) \nonumber\\
		= & \frac{\partial}{\partial \mathtt{y}_{i,k}} \ln \left( \bar{\mathtt{q}}_{ij,t} F_{ij,t}(C_{ij} - \mathtt{x}_{i,t}) \right) \mathbb{I}_{\{\mathtt{s}_{ij,t}=1\}} \nonumber\\
		& + \frac{\partial}{\partial \mathtt{y}_{i,k}} \ln \left(\bar{\mathtt{q}}_{ij,t} \left(1- F_{ij,t}(C_{ij} - \mathtt{x}_{i,t})\right) \right) \mathbb{I}_{\{\mathtt{s}_{ij,t}=-1\}} \nonumber\\
		= & - \frac{f_{ij,t}\left( C_{ij} - \mathtt{x}_{i,t} \right)}{F_{ij,t}(C_{ij}-\mathtt{x}_{i,t})} \frac{\partial \mathtt{x}_{i,t}}{\partial \mathtt{y}_{i,k}} \mathbb{I}_{\{\mathtt{s}_{ij,t}=1\}} \nonumber\\
		& + \frac{f_{ij,t}\left( C_{ij} - \mathtt{x}_{i,t} \right)}{1 - F_{ij,t}(C_{ij}-\mathtt{x}_{i,t})} \frac{\partial \mathtt{x}_{i,t}}{\partial \mathtt{y}_{i,k}} \mathbb{I}_{\{\mathtt{s}_{ij,t}=-1\}}. 
	\end{align}
	
	Now, we calculate $ \frac{\partial \mathtt{x}_{i,t}}{\partial \mathtt{y}_{i,k}} $. If $ k \geq t $, then $ \frac{\partial \mathtt{x}_{i,t}}{\partial \mathtt{y}_{i,k}} = 0 $. If $ k < t $, then 
	by \cref{lemma:+} in \cref{appen},  
	\begin{align}\label{eq:x/y}
		\frac{\partial \mathtt{x}_{i,t}}{\partial \mathtt{y}_{i,k}}
		= & \beta_{i,k} \bar{H}_i J_i \left( \prod_{l=k+1}^{t-1} (I_n - \beta_{i,l} Q_i) \right)\tr \varphi_{t} \nonumber\\
		= & \beta_{i,k} \bar{H}_i \left( \prod_{l=k+1}^{t-1} (J_i - \beta_{i,l} Q_i ) \right)\tr \varphi_{t}, 
	\end{align}
	where $ J_i = Q_i^+ Q_i $. 
	Hence, by \eqref{ineq:FI_trans}-\eqref{eq:x/y} and \cref{lemma:+eigen,lemma:sum_prod} in \cref{appen}, 
	\begin{align}\label{ineq:EI<infty}
		& \mE\mathcal{I}_{\mathtt{S}} (\mathtt{y}_{i,k})
		= \mE\mathcal{I}_{\breve{\mathtt{S}}} (\mathtt{y}_{i,k}) \nonumber\\
		= & \sum_{t=1}^{\infty} \sum_{j\in\mathcal{N}_i} \mE \left[ \left(\frac{\partial}{\partial \mathtt{y}_{i,k}} \ln \left( \mP \left\{\mathtt{s}_{ij,t}\middle|\mathtt{y}_{i,k},\mathtt{M}^{-}_{i,t-1,k}  \right\}\right)\right)\right. \nonumber\\
		& \cdot \left. \left(\frac{\partial}{\partial \mathtt{y}_{i,k}} \ln \left( \mP \left\{\mathtt{s}_{ij,t} \middle| \mathtt{y}_{i,k},\mathtt{M}^{-}_{i,t-1,k}  \right\}\right)\right)\tr \right] \nonumber\\
		= & \!\sum_{j\in\mathcal{N}_i}\! \sum_{t=k+1}^\infty \beta_{i,k}^2  \mE \left[ \frac{\bar{\mathtt{q}}_{ij,t} f_{ij,t}^2\left( C_{ij} - \mathtt{x}_{i,t} \right)}{F_{ij,t}(C_{ij}-\mathtt{x}_{i,t})\left( 1 - F_{ij,t}(C_{ij}-\mathtt{x}_{i,t}) \right)} \right]\nonumber\\
		\cdot & \bar{H}_i \left( \prod_{l=k+1}^{t-1} (J_i - \beta_{i,l} Q_i) \right)\tr \!\! \varphi_{t} \varphi_{t}\tr  \left( \prod_{l=k+1}^{t-1} (J_i - \beta_{i,l} Q_i) \right) \bar{H}_i\tr \nonumber\\
		\leq & \! \sum_{j\in\mathcal{N}_i} \sum_{t=k+1}^\infty \beta_{i,k}^2  q_{ij,t} \eta_{ij,t} \left( \prod_{l=k+1}^{t-1} \left( 1 - \lambda^+_{\min}(Q_i) \beta_{i,l} \right)\! \right)^2 \! \bar{H}_i\bar{H}_i\tr \nonumber\\
		< & \infty. 
	\end{align}
	Now, we prove \eqref{concl:privacy_rate}. If $ k < k_{i,0} $, then $ \beta_{i,k} = 0 $, which together with \eqref{concl:privacy} implies $ \mE \mathcal{I}_{\mathtt{S}}(\mathtt{y}_{i,k}) = 0 $. 
	
	If $ k \geq k_{i,0} $, then by Lemma A.2 of \cite{WangJM2024bipartite}, one can get
	\begin{align}\label{ineq:H*prod*phi*prod*H}
		& \bar{H}_i \left( \prod_{l=k+1}^{t-1} (J_i - \beta_{i,l} Q_{i}) \right)\tr \varphi_{t} \varphi_{t}\tr  \left( \prod_{l=k+1}^{t-1} (J_i - \beta_{i,l} Q_{i}) \right) \bar{H}_i\tr \nonumber\\
		\leq & \left( \prod_{l=k+1}^{t-1} \left( 1 - \frac{\lambda^+_{\min}(Q_i)\beta_{i,1}}{l^{\delta_i}} \right) \right)^2 \bar{H}_i\bar{H}_i\tr \nonumber\\
		\leq &
		\begin{cases}
			\left( \frac{k+1}{t-1} \right)^{2\lambda^+_{\min}(Q_i)\beta_{i,1}}\bar{H}_i\bar{H}_i\tr,\\
			\qquad\qquad\qquad\qquad\qquad\qquad\qquad\qquad \text{if}\ \delta_i = 1 ; \\
			\exp\left( \frac{2\lambda^+_{\min}(Q_i)\beta_{i,1}}{1-\delta_i} \left( (k+1)^{1-\delta_i} - t^{1-\delta_i} \right) \right) \bar{H}_i\bar{H}_i\tr, \\
			\qquad\qquad\qquad\qquad\qquad\qquad\qquad\qquad \text{if}\ \delta_i < 1.  
		\end{cases}
	\end{align}
	Therefore, if $ \delta_i = 1 $, then
	\begin{align}\label{ineq:Fisher/delta=1}
		\mE\mathcal{I}_{\mathtt{S}}(\mathtt{y}_{i,k}) 
		\leq & \sum_{j\in\mathcal{N}_i} \! \sum_{t=k+1}^\infty \!\! \beta_{i,k}^2 q_{ij,t} \eta_{ij,t}\left( \! \frac{k+1}{t-1} \! \right)^{2\lambda^+_{\min}(Q_i)\beta_{i,1}}\!\!\!\bar{H}_i\bar{H}_i\tr \nonumber\\
		\leq &  \sum_{j\in\mathcal{N}_i} \beta_{i,1}^2 \eta_{ij,1} \left( \sum_{u\in\mathbb{G}_{ij}} \pi_{u} \right) \frac{(k+1)^{2\lambda^+_{\min}(Q_i)\beta_{i,1}}}{k^{2}} \nonumber\\ 
		&\cdot \sum_{t=k+1}^\infty \frac{\bar{H}_i\bar{H}_i\tr}{(t-1)^{2\lambda^+_{\min}(Q_i)\beta_{i,1} + 2 \epsilon_{ij}}} \nonumber\\
		\leq & \sum_{j\in\mathcal{N}_i} \beta_{i,1}^2 \eta_{ij,1} \left( \sum_{u\in\mathbb{G}_{ij}} \pi_{u} \right) \frac{(k+1)^{2\lambda^+_{\min}(Q_i)\beta_{i,1}}}{k^{2}} \nonumber\\
		&\cdot \frac{(k-1)^{1-2\lambda^+_{\min}(Q_i)\beta_{i,1}-2 \epsilon_{ij}}}{2\lambda^+_{\min}(Q_i)\beta_{i,1}+2 \epsilon_{ij}-1} \bar{H}_i\bar{H}_i\tr \nonumber\\
		\leq & \sum_{j\in\mathcal{N}_i} \left( \sum_{u\in\mathbb{G}_{ij}} \pi_{u} \right) \frac{\beta_{i,1}}{2\lambda^+_{\min}(Q_i)\beta_{i,1}+2 \epsilon_{ij}-1}  
		 \nonumber\\
		&\cdot  \frac{(k+1)^{2\lambda^+_{\min}(Q_i)\beta_{i,1}} k^{2\epsilon_{ij}}}{(k-1)^{2\lambda^+_{\min}(Q_i)\beta_{i,1}+2\epsilon_{ij}}} \beta_{i,k}\eta_{ij,k}   \bar{H}_i\bar{H}_i\tr. 
	\end{align}
	If $ \delta_i < 1 $, then $ 2 \lambda_{\min}^+(Q_i) \beta_{i,1} k^{1-\delta_i} > 1-2 \epsilon_{ij} > \delta_{i}-2 \epsilon_{ij} $, which together with \cref{lemma:sum_exp} in \cref{appen} implies 
	\begin{align}\label{ineq:Fisher/delta<1}
		& \mE\mathcal{I}_{\{\mathtt{s}_{ij,t}:j\in\mathcal{N}_i,t\in\mathbb{N}\}} (\mathtt{y}_{i,k}) \nonumber\\
		\leq & \! \sum_{j\in\mathcal{N}_i} \! \sum_{t=k+1}^\infty \!\! \beta_{i,k}^2 \eta_{ij,t} q_{ij,t} \frac{\exp\left( \frac{2\lambda^+_{\min}(Q_i)\beta_{i,1}}{1-\delta_i}  (k+1)^{1-\delta_i} \right)}{\exp\left( \frac{2\lambda^+_{\min}(Q_i)\beta_{i,1}}{1-\delta_i}  t^{1-\delta_i}  \right)} \bar{H}_i\bar{H}_i\tr
		\nonumber\\
		\leq & \! \sum_{j\in\mathcal{N}_i} \left(\sum_{u\in\mathbb{G}_{ij}} \pi_{u}\right)  \frac{\beta_{i,1}^2 \eta_{ij,1}}{k^{2\delta_i}} \exp\left( \frac{2\lambda^+_{\min}(Q_i)\beta_{i,1}}{1-\delta_i}  (k+1)^{1-\delta_i} \right)
		\nonumber\\
		& \cdot \sum_{t=k+1}^\infty \frac{\exp\left( - \frac{2\lambda^+_{\min}(Q_i)\beta_{i,1}}{1-\delta_i}  t^{1-\delta_i}  \right)}{t^{2\epsilon_{ij}}}  \bar{H}_i\bar{H}_i\tr
		\nonumber\\
		\leq & \! \sum_{j\in\mathcal{N}_i} \left(\sum_{u\in\mathbb{G}_{ij}} \pi_{u}\right) \frac{\beta_{i,1}}{ 2 \lambda_{\min}^+(Q_i) \beta_{i,1} - (\delta_{i}-2 \epsilon_{ij}) k^{\delta_i-1}} \nonumber\\
		&\cdot \beta_{i,k} \eta_{ij,k} \bar{H}_i\bar{H}_i\tr. 
	\end{align}
	Hence by $ \beta_{i,k} \eta_{ij,k} = O\left( \frac{1}{k^{\delta_i+2\epsilon_i}} \right) $,  \eqref{concl:privacy_rate} is obtained. Then, the theorem can be proof by \cref{lemma:enhance}.	
\end{proof}

\MODIFYYY{
\begin{remark}
	In \cref{thm:privacy/finite}, i) is easy to be achieved since $ \beta_{i,k} $ converges to 0. ii) provides a general condition on minimum privacy noises required to ensure finite $\mE \mathcal{I}_{\mathtt{S}}(\mathtt{y}_{i,k})$, with iv) and v)  specifying polynomial step-size and noise design schemes satisfying this condition. iii) implies that the initial distribution of the Markovian chain is its stationary distribution. This condition is naturally satisfied if the chain has been running from the infinite past.
\end{remark}
}

\begin{remark}
	By \eqref{concl:privacy_rate}, \MODIFYYY{$ \mE\mathcal{I}_{\mathtt{S}} (\mathtt{y}_{i,k}) = O \left( \sum_{j\in\Neighbour_i} \beta_{i,k} \eta_{ij,k}\right) $, which indicates that the privacy can be enhanced over time.} 
	The sensor $ i $'s operator can control the convergence rate of $ \mE\mathcal{I}_{\mathtt{S}} (\mathtt{y}_{i,k}) $ by properly selecting the step-size $ \beta_{i,k} $ and the privacy noise distributions. 
	\MODIFYYY{Firstly, smaller $\beta_{i,k}$ implies a better privacy. Besides, privacy noises can be designed as increasing ones to enhance privacy. In contrast, the constant dithered lattice quantizer methods \cite{Lang2023federated,WangYQ2023quantization,LiuL2023CDC} and the decaying noise method \cite{GaoL2021DPC}  are difficult to utilize for improving the convergence rate of $\mE\mathcal{I}_{\mathtt{S}} (\mathtt{y}_{i,k}) $.
	Additionally, \eqref{concl:privacy} reveals that the privacy-preserving level measured by Fisher information is proportional to communication frequency. Therefore, the Markovian switching topology can improve privacy by reducing communication frequency.}
\end{remark}

\begin{remark}
	By \cref{prop:assum3,prop:assum4}, the privacy noise distributions satisfying the condition of \cref{thm:privacy/finite} include 
	$ \mathcal{N}(0,\sigma_{ij,k}^2) $ with $ \sigma_{ij,k} = \sigma_{ij,1} k ^{\epsilon_{ij}} $ and $ \text{\textit{Lap}}(0,b_{ij,k}) $ with $ b_{ij,k} = b_{ij,1} k ^{\epsilon_{ij}}$. Under such a choice of noise distribution, by \cref{thm:privacy/finite}, we have $\mE\mathcal{I}_{\mathtt{S}} (\mathtt{y}_{i,k}) = O\left( \sum_{j\in\mathcal{N}_i} \frac{\beta_{i,k}}{\mE \mathtt{d}_{ij,k}^2} \right)$. 
	Besides, the variances of the privacy noises are not necessarily finite. For example, by \cref{prop:assum3,prop:assum4}, the privacy noises can obey Cauchy distribution, which is heavy-tailed with infinite variance.  
\end{remark}

The following theorem takes Gaussian privacy noise as an example to quantify the improvement of privacy brought by the binary-valued quantizers. In the theorem, the conditional Fisher information given $\{\mathtt{y}_{i,t}: t \neq k\}$ is considered as the privacy metric to eliminate privacy-preserving effects between different observations $ \mathtt{y}_{i,k} $. 

\begin{theorem}\label{thm:priv/improve}
	Under the condition of Theorem \ref{thm:privacy/finite}, when the noise $d_{ij,k}$ is Gaussian distributed, we have
	\begin{align*}
		\mathcal{I}_{\mathtt{S}} \left(\mathtt{y}_{i,k}\middle|\left\{\mathtt{y}_{i,t} :{t \neq k}\right\}\right) \leq \frac{2}{\pi} \mathcal{I}_{\bar{\mathtt{X}}} \left(\mathtt{y}_{i,k}\middle|\left\{\mathtt{y}_{i,t}\ :{t \neq k}\right\}\right), 
	\end{align*}
	where $ \bar{\mathtt{x}}_{ij,k} = \mathtt{x}_{i,k} + \mathtt{d}_{ij,k}$ and $\bar{\mathtt{X}} = \{ \bar{\mathtt{x}}_{ij,k} : (i,j) \in \mathtt{E}_k, k\in\mathbb{N}\}$. 
\end{theorem}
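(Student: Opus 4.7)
\textbf{Proof proposal for Theorem \ref{thm:priv/improve}.}

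The plan is to decompose the two conditional Fisher informations into edge-and-time sums exactly as in the proof of Theorem \ref{thm:privacy/finite}, then compare the resulting terms one-by-one and exploit the specific structure of the Gaussian noise to extract the factor $\tfrac{2}{\pi}$. Let $v_{ij,t} \triangleq \partial \mathtt{x}_{i,t}/\partial \mathtt{y}_{i,k}$ and $\mathtt{M}^{-}_{i,t-1,k}=\{\mathtt{y}_{i,l}:l\neq k\}\cup\{\mathtt{s}_{uv,l}:(u,v)\in\mathcal{E},l\le t-1\}$ as in the previous proof (with an analogous past-message set $\bar{\mathtt{M}}^{-}_{i,t-1,k}$ for the unquantized side, where $\mathtt{s}_{uv,l}$ is replaced by $\bar{\mathtt{x}}_{uv,l}\mathbb{I}_{\{(u,v)\in\mathtt{E}_l\}}$).

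First I would repeat the conditional-independence argument used to derive \eqref{eq:I[lemma]}: for any $t$, conditioned on the past messages and on $\mathtt{y}_{i,k}$, the collection $\{\mathtt{s}'_{uv,t}:(u,v)\in\mathcal{E}\}$ (resp.\ $\{\bar{\mathtt{x}}_{uv,t}:(u,v)\in\mathcal{E}\}$) is independent and independent of the activation mask $\{\mathtt{a}'_{uv,t}\}$, and only the edges $(i,j)$ with $j\in\mathcal{N}_i$ carry information about $\mathtt{y}_{i,k}$ (others have zero score). Applying Lemma \ref{lemma:Fisher_independent} to both sides yields
\begin{align*}
\mathcal{I}_{\mathtt{S}}(\mathtt{y}_{i,k}\mid\{\mathtt{y}_{i,t}:t\neq k\}) &= \sum_{t=k+1}^{\infty}\sum_{j\in\mathcal{N}_i}\mathcal{I}_{\mathtt{s}_{ij,t}}(\mathtt{y}_{i,k}\mid\mathtt{M}^{-}_{i,t-1,k}),\\
\mathcal{I}_{\bar{\mathtt{X}}}(\mathtt{y}_{i,k}\mid\{\mathtt{y}_{i,t}:t\neq k\}) &= \sum_{t=k+1}^{\infty}\sum_{j\in\mathcal{N}_i}\mathcal{I}_{\bar{\mathtt{x}}_{ij,t}}(\mathtt{y}_{i,k}\mid\bar{\mathtt{M}}^{-}_{i,t-1,k}).
\end{align*}

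Next, for the quantized summand the same score computation as in \eqref{eq:partial_lnP}--\eqref{ineq:EI<infty} gives $\mathcal{I}_{\mathtt{s}_{ij,t}}(\mathtt{y}_{i,k}\mid\mathtt{M}^{-}_{i,t-1,k})\le \bar{\mathtt{q}}_{ij,t}\,\eta_{ij,t}\,v_{ij,t}v_{ij,t}^{\top}$, where $\bar{\mathtt{q}}_{ij,t}=\mathbb{P}\{(i,j)\in\mathtt{E}_t\mid \mathtt{G}_{t-1}\}$ arises from conditioning on the edge being present. For the unquantized summand, observing $\bar{\mathtt{x}}_{ij,t}=\mathtt{x}_{i,t}+\mathtt{d}_{ij,t}$ with $\mathtt{d}_{ij,t}\sim\mathcal{N}(0,\sigma_{ij,t}^2)$ is a Gaussian location experiment with mean driven by $\mathtt{x}_{i,t}$, so a direct score calculation produces $\mathcal{I}_{\bar{\mathtt{x}}_{ij,t}}(\mathtt{y}_{i,k}\mid\bar{\mathtt{M}}^{-}_{i,t-1,k})=\bar{\mathtt{q}}_{ij,t}\,\sigma_{ij,t}^{-2}\,v_{ij,t}v_{ij,t}^{\top}$ (the same factor $v_{ij,t}$ because the chain $\mathtt{y}_{i,k}\to\mathtt{x}_{i,t}\to\bar{\mathtt{x}}_{ij,t}$ produces the identical Jacobian).

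The last and decisive step is the Gaussian identity
\begin{align*}
\eta_{ij,t}=\sup_{x\in\mathbb{R}}\frac{f_{ij,t}^2(x)}{F_{ij,t}(x)(1-F_{ij,t}(x))}=\frac{4 f_{ij,t}^2(0)}{1}=\frac{2}{\pi\sigma_{ij,t}^2},
\end{align*}
where the supremum is attained at the median $x=0$ (since $F(0)=1/2$ dominates the denominator). Consequently $\sigma_{ij,t}^2\eta_{ij,t}\le \tfrac{2}{\pi}$ for every $(i,j,t)$, and multiplying the unquantized bound by $\tfrac{2}{\pi}$ dominates the quantized bound term-by-term; summing over $t>k$ and $j\in\mathcal{N}_i$ yields the desired inequality. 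The only delicate point, and where I expect to spend the most care, is verifying that the same $v_{ij,t}$ and the same $\bar{\mathtt{q}}_{ij,t}$ appear on both sides so that the per-term comparison is valid; this reduces to checking that the conditioning sets $\mathtt{M}^{-}_{i,t-1,k}$ and $\bar{\mathtt{M}}^{-}_{i,t-1,k}$ induce the identical conditional law of $\mathtt{x}_{i,t}$ given $\mathtt{y}_{i,k}$, which they do because $\mathtt{x}_{i,t}$ is $\sigma(\{\mathtt{y}_{i,l}:l<t\})$-measurable and hence independent of the communication history conditional on the observations.
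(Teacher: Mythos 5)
Your proposal is correct and follows essentially the same route as the paper's proof: expand $\mathtt{S}$ and $\bar{\mathtt{X}}$ to include the edge indicators, decompose both conditional Fisher informations edge-by-edge and time-by-time via the conditional-independence/chain-rule argument already used for \cref{thm:privacy/finite}, bound each quantized term by $q_{ij,t}\,\eta_{ij,t}\,\bar{\varphi}_{i,k,t}\bar{\varphi}_{i,k,t}^\top$ with $\eta_{ij,t}=\frac{2}{\pi\sigma_{ij,t}^{2}}$ for Gaussian noise (Lemma 5.3 of \cite{WangY2024JSSC}), evaluate the unquantized Gaussian-location terms exactly as $q_{ij,t}\,\sigma_{ij,t}^{-2}\,\bar{\varphi}_{i,k,t}\bar{\varphi}_{i,k,t}^\top$, and compare term by term. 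The only blemish is your closing justification that $\mathtt{x}_{i,t}$ is $\sigma(\{\mathtt{y}_{i,l}:l<t\})$-measurable, which is not true since $\mathtt{x}_{i,t}$ also depends on the received signals; the correct statement, used in the paper, is that $\mathtt{x}_{i,t}$ is measurable with respect to the past transmitted messages together with the local observations, and because the binary signals are deterministic functions of the pre-quantization values $\bar{\mathtt{x}}_{uv,l}$, the same Jacobian $\partial \mathtt{x}_{i,t}/\partial \mathtt{y}_{i,k}$ indeed appears on both sides.
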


\begin{proof}
	Set $\MODIFYYY{\mathtt{d}_{ij,k}} \sim \mathcal{N} (\mu_{ij,k},\sigma_{ij,k}^2)$. 
	Similar to \eqref{ineq:EI<infty} and by Lemma 5.3 of \cite{WangY2024JSSC}, we have
	\begin{align*}
		& \mathcal{I}_{\mathtt{S}} \left(\mathtt{y}_{i,k}\middle|\left\{\mathtt{y}_{i,t}:{t \neq k}\right\}\right) =  \mathcal{I}_{\breve{\mathtt{S}}} \left(\mathtt{y}_{i,k}\middle|\left\{\mathtt{y}_{i,t}:{t \neq k}\right\}\right) \nonumber\\
		\leq & \sum_{t=k+1}^{\infty} \sum_{j\in\mathcal{N}_i} \frac{2}{\pi \sigma_{ij,k}^2} q_{ij,t} \bar{\varphi}_{i,k,t} \bar{\varphi}_{i,k,t}\tr,
	\end{align*}
	where $\bar{\varphi}_{i,k,t} = \beta_{i,k} \bar{H}_i \left( \prod_{l=k+1}^{t-1} (I_n - \beta_{i,l} Q_i) \right)\tr \varphi_{t}$.

	Similarly, one can get
	\begin{align*}
		& \mathcal{I}_{\bar{\mathtt{X}}} \left(\mathtt{y}_{i,k}\middle|\left\{\mathtt{y}_{i,t}:{t \neq k}\right\}\right) =  \mathcal{I}_{\breve{\mathtt{X}}} \left(\mathtt{y}_{i,k}\middle|\left\{\mathtt{y}_{i,t}:{t \neq k}\right\}\right) \nonumber\\
		 = & \sum_{t=k+1}^{\infty} \sum_{j\in\mathcal{N}_i} \frac{1}{\sigma_{ij,k}^2} q_{ij,t} \bar{\varphi}_{i,k,t} \bar{\varphi}_{i,k,t}\tr,
	\end{align*}
	where $ \breve{\mathtt{X}} = \{ \bar{\mathtt{x}}_{ij,k}\mathtt{a}_{ij,k}^\prime : (i,j) \in \mathtt{E}_k, k\in\mathbb{N}\}$ and $\mathtt{a}_{ij,k}^\prime$ is defined in \eqref{def:a'}. 
	Thus, the theorem is proved. 
%
\end{proof}

\begin{remark}
	\cref{thm:priv/improve} proves that in the Gaussian privacy noise case, the introduction of quantizers improves the privacy-preserving capability of the algorithm by at least $ \frac{\pi}{2} $ times. Similarly, according to \cref{lemma:Cauchy}, in the Cauchy noise case, the improvement is at least $ \frac{\pi^2}{8} $ times.
	Therefore, the impact of quantizers in the privacy-preserving level is revealed as a multiplicative effect.
	And, by \cref{lemma:Lap}, in the Laplacian noise case, the introduction of the quantizers also improves the privacy-preserving capability of the algorithm, except for the case that $ \mathtt{x}_{i,k} = C $ for all $k$, which will not happen almost surely due to the randomness of $\mathtt{x}_{i,k}$. 
\end{remark}

\section{Convergence analysis}\label{sec:conv}

This section will focus on the convergence properties of \cref{algo:DE}. 
Firstly, the almost sure convergence will be proved. 
Then, the almost sure convergence rate will be obtained. 

For convenience, denote
\begin{align*}
	\tilde{\uptheta}_{i,k} &= \hat{\uptheta}_{i,k} - \theta,\ 
	\tilde{\Theta}_k = \col\{\tilde{\uptheta}_{1,k},\ldots,\tilde{\uptheta}_{N,k}\},\
	\bar{a}_{ij} = \sum_{r=1}^{M} \pi_{r} a_{ij}^{(r)}, \\ 
	\bar{\mathbb{H}} &= \diag\{ \bar{H}_{1}\tr \bar{H}_{1},\ldots, \bar{H}_{N}\tr \bar{H}_{N}\}, \
	\hat{\mathtt{F}}_{ij,k} = F_{ij,k} (C_{ij}-\mathtt{x}_{i,k}),\\
	\bar{\mathbb{H}}_{\beta,k} &= \diag\{\beta_{1,k} \bar{H}_{1}\tr \bar{H}_{1},\ldots, \beta_{N,k} \bar{H}_{N}\tr \bar{H}_{N}\}, \\
	\Phi_{i,k} & =   \varphi_{k}\sum_{j\in \mathcal{N}_i}\alpha_{ij,k} ( \mathtt{a}_{ij,k} - \bar{a}_{ij} )\left( \mathtt{s}_{ij,k} - \mathtt{s}_{ji,k} \right),\\
	\Phi^\prime_{i,k} & =  \varphi_{k}\sum_{j\in \mathcal{N}_i}\alpha_{ij,k} \bar{a}_{ij}\left( ( \mathtt{s}_{ij,k} - \mathtt{s}_{ji,k} ) - 2 ( \hat{\mathtt{F}}_{ij,k} - \hat{\mathtt{F}}_{ji,k} )\right),\\
	\mathtt{W}_k &= \col\{\beta_{1,k} \left(\mathtt{y}_{1,k}-\bar{H}_1 \theta\right), \ldots, \beta_{N,k} \left( \mathtt{y}_{N,k} - \bar{H}_N \theta \right)\},\\
	&\qquad\ + \col\{\Phi_{1,k},\ldots, \Phi_{N,k} \} + \col\{\Phi^\prime_{1,k},\ldots, \Phi^\prime_{N,k} \},\\
	\mathcal{F}_{k} &= \sigma(\{\mathtt{w}_{i,t},\mathtt{G}_t,\mathtt{H}_{i,t},\mathtt{d}_{ij,t}:i\in \mathcal{V},(i,j)\in\mathtt{E}_t,1\leq t \leq k\}) . 
\end{align*}
%
Then, $ \tilde{\Theta}_k $ is $ \mathcal{F}_{k} $-measurable. 

The following theorem proves the almost sure convergence of \cref{algo:DE}.

\begin{theorem}\label{thm:conv}
	Suppose Assumptions \ref{assum:connect}, \ref{assum:input}, \ref{assum:noise}, \ref{assum:privacy_noise} i), iii), iv) and \ref{assum:step} hold.   
	Then, the estimate $ \hat{\uptheta}_{i,k} $ in \cref{algo:DE} converges to the true value $ \theta $ almost surely. 
\end{theorem}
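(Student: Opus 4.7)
The plan is to prove almost sure convergence by a stochastic Lyapunov analysis closed out via the Robbins--Siegmund theorem, in the spirit of the framework of \cite{Ke2024SCDE} but adapted to Markovian switching, binary-valued compression, and the one-dimensional projection $\varphi_k$. I would begin by deriving a stacked error recursion: substituting $\mathtt{y}_{i,k}=\mathtt{H}_{i,k}\theta+\mathtt{w}_{i,k}$ into \cref{algo:DE} and subtracting $\theta$, the innovation step produces $(I-\bar{\mathbb{H}}_{\beta,k})\tilde{\Theta}_{k-1}$ plus an observation-noise martingale difference. For the fusion step I would use the decompositions $\mathtt{a}_{ij,k}=(\mathtt{a}_{ij,k}-\bar{a}_{ij})+\bar{a}_{ij}$ and $\mathtt{s}_{ij,k}-\mathtt{s}_{ji,k}=[(\mathtt{s}_{ij,k}-\mathtt{s}_{ji,k})-2(\hat{\mathtt{F}}_{ij,k}-\hat{\mathtt{F}}_{ji,k})]+2(\hat{\mathtt{F}}_{ij,k}-\hat{\mathtt{F}}_{ji,k})$, which isolates $\Phi_{i,k}$, $\Phi'_{i,k}$ and a deterministic drift. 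Since $C_{ij}=C_{ji}$ and $F_{ij,k}=F_{ji,k}$, the mean-value theorem rewrites this drift as $-2\varphi_k\sum_j\alpha_{ij,k}\bar{a}_{ij}f_{ij,k}(\xi_{ij,k})\varphi_k\tr(\tilde{\uptheta}_{i,k-1}-\tilde{\uptheta}_{j,k-1})$, making it linear in the consensus disagreements up to a random coefficient $f_{ij,k}(\xi_{ij,k})$.

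Next, taking $V_k=\Abs{\tilde{\Theta}_k}^2$, I would compute $\mE[V_k\mid\mathcal{F}_{k-1}]$ piece by piece. The observation noise is a conditional martingale difference with second moment $O(\beta_{i,k}^2)$ by \cref{assum:noise}; $\Phi'_{i,k}$ is also a conditional martingale difference, because $\mE\mathtt{s}_{ij,k}=2\hat{\mathtt{F}}_{ij,k}-1$, and its second moment is $O(\alpha_{ij,k}^2)$ since binary signals are bounded. Both contributions are summable by \cref{assum:step} i), ii). The topology fluctuation $\Phi_{i,k}$ is not exactly zero-mean given $\mathcal{F}_{k-1}$ alone, but its bias is controlled by the geometric ergodicity $q_{ij,k}=\sum_{u\in\mathbb{G}_{ij}}\pi_u+O(\lambda_p^k)$ recorded in \cref{ssec:graph}, which again yields a summable contribution. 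Collecting cross-terms, the useful negative drift has the form $-2\tilde{\Theta}_{k-1}\tr(\bar{\mathbb{H}}_{\beta,k}+2\mathbb{L}_{\alpha,\zeta,k})\tilde{\Theta}_{k-1}$ plus absorbable higher-order terms, where $\mathbb{L}_{\alpha,\zeta,k}$ is an effective weighted Laplacian whose edge weights are proportional to $\alpha_{ij,k}\bar{a}_{ij}f_{ij,k}(\xi_{ij,k})\varphi_k\varphi_k\tr$.

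The main obstacle is to control the random weight $f_{ij,k}(\xi_{ij,k})$ from below: to invoke \cref{assum:privacy_noise} iii) and conclude $f_{ij,k}(\xi_{ij,k})\geq c\zeta_{ij,k}$, we need $\xi_{ij,k}$ to stay on a uniformly bounded set. I would therefore first establish $\sup_k\Abs{\tilde{\Theta}_k}<\infty$ a.s., via a preliminary Robbins--Siegmund application in which the sign-ambiguous mean-value term is absorbed through Cauchy--Schwarz into the higher-order noise. With the localization in place, \cref{assum:connect} (union-graph connectivity), \cref{assum:input} (cooperative observability), and the cyclic structure of $\varphi_k$ with period $n$ give, after averaging each block of $n$ consecutive steps, a lower bound $\sum_{k=\ell n+1}^{(\ell+1)n}(\bar{\mathbb{H}}_{\beta,k}+2\mathbb{L}_{\alpha,\zeta,k})\geq c' z_{\ell n} I$ on the full stacked state. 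A second Robbins--Siegmund application then yields $V_k\to V_\infty$ a.s.\ together with $\sum_k z_k V_{k-1}<\infty$ a.s.; since $\sum_k z_k=\infty$ by \cref{assum:step} iii), the only admissible value is $V_\infty=0$, proving $\hat{\uptheta}_{i,k}\to\theta$ almost surely for every $i\in\mathcal{V}$.
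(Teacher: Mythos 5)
Your overall architecture matches the paper's: first get almost sure boundedness of $\tilde\Theta_k$ via Robbins--Siegmund, then use the Lagrange mean value theorem together with \cref{assum:privacy_noise} iii) to lower-bound the random edge weights by $\underline{\mathtt{f}}\zeta_{ij,k}$, then obtain persistent excitation over windows of $n$ steps from \cref{assum:connect} and \cref{assum:input} (this is exactly the paper's use of Lemma 5.4 of \cite{xie2018analysis} to get $n\bar{\mathbb{H}}+\underline{\mathtt{f}}\bar{\mL}\otimes I_n>0$), and finally conclude via a stochastic-approximation argument. However, your preliminary boundedness step, as stated, would fail. You call the fusion drift term ``sign-ambiguous'' and propose to absorb it through Cauchy--Schwarz into higher-order noise. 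That term is of \emph{first} order in $\alpha_{ij,k}$: bounding it in absolute value gives a multiplicative perturbation of size $O(\alpha_{ij,k})\Abs{\tilde\Theta_{k-1}}^2$, and $\sum_k\alpha_{ij,k}$ is not summable under \cref{assum:step} (only $\sum_k\alpha_{ij,k}^2<\infty$ is assumed, and in fact $\sum_k z_k=\infty$ with $z_k\le\alpha_{ij,k}\zeta_{ij,k}$ forces $\sum_k\alpha_{ij,k}=\infty$ in all relevant cases). Robbins--Siegmund cannot absorb a non-summable coefficient, so no Cauchy--Schwarz splitting rescues this; the boundedness would not follow.

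The missing observation — which is precisely the crux of the paper's first step — is that this term is \emph{not} sign-ambiguous once you sum over sensors. Using the symmetry $\alpha_{ij,k}=\alpha_{ji,k}$, $\bar a_{ij}=\bar a_{ji}$, $C_{ij}=C_{ji}$, $F_{ij,k}=F_{ji,k}$, the collected drift equals $2\sum_{(i,j)\in\mathcal{E}}\alpha_{ij,k}\bar a_{ij}(\mathtt{x}_{i,k}-\mathtt{x}_{j,k})\bigl(F_{ij,k}(C_{ij}-\mathtt{x}_{i,k})-F_{ij,k}(C_{ij}-\mathtt{x}_{j,k})\bigr)\le 0$ by monotonicity of the c.d.f. (equivalently, in your own mean-value form, the weights $\alpha_{ij,k}\bar a_{ij}f_{ij,k}(\upxi_{ij,k})$ are nonnegative because $f_{ij,k}$ is a density, so the Laplacian quadratic form is nonnegative \emph{without} any lower bound on $f$). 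Hence for the boundedness stage the drift can simply be dropped, leaving only summable perturbations, which is how the paper closes the first Robbins--Siegmund application; the density lower bound is needed only afterwards, exactly as you intend. A secondary gloss: from the $n$-block excitation bound you cannot directly read off $\sum_k z_k V_{k-1}<\infty$ from a per-step Robbins--Siegmund inequality, since the single-step drift $\bar{\mathbb{H}}_{\beta,k}+\mathtt{L}_{F,k}\otimes\varphi_k\varphi_k^\top$ is only positive semidefinite; one must compound the error recursion over the $n$-step windows and control the within-window cross terms, which is what the paper's \cref{coro:conv} (built on \cref{lemma:conv}) does.
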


\begin{proof}
	By Theorem 1.2 of \cite{Seneta}, there exists $ \lambda_a \in (0,1) $ such that $ \mE \mathtt{a}_{ij,k} = \bar{a}_{ij} + O\left( \lambda_a^k \right) $.
	Then, by Assumptions \ref{assum:noise} and \ref{assum:privacy_noise} iv), we have $ \mE\left[ \mathtt{a}_{ij,k} \mathtt{s}_{ij,k} \middle| \mathcal{F}_{k-1} \right] = \bar{a}_{ij} F(C_{ij}-\mathtt{x}_{i,k}) + O\left( \lambda_a^k \right) $. Therefore, one can get 
	\begin{align*}
		& \mE\left[ \Absl{\tilde{\uptheta}_{i,k}}^2 \middle| \mathcal{F}_{k-1} \right] \\
		= & \Absl{\tilde{\uptheta}_{i,k-1}}^2 - 2 \beta_{i,k} \left( \bar{H}_i \tilde{\uptheta}_{i,k} \right)^2 \\
		& + 2 \varphi_k\tr \tilde{\uptheta}_{i,k-1} \sum_{j\in\Neighbour_{i}} \alpha_{ij,k} \bar{a}_{ij} \left(\hat{\mathtt{F}}_{ij,k} -\hat{\mathtt{F}}_{ji,k}\right) \\
		& + O\left( \beta_{i,k}^2 \left( \Absl{\tilde{\uptheta}_{i,k-1}}^2 + 1 \right) + \sum_{j\in\Neighbour_{i}} \alpha_{ij,k}^2 + \lambda_a^k \right). 
	\end{align*}
	Define $ \tilde{\mathtt{x}}_{i,k} = \varphi_k\tr \tilde{\uptheta}_{i,k-1} $. 
	By $ \mathtt{x}_{i,k} = \varphi_k\tr \hat{\uptheta}_{i,k-1} = \tilde{\mathtt{x}}_{i,k} + \varphi_k\tr \theta $, we have $ \mathtt{x}_{i,k} - \mathtt{x}_{j,k} = \tilde{\mathtt{x}}_{i,k} - \tilde{\mathtt{x}}_{j,k} $.  Then, 
	\begin{align*}
		& \sum_{i\in\mathcal{V}} \varphi_k\tr \tilde{\uptheta}_{i,k-1} \sum_{j\in\Neighbour_{i}} \alpha_{ij,k} \bar{a}_{ij} \left(\hat{\mathtt{F}}_{ij,k} -\hat{\mathtt{F}}_{ji,k}\right) \\
		= & 2 \sum_{(i,j)\in\mathcal{E}} \alpha_{ij,k} \bar{a}_{ij} \left( \mathtt{x}_{i,k} - \mathtt{x}_{j,k} \right) \left(\hat{\mathtt{F}}_{ij,k} -\hat{\mathtt{F}}_{ji,k}\right) \leq 0,  
	\end{align*}
	which implies 
	\begin{align*}
		& \mE\left[ \sum_{i\in\mathcal{V}} \Absl{\tilde{\uptheta}_{i,k}}^2 \middle| \mathcal{F}_{k-1} \right]
		\leq \sum_{i\in\mathcal{V}} \Absl{\tilde{\uptheta}_{i,k-1}}^2 \\
		& + O\left( \sum_{i\in\mathcal{V}} \beta_{i,k}^2 \left( \Absl{\tilde{\uptheta}_{i,k-1}}^2 + 1 \right) + \sum_{(i,j)\in\mathcal{E}} \alpha_{ij,k}^2 + \lambda_a^k \right).
	\end{align*}
	Hence, by Theorem 1 of \cite{robbins1971convergence}, $ \sum_{i\in\mathcal{V}} \Absl{\tilde{\uptheta}_{i,k}}^2 $ converges to a finite value almost surely. Therefore, $ \tilde{\uptheta}_{i,k} $, $ \hat{\uptheta}_{i,k} $, and $ \mathtt{x}_{i,k} $ are all bounded almost surely. 
	
	By the Lagrange mean value theorem \cite{zorich}, there exists $ \upxi_{ij,k} $ between $ C_{ij} - \mathtt{x}_{i,k} $ and $ C_{ij} - \mathtt{x}_{j,k} $ such that
	\begin{align*}
		\hat{\mathtt{F}}_{ij,k} - \hat{\mathtt{F}}_{ji,k} 
		= & f_{ij,k}(\upxi_{ij,k}) \left( \mathtt{x}_{j,k} - \mathtt{x}_{i,k} \right) \\
		= & f_{ij,k}(\upxi_{ij,k}) \left( \tilde{\mathtt{x}}_{j,k} - \tilde{\mathtt{x}}_{i,k} \right). 
	\end{align*}
	For convenience, set $ \check{\mathtt{f}}_{ij,k} = f_{ij,k}(\upxi_{ij,k}) $. By the almost sure boundedness of $ \mathtt{x}_{i,k} $ and \cref{assum:privacy_noise}, there exists $ \underline{\mathtt{f}} > 0 $ such that $ \check{\mathtt{f}}_{ij,k} \geq \udf \zeta_{ij,k} $ almost surely. 
	
	Define $ \mathtt{L}_{F,k} $ as a Laplacian matrix whose element in the $ i $-th row and $ j $-th column is $ - \alpha_{ij,k} \bar{a}_{ij} \check{\mathtt{f}}_{ij,k} $ if $ i \neq j $, and $ \sum_{l\in\Neighbour_{i}} \alpha_{1j,k}\bar{a}_{il} \check{\mathtt{f}}_{il,k} $ if $ i = j $. Then, 
	\begin{align}\label{eq:recur_Theta1}
		\tilde{\Theta}_k = \left( I_{N\times n} - \mathbb{H}_{\beta,k} - \mathtt{L}_{F,k}\otimes \varphi_k\varphi_k\tr \right)  \tilde{\Theta}_{k-1} + \mathtt{W}_k, 
	\end{align}
	and $ \mathtt{L}_{F,k} \geq z_k \udf \bar{\mL} $, where $ z_k $ is given in \cref{assum:step}  and $ \bar{\mL} = \sum_{r=1}^{M} \pi_{r} \mL^{(r)} $. 
	In addition, by Lemma 5.4 in \cite{xie2018analysis}, one can get
	\begin{align}\label{ineq:H+L}
		& \sum_{t=k-n+1}^{k} \frac{1}{z_t} \left(\bar{\mathbb{H}}_{\beta,t} + \mathtt{L}_{F,t}\otimes \varphi_t\varphi_t\tr\right)  \nonumber\\
		\geq & \sum_{t=k-n+1}^{k} \left( \bar{\mathbb{H}} + \udf \bar{\mL} \otimes \varphi_t \varphi_t\tr \right)
		\geq n \bar{\mathbb{H}} + \udf \bar{\mL} \otimes I_n > 0. 
	\end{align}
	Hence, by \cref{coro:conv} in \cref{appen}, $ \tilde{\Theta}_k $ and then $ \tilde{\uptheta}_{i,k} $ converge to $ 0 $ almost surely. 
\end{proof}

\begin{remark}
	Note that in \cref{algo:DE}, each sensor transmits 1 bit of information to its neighbours at each time step, and as analyzed in \cref{prop:assum4}, the privacy noises are allowed to be increasing.  
	Then, by \cref{thm:conv}, the estimates of \cref{algo:DE} can converge to the true value $ \theta $ even under 1 communication data rate and increasing privacy noises, which is the first to be achieved among existing privacy-preserving distributed algorithms \cite{HuGQ2021Nash,Gratton2021TIFS,WangYQ2024Tailoring}. 
\end{remark}

\begin{remark}
	In \cref{assum:privacy_noise}, the privacy noise can be heavy-tailed. Therefore, the results in \cref{thm:conv} can also be applied to the heavy-tailed communication noise case \cite{Vukovic2024heavy,Jakovetic2023heavy}.
	For \cref{algo:DE}, the key to achieving convergence with heavy-tailed noises lies in the binary-valued quantizer, which transmits noisy signals with probably infinite variances to binary-valued signals with uniformly bounded variances. 
\end{remark}

Then, the following theorem calculates the almost sure convergence rate of \cref{algo:DE}. 

\begin{theorem}\label{thm:conv_rate}
	Suppose Assumptions \ref{assum:connect}-\ref{assum:privacy_noise} hold, $ \rho >4 $ and the distribution of privacy noise $ \mathtt{d}_{ij,k} $ is $ \mathcal{N}(0,\sigma_{ij,k}^2) $ \MODIFY{(resp., $ \text{\textit{Lap}}(0,b_{ij,k}) $, $ \text{\textit{Cauchy}}(0,r_{ij,k}) $) with $ \sigma_{ij,k} = \sigma_{ij,1} k^{\epsilon_{ij}} $ (resp., $ b_{ij,k} = b_{ij,1} k^{\epsilon_{ij}} $, $ r_{ij,k} = r_{ij,1} k^{\epsilon_{ij}} $) and $ \sigma_{ij,1} = \sigma_{ji,1} > 0 $ (resp., $ b_{ij,1} = b_{ji,1} > 0 $, $ r_{ij,1} = r_{ji,1} > 0 $)}.  Given $ k_{i,0} $, set $ \alpha_{ij,k} = \frac{\alpha_{ij,1}}{k^{\gamma_{ij}}} $, $ \beta_{i,k} = \frac{\beta_{i,1}}{k^{\delta_{i}}} $ if $ k \geq k_{i,0} $; and $ 0 $, otherwise, where
	\begin{enumerate}[label={\roman*)},leftmargin=1.4em]
		\item $ \alpha_{ij,1} = \alpha_{ji,1} > 0 $, $ \gamma_{ij} = \gamma_{ji} > \frac{1}{2} $ and $ \epsilon_{ij} = \epsilon_{ji} \geq 0 $ for all $ (i,j)\in\mathcal{E} $, and $ \beta_{i,1} > 0 $ for all $ i \in \mathcal{V} $;
		\item $ \max_{(i,j)\in\mathcal{E}} \gamma_{ij} + \epsilon_{ij} < \min_{i\in\mathcal{V}}\delta_{i} \leq \max_{i\in\mathcal{V}}\delta_{i} \leq 1 $. 
	\end{enumerate}
	Then, the almost sure convergence rate of the estimation error for the sensor $ i $ is
	\begin{align*}
		\tilde{\uptheta}_{i,k} = 
		\begin{cases}
			O\left( \left. 1 \middle/ k^{\frac{\lambda_H\min_{i\in\mathcal{V}}\beta_{i,1}}{N}} \right. \right), \\
			\qquad\qquad \text{if}\ \bar{b} = 1,\ 2\underline{b}-\frac{2\lambda_H\min_{i\in\mathcal{V}} \beta_{i,1}}{N}>1;\\
			O\left( \FRAC{\ln k}{k^{\underline{b}-1/2}} \right), \\ \qquad\qquad\text{if}\ \bar{b} = 1,\ 2\underline{b}-\frac{2\lambda_H\min_{i\in\mathcal{V}} \beta_{i,1}}{N}\leq1;\\
			O\left( \left. 1 \middle/ k^{\underline{b}-\bar{b}/2 } \right. \right), \\
			\qquad\qquad\text{if}\ \bar{b} < 1,
		\end{cases}\ \as,
	\end{align*}
	where $ \lambda_H = \lambda_{\min} \left(\sum_{i=1}^{N} \bar{H}_i\tr \bar{H}_i \right) $, $ \underline{b} = \min_{(i,j)\in\mathcal{E}}\gamma_{ij} $ and  $ \bar{b} = \max_{i\in\mathcal{V}} \delta_i $. 
\end{theorem}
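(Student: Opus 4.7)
The plan is to build on the recursion \eqref{eq:recur_Theta1} from the proof of Theorem \ref{thm:conv} and turn it into an almost sure convergence rate through a two-scale Lyapunov analysis with timescale separation. By Theorem \ref{thm:conv}, $\hat{\uptheta}_{i,k}$ is almost surely bounded, so $\check{\mathtt{f}}_{ij,k}\geq\udf\zeta_{ij,k}$ along every sample path, with $\zeta_{ij,k}$ of polynomial order $1/k^{\epsilon_{ij}}$ for all three noise families. Under the hypothesized polynomial step-sizes, $z_k\sim 1/k^{\bar b}$ and the consensus matrix $\mathtt{L}_{F,k}\otimes\varphi_k\varphi_k\tr$ has entries of order $1/k^{\gamma_{ij}+\epsilon_{ij}}$, strictly slower than $\beta_{i,k}\sim 1/k^{\delta_i}$ by the hypothesis $\max_{(i,j)}(\gamma_{ij}+\epsilon_{ij})<\min_i\delta_i$. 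Decomposing $\tilde{\Theta}_k=\Pi\tilde{\Theta}_k+(I-\Pi)\tilde{\Theta}_k$ with averaging projector $\Pi=N^{-1}\Bone_N\Bone_N\tr\otimes I_n$ and exploiting antisymmetry of $\mathtt{s}_{ij,k}-\mathtt{s}_{ji,k}$ in $(i,j)$ (which places the quantized consensus term in $\ker\Pi$), I obtain separate recursions in which the disagreement is contracted by the faster consensus rate while the averaged error $\Pi\tilde{\Theta}_k$ is contracted by the effective Hessian $\Pi\bar{\mathbb{H}}_{\beta,k}\Pi$ with rate $\beta_k\lambda_H/N$ and is driven only by observation-noise martingale increments.

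Taking conditional expectations and invoking the window-wise positivity \eqref{ineq:H+L} over windows of length $n$, I would derive coupled mean-square recursions of the form
\begin{align*}
\mE[\Absl{\Pi\tilde\Theta_k}^2 \mid \mathcal{F}_{k-n}] \leq (1-c_1\beta_k)\Absl{\Pi\tilde\Theta_{k-n}}^2 + C/k^{2\underline b}
\end{align*}
together with an analogous but strictly faster-decaying recursion for the disagreement. The disagreement is of higher order than the claimed overall rate and can be absorbed into the noise driving the averaged dynamics. A Chung-type lemma applied to the averaged recursion then yields: when $\bar b<1$, $\mE\Absl{\Pi\tilde\Theta_k}^2=O(1/k^{2\underline b-\bar b})$, and the moment assumption $\rho>4$ combined with a Borel--Cantelli upgrade produces the almost sure rate $O(1/k^{\underline b-\bar b/2})$; when $\bar b=1$, the critical Chung lemma distinguishes sub-regimes by comparing the geometric contraction exponent $2\lambda_H\min_i\beta_{i,1}/N$ with the noise exponent $2\underline b-1$, producing the stated trichotomy, with the logarithmic factor appearing at or past the borderline and preserved under the almost sure upgrade.

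The main obstacle is pinning down the sharp effective contraction constant $\lambda_H\min_i\beta_{i,1}/N$ in the critical regime $\bar b=1$ even though the $\beta_{i,k}$ differ across sensors. Extracting it requires a lower bound on the smallest eigenvalue of $\Pi\bar{\mathbb{H}}_{\beta,k}\Pi$ by $(\min_i\beta_{i,k})\lambda_H/N$, together with showing that the cross term $\Pi\bar{\mathbb{H}}_{\beta,k}(I-\Pi)\tilde\Theta_{k-1}$ is of smaller order than the observation-noise martingale increment thanks to the strict timescale gap $\delta_i>\gamma_{ij}+\epsilon_{ij}$. Handling this coupling carefully, window by window, and upgrading from mean-square to almost sure convergence without losing the sharp constants in the critical regime, is the technical heart of the argument.
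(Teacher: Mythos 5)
You take a genuinely different route from the paper. The paper never splits the error into consensus and disagreement components: it keeps the full recursion \eqref{eq:recur_Theta1}, feeds it into the generic stochastic-approximation rate results of \cref{appen} (\cref{lemma:conv_rate} and \cref{coro:rate}) using the windowed excitation bound \eqref{ineq:H+L} with scale $z_k\sim k^{-\bar b}$, and in the critical case $\bar b=1$ obtains the sharp constant $\lambda_H\min_i\beta_{i,1}/N$ from Lemma 5.4 of \cite{xie2018analysis} applied to $\frac1n\sum_{t=k-n+1}^{k}t\bigl(\bar{\mathbb{H}}_{\beta,t}+\mathtt{L}_{F,t}\otimes\varphi_t\varphi_t\tr\bigr)$. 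Your projection decomposition is sound in outline: since $\alpha_{ij,k}\mathtt{a}_{ij,k}$ is symmetric and $\mathtt{s}_{ij,k}-\mathtt{s}_{ji,k}$, $\hat{\mathtt{F}}_{ij,k}-\hat{\mathtt{F}}_{ji,k}$ are antisymmetric, $\Pi$ indeed annihilates both the Laplacian drift and the quantization-noise terms $\Phi_{i,k},\Phi^\prime_{i,k}$, and $\Pi\bar{\mathbb{H}}_{\beta,k}\Pi$ restricted to the consensus subspace is bounded below by $\frac{\min_i\beta_{i,k}}{N}\lambda_H I_n$ every step (no windowing needed there; windowing is only needed for the disagreement, where $\varphi_k\varphi_k\tr$ excites one coordinate at a time). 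If carried out, your two-timescale argument would even be slightly sharper than the paper's, because the disagreement is contracted at scale $k^{-\max_{(i,j)}(\gamma_{ij}+\epsilon_{ij})}$ rather than the paper's conservative $z_k\sim k^{-\bar b}$, and one can check that in every regime the resulting maximum of the two component rates is dominated by the rates stated in the theorem.

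Two steps, however, are not correct as written and need repair. First, the disagreement-to-average coupling $\Pi\bar{\mathbb{H}}_{\beta,k}(I-\Pi)\tilde\Theta_{k-1}$ is $\mathcal{F}_{k-1}$-measurable, so it cannot be ``absorbed into the noise'' as if it were a martingale increment: a predictable forcing of size $\beta_k\Absl{(I-\Pi)\tilde\Theta_{k-1}}$ acting against a contraction of the same order $\beta_k$ produces a response of the order of the disagreement itself, not of order $\beta_k$ times it. You must therefore first establish the disagreement rate (with its own windowed recursion and the averaged-to-disagreement coupling handled, which the gap $\gamma_{ij}+\epsilon_{ij}<\delta_i$ does allow), and then feed it into the averaged recursion via a Young-type splitting; only after that does the additive term $C/k^{2\underline b}$ in your displayed recursion become legitimate. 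Second, the almost-sure upgrade ``mean-square rate plus Borel--Cantelli'' does not preserve the exponent: Chebyshev plus Borel--Cantelli applied to $\mE\Absl{\Pi\tilde\Theta_k}^2=O(k^{-(2\underline b-\bar b)})$ only yields an a.s.\ rate degraded by roughly $k^{1/2}$. To get the stated exponents one needs the weighted almost-supermartingale/weighted-martingale-LLN treatment (Robbins--Siegmund together with Lemma 2 of \cite{wei1985asymptotic}, which is precisely where $\rho>4$ enters in the paper's \cref{lemma:conv_rate}), i.e., analyze $k^{2b-c}\Absl{\cdot}^2$ directly rather than upgrade a moment bound a posteriori. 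With these two repairs your decomposition would constitute a valid, and arguably more informative, alternative proof; as proposed, the critical-regime trichotomy and the a.s.\ exponents are not yet secured.
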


\begin{proof}
	By \cref{lemma:zeta}, $ \zeta_{ij,k} $ in \cref{assum:privacy_noise} can be $ \frac{1}{k^{\epsilon_{ij}}} $. In this case, $ z_k $ in \cref{assum:step} is $ \min\left\{ \frac{\alpha_{ij,1}}{k^{\gamma_{ij}+\epsilon_{ij}}} : (i,j)\in\mathcal{E}\right\}\cup\left\{\frac{\beta_{i,1}}{k^{\delta_{i}}}: i\in\mathcal{V} \right\} $. 
	
	If $ \bar{b} < 1 $, then the theorem can be proved by \eqref{eq:recur_Theta1}, \eqref{ineq:H+L} and \cref{coro:rate} in \cref{appen}. 
	
	If $ \bar{b} = 1 $, then $ k\sum_{i=1}^{N} \beta_{i,k} \bar{H}_i\tr \bar{H}_i \geq \lambda_H\min_{i\in\mathcal{V}}\beta_{i,1} $. Hence, by Lemma 5.4 of \cite{xie2018analysis}, 
	\begin{align*}
		& \frac{1}{n}\sum_{t=k-n+1}^{k} t \left(\bar{\mathbb{H}}_{\beta,t} + \mathtt{L}_{F,t}\otimes \varphi_t\varphi_t\tr\right) \\
		\geq & \frac{\lambda_H\min_{i\in\mathcal{V}}\beta_{i,1}}{N} I_{nN} + O\left( \frac{1}{k^{\tau}} \right)
	\end{align*}
	for some $ \tau > 0 $, which together with \eqref{eq:recur_Theta1} and \cref{coro:rate} implies the theorem. 
\end{proof}

\begin{remark}
	For all $ \upsilon\in(0,\frac{1}{2}) $, when $ \delta_i = 1 $, $ \gamma_{ij} > \upsilon + \frac{1}{2} $ and $ \beta_{i,1} $ is sufficiently large, by \cref{thm:conv_rate}, \cref{algo:DE} can achieve an almost sure convergence rate of $ o(1/k^\upsilon) $. The convergence rate is consistent with the classical one \cite{Kar2011rate} of distributed estimation without considering the quantized communications and privacy issues. 
\end{remark}

\begin{remark}
	\MODIFYYY{\cref{thm:conv_rate} i) and ii) provide a design guideline for polynomial step-sizes and noise levels that satisfy \cref{assum:step}. In particular, a larger noise restricts the admissible choice of $\alpha_{ij,k}$, which in turn slows down the convergence rate of the algorithm, and vice versa. Additionally, }
	by \cref{thm:conv_rate,thm:privacy/finite}, when $ \delta_i = 1 $, the best privacy level and convergence rate will be achieved simultaneously. 
\end{remark}



%
%

\section{Trade-off between privacy and convergence rate}\label{sec:tradeoff}

Based on the privacy and convergence analysis in Theorems \ref{thm:privacy/finite}-\ref{thm:conv_rate}, this section will establish the trade-off between the privacy level and the convergence rate of \cref{algo:DE}.

\begin{theorem}\label{thm:tradeoff/independent}
	Suppose Assumptions \ref{assum:connect}-\ref{assum:step} hold. Then, given $ \nu \in (\frac{1}{2},1) $, there exist step-size sequences $ \{\alpha_{ij,k}:(i,j)\in\mathtt{E}_k,k\in\mathbb{N}\} $, $ \{\beta_{i,k}:i\in\mathcal{V},k\in\mathbb{N}\} $ and the privacy noise distribution sequence $ \{F_{ij,k}(\cdot):(i,j)\in\mathtt{E}_k,k\in\mathbb{N}\} $ such that $ \mE\mathcal{I}_{\mathtt{S}} (\mathtt{y}_{i,k}) = O\left( \frac{1}{k^{\chi}} \right) $ and $ \tilde{\uptheta}_{i,k} = O\left( \frac{1}{k^{\nu-\chi/2}} \right) $ almost surely for all $ i\in\mathcal{V} $ and $ \chi \in [1,2\nu) $. 
\end{theorem}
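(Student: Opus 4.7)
The plan is to explicitly construct compatible step-size and privacy-noise sequences so that the privacy bound of \cref{thm:privacy/finite} and the convergence rate of \cref{thm:conv_rate} simultaneously land on the prescribed target rates $O(1/k^\chi)$ and $O(1/k^{\nu-\chi/2})$. Because both theorems already give sharp polynomial rates as a function of the design exponents $(\delta_i,\epsilon_{ij},\gamma_{ij})$, the whole proof reduces to exhibiting a feasible choice of these exponents and then invoking the two theorems as black boxes.

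For privacy, I would set $\delta_i=1$ for every sensor and take Gaussian privacy noises with $\sigma_{ij,k}=\sigma_{ij,1}k^{\epsilon_{ij}}$ where $\epsilon_{ij}=(\chi-1)/2\in[0,\nu-1/2)$, which are admissible since $\chi\in[1,2\nu)$. By \cref{prop:assum3,prop:assum4} in the appendix, \cref{assum:privacy_noise,assum:step} are satisfied, and the rate statement \eqref{concl:privacy_rate} of \cref{thm:privacy/finite} gives $\mE\mathcal{I}_\mathtt{S}(\mathtt{y}_{i,k})=O(1/k^{\delta_i+2\epsilon_{ij}})=O(1/k^{\chi})$. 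I would additionally take $\beta_{i,1}$ large enough so that $2\lambda_{\min}^+(Q_i)\beta_{i,1}+2\epsilon_{ij}>1$ to meet hypothesis vi) of \cref{thm:privacy/finite}.

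For the convergence rate, I would enter the $\bar b=1$ branch of \cref{thm:conv_rate}. Pick $\beta_{i,1}$ with $\lambda_H\min_i\beta_{i,1}/N\geq \nu-\chi/2$ (compatible with the previous lower bound on $\beta_{i,1}$ since $\nu-\chi/2<1/2$), and choose $\gamma_{ij}$ so that $\underline b=\min_{(i,j)\in\mathcal{E}}\gamma_{ij}>1/2+\lambda_H\min_i\beta_{i,1}/N$, landing in the first sub-case of \cref{thm:conv_rate}. This yields $\tilde{\uptheta}_{i,k}=O(1/k^{\lambda_H\min_i\beta_{i,1}/N})=O(1/k^{\nu-\chi/2})$ almost surely for every sensor $i$.

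The main obstacle is verifying that these privacy and convergence requirements are jointly feasible, i.e., the binding inequality \cref{thm:conv_rate} ii): $\max_{(i,j)}(\gamma_{ij}+\epsilon_{ij})<\min_i\delta_i=1$. Combined with the lower bound $\gamma_{ij}>1/2+\nu-\chi/2$ and the equality $\epsilon_{ij}=(\chi-1)/2$, this reduces to $\tfrac{1}{2}+\nu-\tfrac{\chi}{2}+\tfrac{\chi-1}{2}<1$, i.e., $\nu<1$, which is part of the standing hypothesis. The hypothesis $\chi<2\nu$ guarantees the target convergence exponent $\nu-\chi/2$ is strictly positive, while $\chi\geq 1$ ensures a nontrivial polynomial privacy decay; these two inequalities are precisely what keeps the constraints on $\gamma_{ij}$ and $\epsilon_{ij}$ jointly feasible, delivering the claimed trade-off.
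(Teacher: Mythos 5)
Your skeleton is the same as the paper's (Gaussian noises with $\epsilon_{ij}=\tfrac{\chi-1}{2}$, $\delta_i=1$, then \cref{thm:privacy/finite} and \cref{thm:conv_rate} as black boxes), and the privacy half is fine. The gap is in how you secure the convergence exponent: your constraints on $(\beta_{i,1},\gamma_{ij})$ are not jointly feasible in general. To land in the first sub-case of \cref{thm:conv_rate} you require $\underline{b}=\min_{(i,j)}\gamma_{ij}>\tfrac12+\tfrac{\lambda_H\min_i\beta_{i,1}}{N}$, while condition ii) of \cref{thm:conv_rate} caps $\gamma_{ij}<1-\epsilon_{ij}=\tfrac{3-\chi}{2}$, so your route needs $\tfrac{\lambda_H\min_i\beta_{i,1}}{N}<1-\tfrac{\chi}{2}$. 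But condition vi) of \cref{thm:privacy/finite}, which you also impose, forces $\beta_{i,1}>\tfrac{2-\chi}{2\lambda_{\min}^+(Q_i)}$ for every $i$, hence $\tfrac{\lambda_H\min_i\beta_{i,1}}{N}>\tfrac{\lambda_H(2-\chi)}{2N\max_i\lambda_{\min}^+(Q_i)}$. Whenever $\lambda_H\geq N\max_i\lambda_{\min}^+(Q_i)$ --- e.g.\ all sensors identical with $Q_i=Q\succ 0$, where $\lambda_H=N\lambda_{\min}^+(Q)$, or $Q_1=\diag\{1,2\}$, $Q_2=\diag\{2,1\}$, where $\lambda_H=3>2$ --- the two requirements contradict each other and the admissible interval for $\gamma_{ij}$ is empty. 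Your closing feasibility check hides this: you verify $\tfrac12+\nu-\tfrac{\chi}{2}+\tfrac{\chi-1}{2}<1$, i.e.\ you replace the true lower bound $\tfrac12+\tfrac{\lambda_H\min_i\beta_{i,1}}{N}$ on $\gamma_{ij}$ by $\tfrac12+\nu-\tfrac{\chi}{2}$, tacitly assuming $\tfrac{\lambda_H\min_i\beta_{i,1}}{N}$ can be brought down to $\nu-\tfrac{\chi}{2}$; the privacy-driven lower bound on $\beta_{i,1}$ forbids this in the regime above. (Your parenthetical ``compatible since $\nu-\chi/2<1/2$'' only compares two lower bounds on $\beta_{i,1}$ and does not touch this hidden upper-bound requirement.)

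The repair is to aim at the second sub-case of \cref{thm:conv_rate}, which is in effect what the paper does: keep $\delta_i=1$ and $\epsilon_{ij}=\tfrac{\chi-1}{2}$, choose $\gamma_{ij}$ with $\gamma_{ij}-\tfrac12>\nu-\tfrac{\chi}{2}$ and $\gamma_{ij}+\epsilon_{ij}<1$ (e.g.\ $\gamma_{ij}=\tfrac{2+\nu-\chi}{2}$, feasible precisely because $\nu<1$ and $\chi<1+\nu$), and take $\beta_{i,1}$ \emph{large} --- enlarging $k_{i,0}$ accordingly so that $\beta_{i,1}<k_{i,0}^{\delta_i}$ and $\beta_{i,k}\lambda_{\max}(Q_i)<1$ --- so that $\tfrac{\lambda_H\min_i\beta_{i,1}}{N}\geq\underline{b}-\tfrac12$. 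Large $\beta_{i,1}$ is compatible with every hypothesis, since all remaining conditions on $\beta_{i,1}$ are lower bounds. Then \cref{thm:conv_rate} gives $\tilde{\uptheta}_{i,k}=O\left(\ln k/k^{\underline{b}-1/2}\right)=O\left(\ln k/k^{(1+\nu-\chi)/2}\right)=O\left(1/k^{\nu-\chi/2}\right)$ almost surely, the logarithm being absorbed because $\tfrac{1+\nu-\chi}{2}-\left(\nu-\tfrac{\chi}{2}\right)=\tfrac{1-\nu}{2}>0$; the privacy bound $O(1/k^{\chi})$ is unaffected. With this modification your argument matches the paper's proof.
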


\begin{proof}
	Consider the privacy noises obeying the Gaussian distribution $ \mathcal{N}(0,\sigma_{ij,k}^2) $ with $ \sigma_{ij,k} = \sigma_{ij,1} k^{\epsilon_{ij}} $, $ \sigma_{ij,1} = \sigma_{ji,1} > 0 $ and $ \epsilon_{ij} = \epsilon_{ji} \geq 0 $ as \cref{thm:conv_rate} and \cref{prop:assum3} . 
	
	Set $ k_{i,0} = \exp\left( \left\lfloor \frac{1}{\delta_{i}} \ln \beta_{i,1} \right\rfloor +1 \right) $, $ \delta_i = 1 $, $ \epsilon_{ij} = \frac{\chi-1}{2} $, $ \gamma_{ij} = \frac{2+\nu-\chi}{2} $, and $ \beta_{i,1} $ be any number bigger than $ \frac{2-\chi}{2\lambda_{\min}^+(Q_i)} $, where $ \lfloor \cdot \rfloor $ is the floor function.  
	The step-size $ \alpha_{ij,k} = \frac{\alpha_{ij,1}}{k^{\gamma_{ij}}} $, $ \beta_{i,k} = \frac{\beta_{i,1}}{k^{\delta_{i}}} $ if $ k \geq k_{i,0} $; and $ 0 $, otherwise. 	
	Then, the step-size conditions in \cref{thm:privacy/finite,thm:conv_rate} are achieved simultaneously. 
	By \cref{thm:privacy/finite}, $ \mE\mathcal{I}_{\mathtt{S}} (\mathtt{y}_{i,k}) = O\left( \frac{1}{k^{\delta_i+2\epsilon_{ij}}} \right) = O\left( \frac{1}{k^{\chi}} \right) $.
	By \cref{thm:conv_rate}, $ \tilde{\uptheta}_{i,k} = O\left( \ln k/k^{(1+\nu-\chi)/2} \right) =  O\left( \frac{1}{k^{\nu-\chi/2}} \right) $ almost surely. The theorem is proved. 
\end{proof}

\begin{remark}
	\MODIFYYY{In \cref{thm:tradeoff/independent}, $\chi$ decides the convergence rate of  $\mE\mathcal{I}_{\mathtt{S}} (\mathtt{y}_{i,k})$, and hence, can be regarded as the coefficient representing the desired privacy level. $\nu$ can be any scalar smaller than but very close to 1.} By \cref{thm:tradeoff/independent}, better privacy implies a slower convergence rate, and vice versa. 
	\MODIFYYY{The proof of \cref{thm:tradeoff/independent} provides a practical selection for privacy noises and step-sizes to achieve the trade-off under given $\chi$,  where the privacy noises are set to be increasing at a rate of $ k^{\frac{\chi-1}{2}} $, and the step-sizes $\alpha_{ij,k} $ and $\beta_{i,k}$ are set to be convergent at rates of $\frac{1}{k^{(2+\nu-\chi)/2}}$ and $\frac{1}{k}$, respectively.} 		
\end{remark}


\section{Simulations}\label{sec:simu}

This section will demonstrate the main results of the paper by simulation examples. 

\subsection{Numerical examples}

Consider an $ 8 $ sensor system. The communication graph sequence $ \{\mathtt{G}_k:k\in\mathbb{N}\} $ is switching among $ \mG^{(1)},\ \mG^{(2)},\ \mG^{(3)} $ and $ \mG^{(4)} $ as shown in \cref{fig:graph}.  For all $ u = 1,2,3,4 $, $ a_{ij}^{(u)} = 1 $ if $ (i,j) \in \mathcal{E}^{(u)} $; and $ 0 $, otherwise. 
The communication graph sequence $ \{\mathtt{G}_k:k\in\mathbb{N}\} $  is associated with a Markovian chain $ \{\mathtt{m}_k:k\in\mathbb{N}\} $. 
The initial probability $ p_{u,1} = \mP\{\mathtt{G}_1 = \mG^{(u)}\} = \frac{1}{4} $. 
The transition probability matrix
\begin{align*}
	P = (p_{uv})_{4\times 4} = \begin{bmatrix}
		\frac{1}{2} & \frac{1}{2} & 0 & 0 \\
		0 & \frac{1}{2} & \frac{1}{2} & 0 \\
		0 & 0 & \frac{1}{2} & \frac{1}{2} \\
		\frac{1}{2} & 0 & 0 & \frac{1}{2} \\
	\end{bmatrix},
\end{align*}
where $ p_{uv} = \mP\{m_k=v|m_{k-1}=u\} $. Therefore, the stationary distribution $ \pi_u = \frac{1}{4} $ for all $ u = 1,2,3,4 $. 

\begin{figure}[!t]
	\centering
	\subfloat[Graph $ \mG^{(1)} $]{
		\includegraphics[scale=0.5,width=0.4\linewidth]{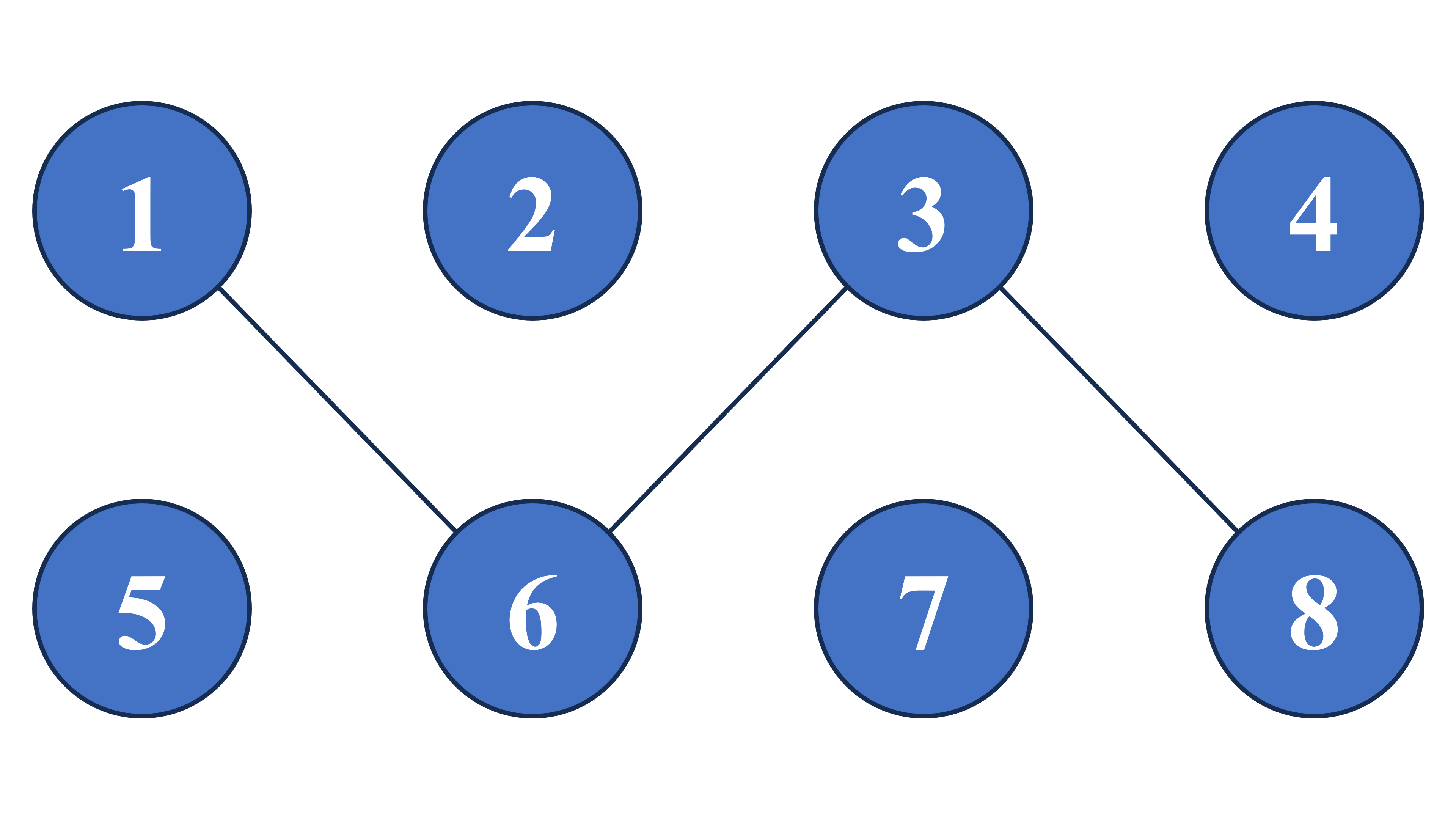}}
	\quad
	\subfloat[Graph $ \mG^{(2)} $]{
		\includegraphics[scale=0.5,width=0.4\linewidth]{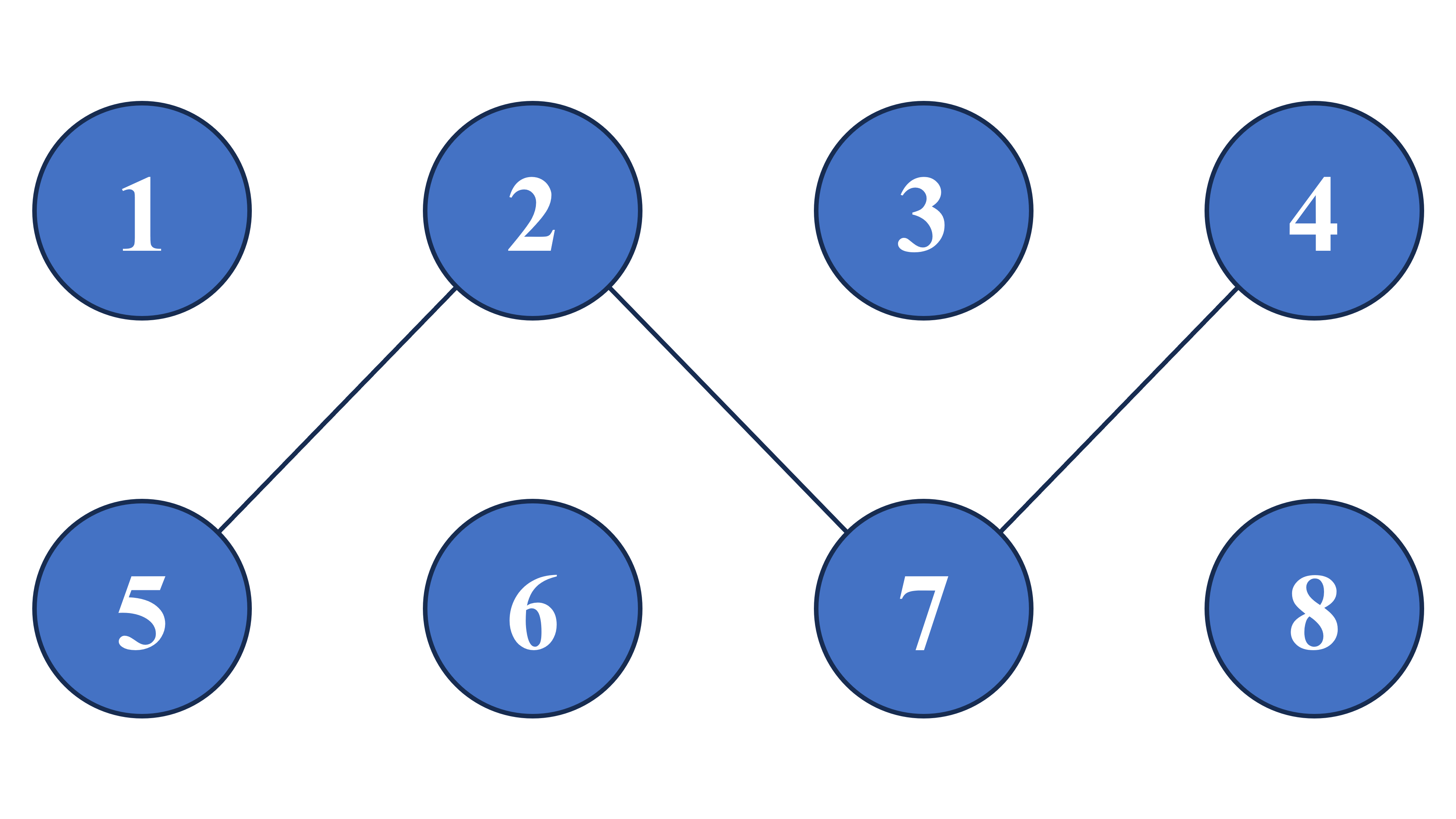}}
	\\
	\subfloat[Graph $ \mG^{(3)} $]{
		\includegraphics[scale=0.5,width=0.4\linewidth]{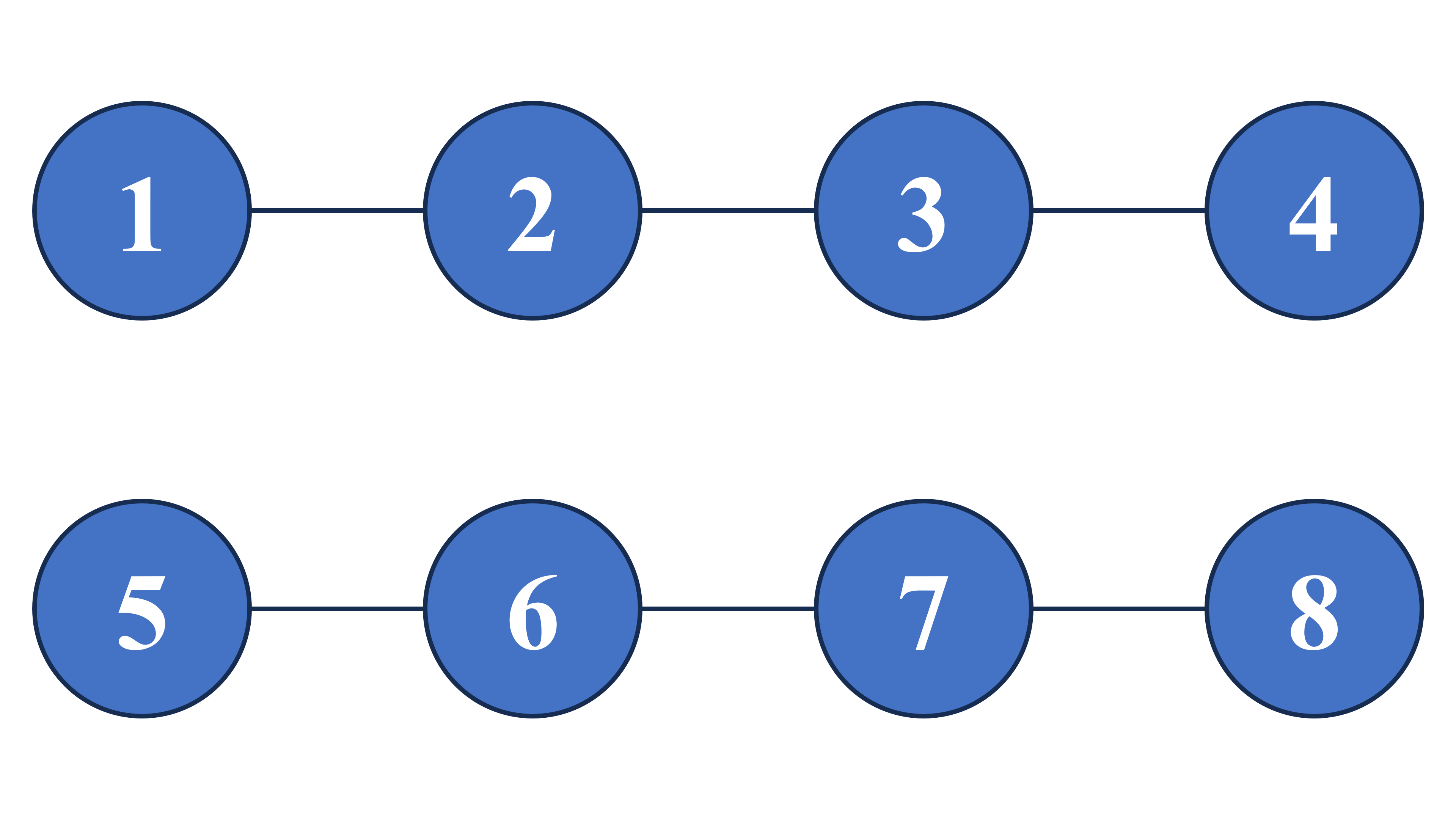}}
	\quad
	\subfloat[Graph $ \mG^{(4)} $]{
		\includegraphics[scale=0.5,width=0.4\linewidth]{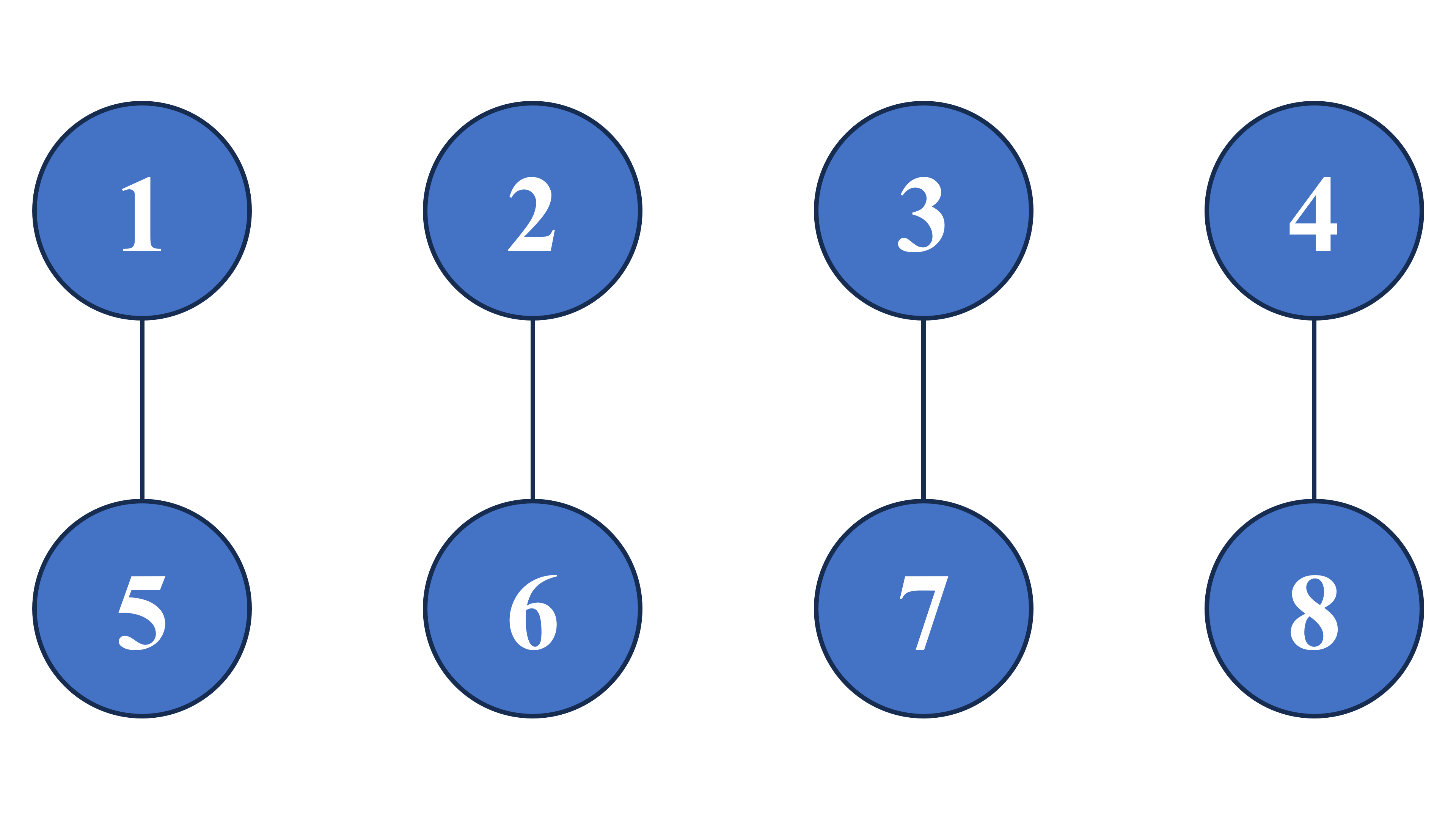}}
	\caption{Communication graphs}
	\label{fig:graph}
\end{figure}

In the observation model, the unknown parameter $ \theta = \begin{bmatrix}
	1 & -1
\end{bmatrix}^\top $. Sensors fail with probability $ \frac{1}{2} $. When the sensor $ i $ does not fail at time $ k $, the measurement matrix $ \mathtt{H}_{i,k} = 
\begin{bmatrix}
	2 & 0
\end{bmatrix} $ if $ i $ is odd, and  $ 
\begin{bmatrix}
	0 & 2
\end{bmatrix} $ if $ i $ is even. 
When the sensor $ i $ fails, $ \mathtt{H}_{i,k} = 0 $. Therefore, $ \bar{H}_i = \begin{bmatrix}
	1 & 0
\end{bmatrix} $ if $ i $ is odd, and  $ 
\begin{bmatrix}
0 & 1
\end{bmatrix} $ if $ i $ is even.
The observation noise $ \mathtt{w}_{i,k} $ is i.i.d. Gaussian with zero mean and standard deviation $ 0.1 $. 

In \cref{algo:DE}, the threshold $ C_{ij} = 0 $. The step-sizes $ \MODIFYY{\alpha_{ij,k}} = \frac{3}{k^{0.8}} $, and $ \beta_{i,k} = \frac{3}{k} $ if $ k \geq 8 $; and $ 0 $, otherwise. Three types of privacy noise distributions are considered, including Gaussian distribution $ \mathcal{N}(0,\sigma_{ij,k}^2) $ with $ \sigma_{ij,k} = k^{0.15} $, Laplacian distribution $ \text{\textit{Lap}}(0,b_{ij,k}) $ with $ b_{ij,k} = k^{0.15} $ and Cauchy distribution $ \text{\textit{Cauchy}}(0,r_{ij,k}) $ with $ r_{ij,k} = k^{0.15} $.

We repeat the simulation 100 times, and \cref{fig:conv} illustrates the trajectories of $ \frac{1}{100N} \sum_{i=1}^{N} \sum_{\varsigma = 1}^{100} \Absl{\tilde{\uptheta}_{i,k}^{\varsigma}}^2 $, where $ \tilde{\uptheta}_{i,k}^\varsigma $ is the estimate of $ \theta $ by sensor $ i $ at time $ k $ in the $ \varsigma $-th run. The figure demonstrates that the estimates can converge the true value $ \theta $ even under increasing noises and 1 communication data rate. In addition, \cref{fig:conv} shows that when sensors do not communicate with each other, the estimates do not converge to the true value. Therefore, the communication is necessary for the distributed estimation.

\begin{figure}[!htbp]
	\centering
	\includegraphics[width=1\linewidth]{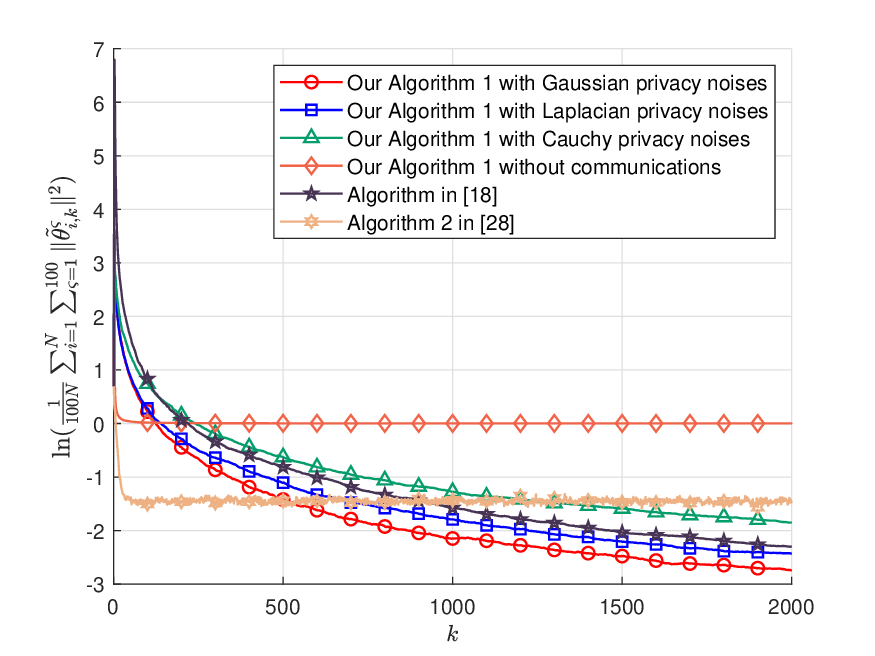}
	\caption{The trajectories of \MODIFYY{ $ \ln\left( \frac{1}{100N} \sum_{i=1}^{N} \sum_{\varsigma = 1}^{100} \Absl{\tilde{\uptheta}_{i,k}^{\varsigma}}^2 \right) $ }}
	\label{fig:conv}
\end{figure}

\cref{fig:priv} draws the upper bounds of the non-zero elements in $ \mE \mathcal{I}_{\mathtt{S}}(\mathtt{y}_{i,k}) $ given by \cref{thm:privacy/finite}. \MODIFY{To avoid duplicate presentation of similar figures, \cref{fig:priv} only takes the sensors 1 and 2 as representative examples.} The figure indicates that the privacy-preserving capability of \cref{algo:DE} is dynamically enhanced under the three types of privacy noise distributions. 

\begin{remark}
	Under our setting, 
	$ \bar{H}_{i} \bar{H}_{i}\tr = \begin{bmatrix}
		1 & 0 \\ 0 & 0
	\end{bmatrix} $ if $ i $ is odd; and $ \bar{H}_{i} \bar{H}_{i}\tr = \begin{bmatrix}
		0 & 0 \\ 0 & 1
	\end{bmatrix} $ if $ i $ is even. Then, by \cref{thm:privacy/finite}, there is only one element in the matrix $ \mathcal{I}_{\mathtt{S}}(\mathtt{y}_{i,k}) $ is non-zero. 
	Therefore, it is sufficient to depict the trajectory of non-zero element in the matrix $ \mE\mathcal{I}_{\mathtt{S}}(\mathtt{y}_{i,k}) $ in \cref{fig:priv}. 
\end{remark}

\begin{figure}[!htbp]
	\centering
	\subfloat[The boundaries of (1,1) element in $ \ln \mathbb{E} \mathcal{I}_{\mathtt{S}}(\mathtt{y}_{1,k}) $]{
		\includegraphics[scale=1,width=1\linewidth]{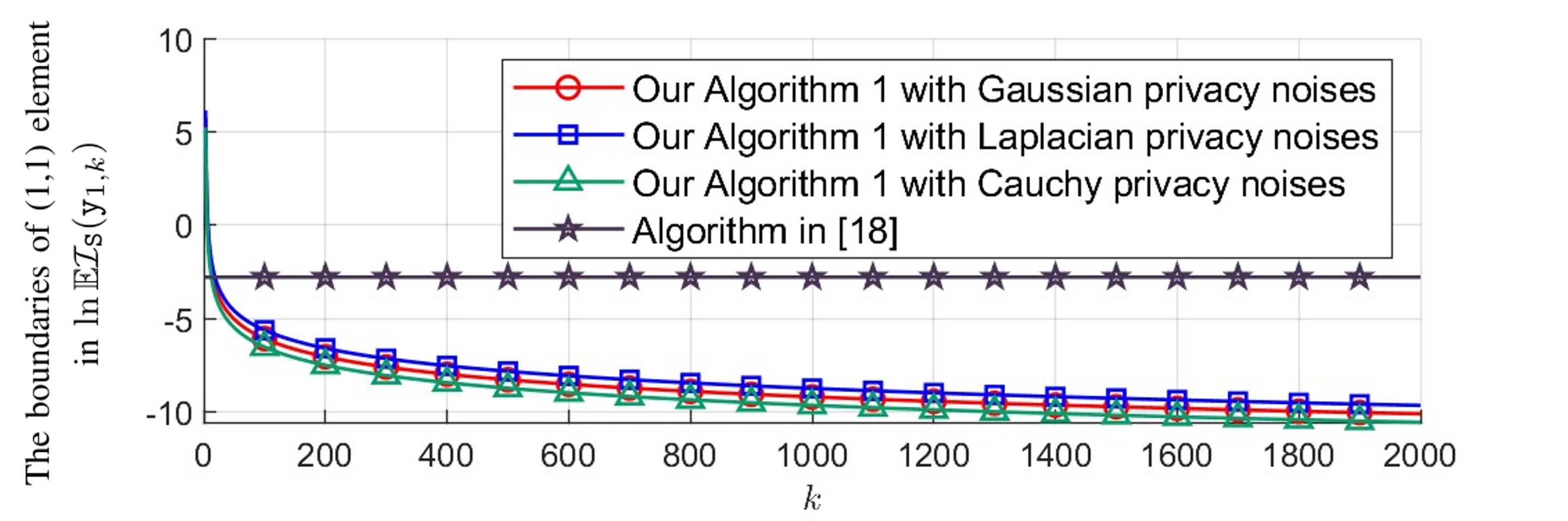}}\\
	\subfloat[The boundaries of (2,2) element in $ \ln \mathbb{E} \mathcal{I}_{\mathtt{S}}(\mathtt{y}_{2,k}) $]{
		\includegraphics[scale=1,width=1\linewidth]{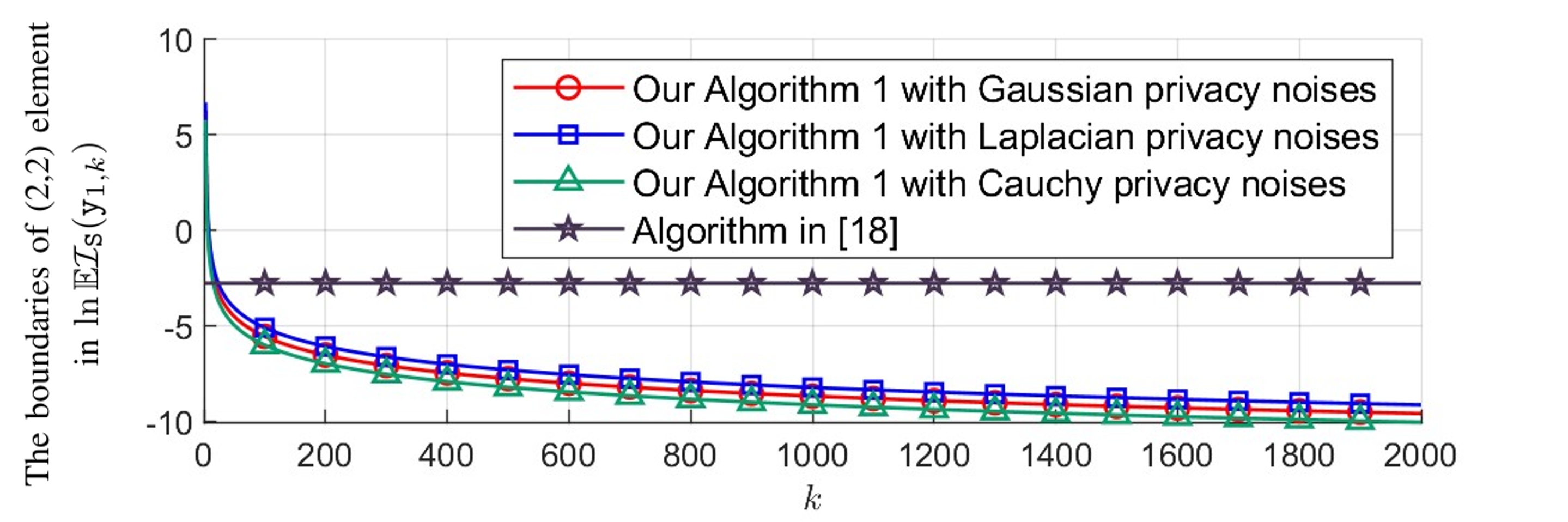}}
	\caption{\MODIFYYY{The upper boundaries of the non-zero elements in $ \mE \mathcal{I}_{\mathtt{S}}(\mathtt{y}_{i,k}) $ for the sensors $ 1 $ and $ 2 $}}
	\label{fig:priv}
\end{figure}

\MODIFY{\cref{fig:conv,fig:priv} also compare \cref{algo:DE} with existing ones in \cite{WangJM2022IJRNC,Sayin2014}. From \cref{fig:conv,fig:priv}, one can get that \cref{algo:DE} can achieve similar estimation error and much better privacy simultaneously compared with the algorithm in \cite{WangJM2022IJRNC}. 
	Besides, the algorithm in \cite{WangJM2022IJRNC} requires sensors to transmit real-valued information to each other, in contrast to the binary-valued communications of our \cref{algo:DE}. Algorithm 2 in \cite{Sayin2014} also requires binary-valued communications. The mean square errors of its estimates quickly decrease to a certain value, but do not converge to 0. Therefore, after about 1000 iterations, the estimation error of our \cref{algo:DE} is smaller than that of Algorithm 2 in \cite{Sayin2014}. Besides, \cite{Sayin2014} does not consider the privacy-preserving issue. 
}

\cref{fig:trade} demonstrates the trade-off between privacy and convergence rate for \cref{algo:DE}. In \cref{algo:DE}, the step-size $ \alpha_{ij,k} = \frac{3}{k^{(2.9-\chi)/{2}}} $, and the privacy noises is Cauchy distributed with $ r_{ij,k} = k^{\frac{\chi-1}{2}} $, where $ \chi = 1.3,\ 1.6 $ and $ 1.9 $. \cref{fig:trade} (a) depicts the log-log plot for the boundaries of $ \mE \mathcal{I}_{\mathtt{S}}(\mathtt{y}_{1,k}) $. It is observed that a better privacy level is achieved with a larger $ \chi $. \cref{fig:trade} (b) shows the log-log plot for the  trajectories of $ \frac{1}{100N} \sum_{i=1}^{N} \sum_{\varsigma = 1}^{100} \Absl{\tilde{\uptheta}_{i,k}^{\varsigma}}^2 $. It is observed that a better convergence rate is achieved with a smaller $ \chi $. \MODIFY{Therefore, the trade-off can be shown under different $ \chi $. }

\begin{figure}[!htbp]
	\centering
	\subfloat[\MODIFYYY{The boundaries of (1,1) element in $ \ln \mE \mathcal{I}_{\mathtt{S}}(\mathtt{y}_{1,k}) $ with different $ \chi $}]{
		\includegraphics[scale=1,width=1\linewidth]{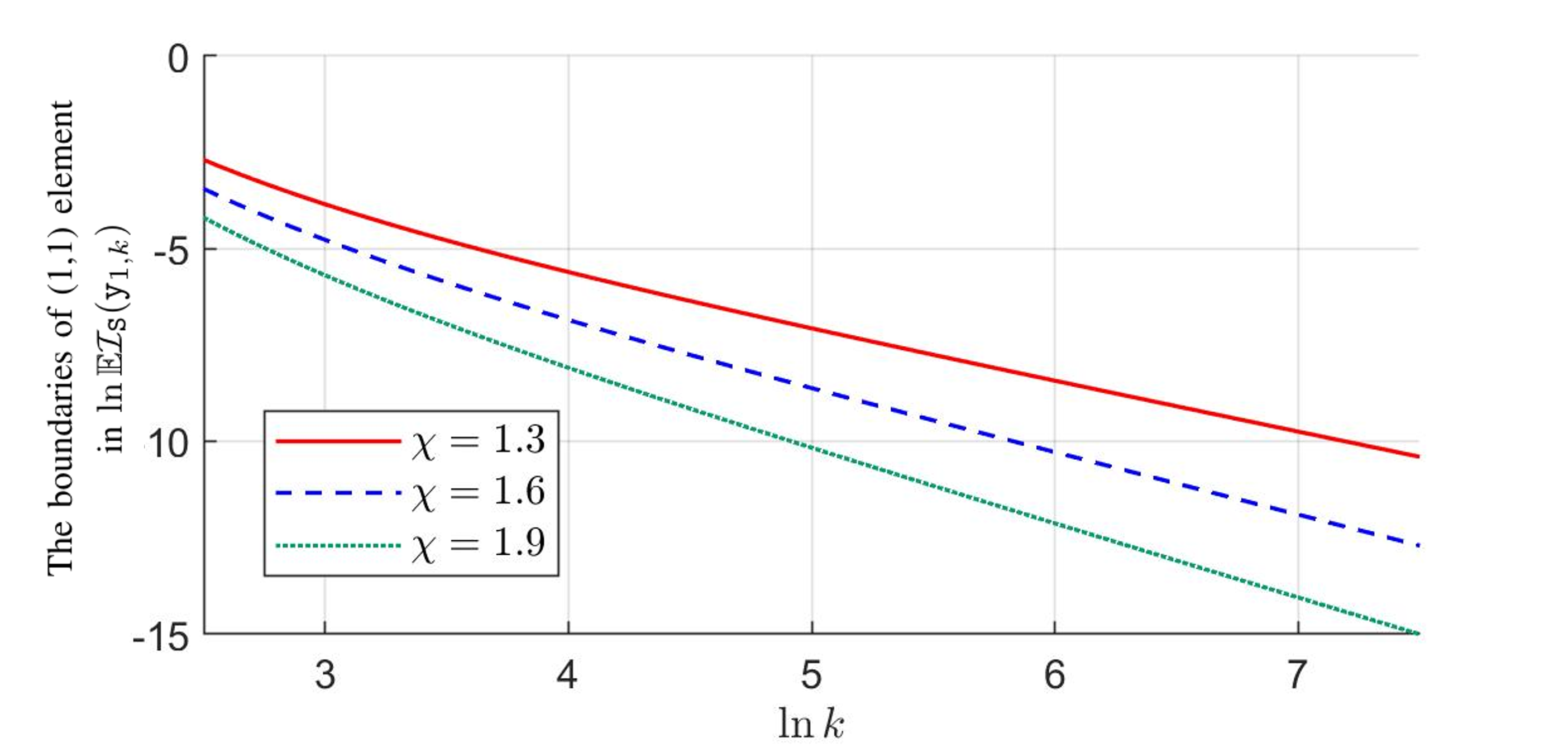}}\
	\subfloat[The log-log plot for $ \frac{1}{100N} \sum_{i=1}^{N} \sum_{\varsigma = 1}^{100} \Absl{\tilde{\uptheta}_{i,k}^{\varsigma}}^2 $ with different $ \chi $]{
		\includegraphics[scale=1,width=1\linewidth]{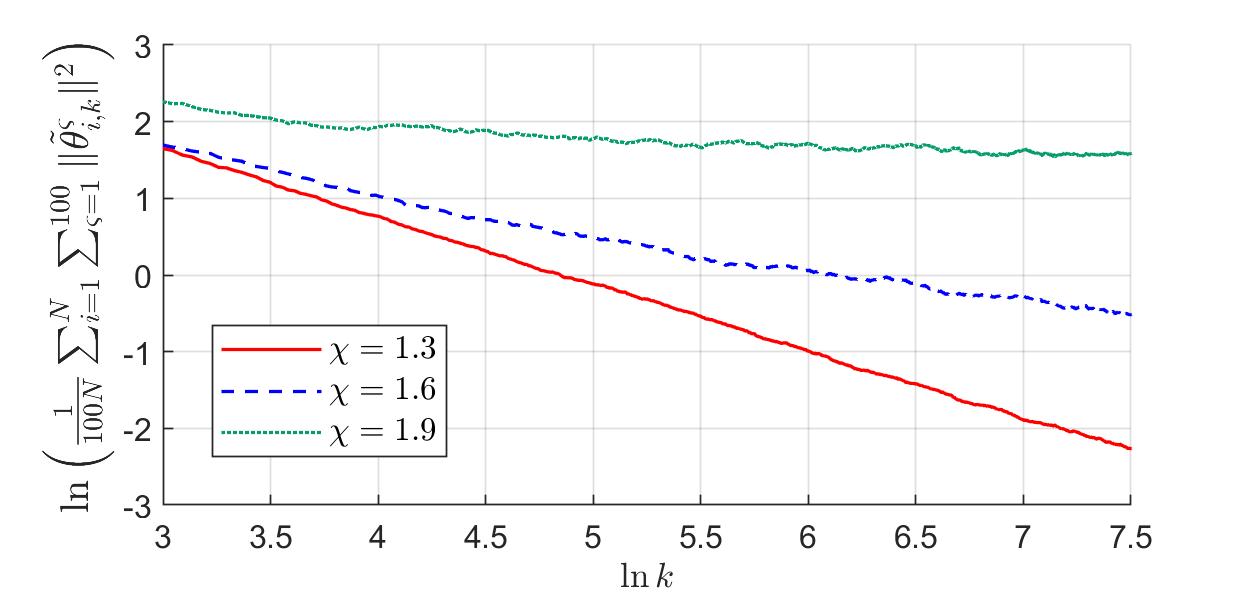}}
	\caption{\MODIFY{The trade-off between privacy and convergence rate}}
	\label{fig:trade}
\end{figure}

By \cref{remark:C_phi}, when not pursuing 1 bit communication data rate,  
$ \varphi_k $ in \cref{algo:DE} can be \MODIFYYY{replaced by a higher dimensional selection matrix} to make the modified algorithm perform better in high-dimensional settings. To show this improvement, consider the case of $ n = 12 $. The unknown parameter $\theta$ is uniformly generated within $ [-1,1]^{12} $. $ \bar{H}_i $ is expanded to $ \begin{bmatrix}
	I_6 & O_6
\end{bmatrix} $ if $ i $ is odd, and  $ 
\begin{bmatrix}
	O_6 & I_6
\end{bmatrix} $ if $ i $ is even, which ensures \cref{assum:input} in the high dimensional $\theta$ case. Under this settings, from \cref{fig:multiplebitscompare}, one can see that the modified algorithm converges faster than the original \cref{algo:DE}. Additionally, since the influence of $ \varphi_k $ was neglected in the analysis of \cref{thm:privacy/finite}, the upper bound of privacy-preserving level obtained from Theorem 1 still holds after removing $ \varphi_k $, which implies that the modified algorithm still has strong privacy-preserving capability.

\begin{figure}[!htbp]
	\centering
	\includegraphics[width=1\linewidth]{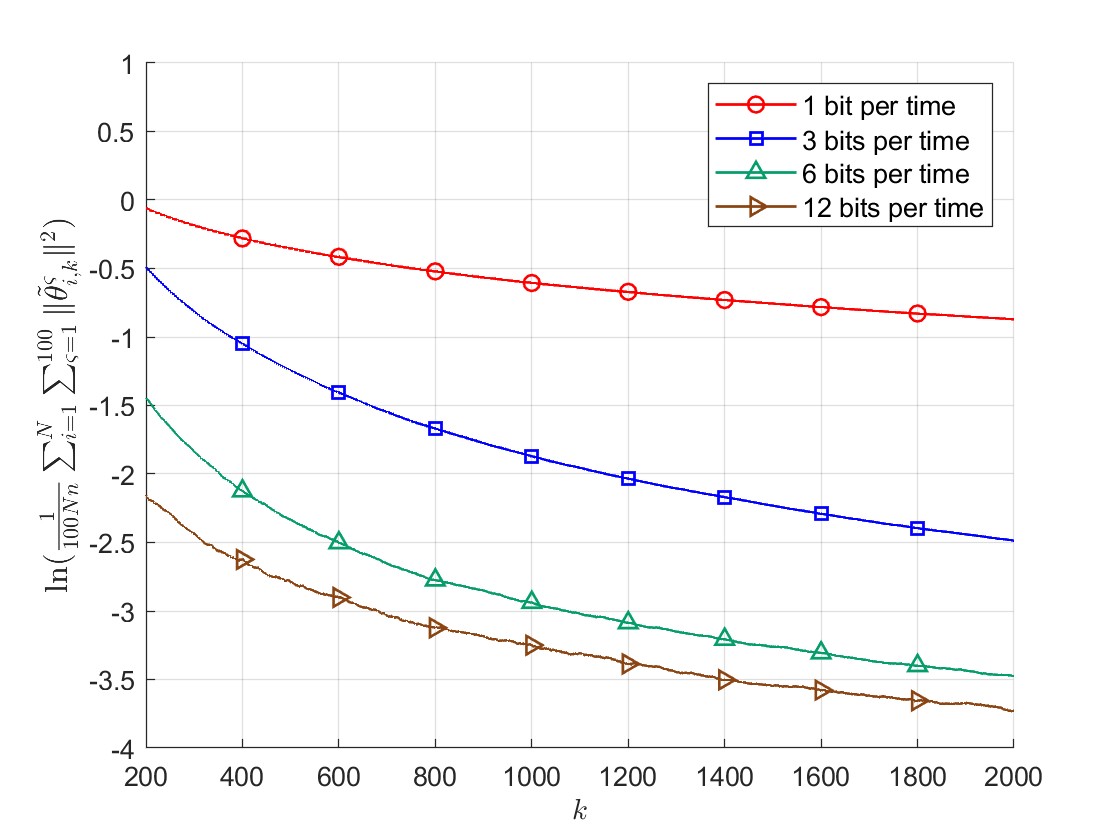}
	\caption{\MODIFYYY{The trajectories of $ \ln\left(\frac{1}{100Nn} \sum_{i=1}^{N} \sum_{\varsigma = 1}^{100} \Absl{\tilde{\uptheta}_{i,k}^{\varsigma}}^2 \right)$ with different data rate}}
	\label{fig:multiplebitscompare}
\end{figure}

%

\subsection{An experiment on the event rate analysis of essential hypertension}

In this subsection, \cref{algo:DE} is applied in the event rate analysis of essential hypertension in 281299 white British participants\footnote{\MODIFY{The data comes from UK Biobank (Application: 78793).}}. 
In the experiment, $ \mathtt{H}_{i,k} = 1 $ if there is a participant for the sensor $ i $ at time $ k $; and $ \mathtt{H}_{i,k} = 0 $, otherwise. $ \mathtt{H}_{i,k} = 1 $ with probability $ 0.7 $. 
The observation $ \mathtt{y}_{i,k} = 1 $  if $ \mathtt{H}_{i,k} = 1 $ and the participant suffers from the essential hypertension; and $ \mathtt{y}_{i,k}  = 0 $, otherwise. Such clinical information $ \mathtt{y}_{i,k} $ is private, and needs to be protected in practical scenarios. 

About 4/5 of the database is used as the training set, while the rest is the test set. From the test set, we have the event rate $ \theta \approx 0.2699 $. 
Data in the training set is distributed in a 20 sensor network. In the network, $ \mathtt{a}_{ij,k} = 1 $ if $ (i,j)\in\mathtt{E}_k $; and $ 0 $, otherwise. The initial probability $ \mP\{ \mathtt{a}_{ij,1} = 1 \} = 0.5 $, and the transition probability $ \mP\{ \mathtt{a}_{ij,k} = 1 | \mathtt{a}_{ij,k-1} = 1 \} = \mP\{ \mathtt{a}_{ij,k} = 0 | \mathtt{a}_{ij,k-1} = 0 \} = 0.7 $. 

In \cref{algo:DE}, the threshold $ C_{ij} = 0 $. The step-sizes $ \MODIFYY{\alpha_{ij,k} = \frac{0.2}{k^{(2.9-\chi)/{2}}}} $, $ \beta_{i,k} = \frac{0.4}{k} $, and the privacy noise is Gaussian $ \mathcal{N}(0,\sigma_{ij,k}^2) $ with $ \MODIFYY{\sigma_{ij,k} = k^{\frac{\chi-1}{2}}} $, \MODIFYY{where $ \chi = 1.3,\ 1.6 $ and $ 1.9 $. } Under the settings, \cref{fig:real} (a) shows the dynamically enhanced privacy of our algorithm, and \cref{fig:real} (b) demonstrates the convergence.

\begin{figure}[!htbp]
	\centering
	\subfloat[\MODIFYYY{The boundary of $ \ln \mathbb{E} \mathcal{I}_{\mathtt{S}}(\mathtt{y}_{1,k}) $}]{
		\includegraphics[scale=0.4,width=0.48\linewidth]{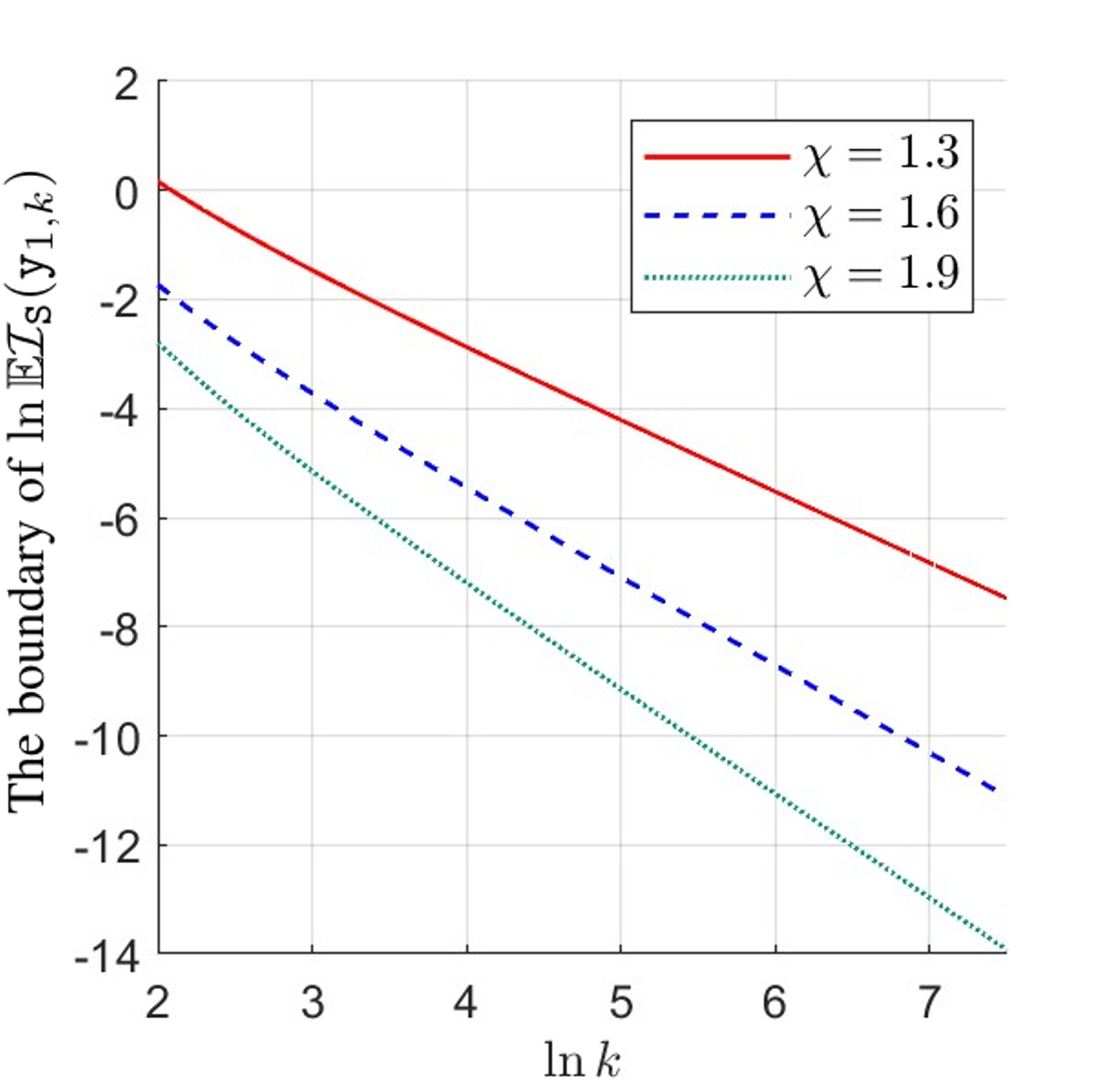}}
	\subfloat[Estimation error]{
		\includegraphics[scale=0.4,width=0.48\linewidth]{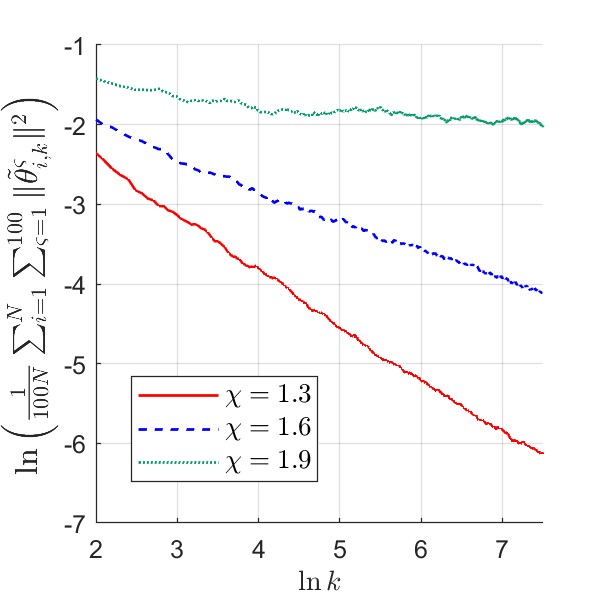}}
	\caption{\MODIFY{Privacy and convergence of \cref{algo:DE} for the event rate analysis of essential hypertension}}
	\label{fig:real}
\end{figure}

%

\section{Conclusion}\label{sec:concl}

This paper proposes a binary-valued quantizer-based privacy-preserving distributed estimation algorithm with multiple advantages. 
In terms of privacy, the proposed algorithm achieves the dynamically enhanced privacy, and the Fisher information-based privacy metric $ \mE \mathcal{I}_{\mathtt{S}}(\mathtt{y}_{i,k}) $ is proved to converge to $ 0 $ at a polynomial rate. 
In terms of communication costs, each sensor transmits only 1 bit of information to its neighbours at each time step. Besides, \MODIFY{the assumption on the negligible quantization error for real-valued messages is not required.} 
In terms of effectiveness, the proposed algorithm can achieve almost sure convergence even with increasing privacy noises. A polynomial convergence rate is also obtained. 
Besides, the trade-off between privacy and convergence rate is established. When the step-sizes and privacy noise distributions are properly selected, a better privacy-preserving capability implies a slower convergence rate, and vice versa. 

There are still many interesting topics worth further investigation.  For example, how to apply the proposed method to distributed optimization problems to achieve the dynamically enhanced privacy and a limited data rate, and how to \MODIFY{protect the observation matrices}. 

\useRomanappendicesfalse

\appendices

\section{Lemmas and corollaries}\label[appen]{appen}

\setcounter{equation}{0}
\renewcommand{\theequation}{A.\arabic{equation}}

\begin{lemmax}\label{lemma:enhance}
	If $ \lim_{k\to\infty} \mE \mathcal{I}_{\mathtt{S}} (\mathtt{y}_{i,k}) = 0 $, then the privacy-preserving capability is dynamically enhanced.
\end{lemmax}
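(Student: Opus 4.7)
The plan is to recognize that the lemma is essentially a restatement of the $\varepsilon$-definition of a matrix limit, specialized to the cone of positive semi-definite matrices. Since Fisher information matrices are always positive semi-definite, all the order relations in the definition of dynamically enhanced privacy can be read in the Loewner sense. The strict inequality $\mE\mathcal{I}_\mathtt{S}(\mathtt{y}_{i,k}) > 0$ means the matrix is positive definite, so its smallest eigenvalue is strictly positive.

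First, I would fix $i \in \mathcal{V}$ and $k$ with $\mE\mathcal{I}_\mathtt{S}(\mathtt{y}_{i,k}) \succ 0$, and set $\lambda := \lambda_{\min}\!\left(\mE\mathcal{I}_\mathtt{S}(\mathtt{y}_{i,k})\right) > 0$. This gives the Loewner lower bound $\mE\mathcal{I}_\mathtt{S}(\mathtt{y}_{i,k}) \succeq \lambda I$.

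Next, I would apply the hypothesis $\lim_{t\to\infty} \mE\mathcal{I}_\mathtt{S}(\mathtt{y}_{i,t}) = 0$ in the matrix-norm sense to produce a threshold $T > k$ for which $\|\mE\mathcal{I}_\mathtt{S}(\mathtt{y}_{i,t})\| < \lambda$ whenever $t \geq T$. Combining this with the positive semi-definiteness of $\mE\mathcal{I}_\mathtt{S}(\mathtt{y}_{i,t})$ yields the Loewner upper bound $\mE\mathcal{I}_\mathtt{S}(\mathtt{y}_{i,t}) \preceq \|\mE\mathcal{I}_\mathtt{S}(\mathtt{y}_{i,t})\| I \prec \lambda I$. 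Chaining the two bounds gives $\mE\mathcal{I}_\mathtt{S}(\mathtt{y}_{i,t}) \prec \mE\mathcal{I}_\mathtt{S}(\mathtt{y}_{i,k})$ for all $t \geq T$, which is exactly the defining property of dynamically enhanced privacy.

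No substantive obstacle is expected; the lemma is really a bookkeeping step that justifies the sufficient condition mentioned in the remark following the definition. The only point requiring care is aligning the interpretation of the strict matrix inequality "$<$" with the strict Loewner order, so that the argument in the previous paragraph indeed certifies the definition.
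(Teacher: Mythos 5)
Your proposal is correct and follows essentially the same route as the paper's proof: apply the definition of the limit with the threshold chosen as the current value $\mE\mathcal{I}_{\mathtt{S}}(\mathtt{y}_{i,k})$ to obtain the required $T$. The only difference is that you spell out the matrix-order bookkeeping (via $\lambda_{\min}$ and the norm bound $\mE\mathcal{I}_{\mathtt{S}}(\mathtt{y}_{i,t}) \preceq \Absl{\mE\mathcal{I}_{\mathtt{S}}(\mathtt{y}_{i,t})} I$) that the paper's one-line argument leaves implicit, which is a harmless refinement rather than a different approach.
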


\begin{proof}
	Since $ \lim_{k\to\infty} \mE \mathcal{I}_{\mathtt{S}} (\mathtt{y}_{i,k}) = 0 $, for any $ A > 0 $, there exists $ T \in \mathbb{N} $ such that $ \mE \mathcal{I}_{\mathtt{S}} (\mathtt{y}_{i,t}) \leq A $ for all $ t \geq T $. Then, the lemma can be proved by setting $ A = \mE \mathcal{I}_{\mathtt{S}} (\mathtt{y}_{i,k}) $. 
\end{proof}

\begin{lemmax}[Chain rule for Fisher information, \cite{zamir1998Fisher}]\label{lemma:chain}
	For random variables $ \mathtt{X}, \mathtt{Y}, \uptheta $, $ \mathcal{I}_{\mathtt{X},\mathtt{Y}}(\uptheta) = \mathcal{I}_{\mathtt{X}}(\uptheta) + \mathcal{I}_{\mathtt{Y}}(\uptheta|\mathtt{X}) \geq \mathcal{I}_{\mathtt{X}}(\uptheta) $. 
\end{lemmax}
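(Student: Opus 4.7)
The plan is to reduce the identity to the classical score-function decomposition and obtain the inequality as a positive semidefinite remainder. First, I would factor the joint conditional density as $\mP(\mathtt{X},\mathtt{Y}|\uptheta) = \mP(\mathtt{X}|\uptheta)\,\mP(\mathtt{Y}|\mathtt{X},\uptheta)$, take logarithms, and differentiate in $\uptheta$ to split the joint score additively: letting $s_\mathtt{X} := \partial_\uptheta \ln \mP(\mathtt{X}|\uptheta)$ and $s_{\mathtt{Y}|\mathtt{X}} := \partial_\uptheta \ln \mP(\mathtt{Y}|\mathtt{X},\uptheta)$, the joint score is $s_\mathtt{X}+s_{\mathtt{Y}|\mathtt{X}}$. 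Expanding $\mathcal{I}_{\mathtt{X},\mathtt{Y}}(\uptheta) = \mE[(s_\mathtt{X}+s_{\mathtt{Y}|\mathtt{X}})(s_\mathtt{X}+s_{\mathtt{Y}|\mathtt{X}})^\top \mid \uptheta]$ then yields four terms, two of which are $\mathcal{I}_\mathtt{X}(\uptheta)$ and $\mathcal{I}_\mathtt{Y}(\uptheta|\mathtt{X})$ by direct comparison with the definitions given in the excerpt.

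The second step is to kill the two off-diagonal cross-terms $\mE[s_\mathtt{X}\, s_{\mathtt{Y}|\mathtt{X}}^\top \mid \uptheta]$ and its transpose. I would condition on $(\mathtt{X},\uptheta)$ and apply the standard score-is-centered identity: differentiating $\int \mP(\mathtt{Y}|\mathtt{X},\uptheta)\,d\mathtt{Y} = 1$ with respect to $\uptheta$ and swapping differentiation and integration gives $\mE[s_{\mathtt{Y}|\mathtt{X}} \mid \mathtt{X},\uptheta] = 0$. The tower property then makes both cross-terms vanish, producing the identity $\mathcal{I}_{\mathtt{X},\mathtt{Y}}(\uptheta) = \mathcal{I}_\mathtt{X}(\uptheta) + \mathcal{I}_\mathtt{Y}(\uptheta|\mathtt{X})$. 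The inequality is then immediate: $\mathcal{I}_\mathtt{Y}(\uptheta|\mathtt{X})$ is an expectation of rank-one outer products $s_{\mathtt{Y}|\mathtt{X}} s_{\mathtt{Y}|\mathtt{X}}^\top$, so it is positive semidefinite, forcing $\mathcal{I}_{\mathtt{X},\mathtt{Y}}(\uptheta) \succeq \mathcal{I}_\mathtt{X}(\uptheta)$ in the Loewner order, which in particular gives the scalar/trace inequality stated.

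The main obstacle I anticipate is the technical justification for interchanging differentiation and integration when establishing the zero-mean property of the conditional score; this relies on standard regularity of $\mP(\mathtt{Y}|\mathtt{X},\uptheta)$ in $\uptheta$ (e.g., dominated convergence with an integrable envelope for $|\partial_\uptheta \mP(\mathtt{Y}|\mathtt{X},\uptheta)|$), which the cited reference \cite{zamir1998Fisher} imposes implicitly as part of the Fisher-information framework. Once this regularity is granted, the remainder of the argument is a short algebraic manipulation together with the tower property, so no further subtleties are expected.
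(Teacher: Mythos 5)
Your proof is correct, but note that the paper itself does not prove this lemma at all: it is quoted with a citation to Zamir's Fisher-information data-processing paper and used as a black box (the paper only proves the subsequent corollary from it). Your argument is the standard one and matches how the cited reference establishes the chain rule: factor $\mP(\mathtt{X},\mathtt{Y}|\uptheta)=\mP(\mathtt{X}|\uptheta)\mP(\mathtt{Y}|\mathtt{X},\uptheta)$, decompose the joint score as $s_\mathtt{X}+s_{\mathtt{Y}|\mathtt{X}}$, kill the cross terms via $\mE[s_{\mathtt{Y}|\mathtt{X}}\mid\mathtt{X},\uptheta]=0$ and the tower property, and read off the inequality from positive semidefiniteness of $\mathcal{I}_\mathtt{Y}(\uptheta|\mathtt{X})$. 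Two small remarks: the expansion step implicitly uses that in the paper's definition of conditional Fisher information the expectation is taken over both $\mathtt{Y}$ and $\mathtt{X}$ given $\uptheta$, which is exactly what makes the middle term equal $\mathcal{I}_\mathtt{Y}(\uptheta|\mathtt{X})$ — worth stating explicitly; and the inequality in the lemma is used later in the Loewner (matrix) order, which is precisely what your positive-semidefiniteness argument delivers, so there is no need to weaken it to a scalar or trace statement. The differentiation-under-the-integral regularity you flag is indeed assumed as part of the Fisher-information framework in the cited reference, so your proof has no genuine gap.
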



\begin{corollaryx}\label{coro:chain}
	For random variables $ \mathtt{X}, \mathtt{Y}, \mathtt{Z}, \uptheta $, we have
	\begin{enumerate}[label={\alph*)},leftmargin=1.4em]
		\item $ \mathcal{I}_{\mathtt{X},\mathtt{Y}}(\uptheta|\mathtt{Z}) = \mathcal{I}_{\mathtt{X}}(\uptheta|\mathtt{Z}) + \mathcal{I}_{\mathtt{Y}}(\uptheta|\mathtt{X},\mathtt{Z}) $;
		\item If $ \mathcal{I}_{\mathtt{Y}}(\uptheta|\mathtt{X}) = 0 $, then $ \mathcal{I}_{\mathtt{X}}(\uptheta|\mathtt{Y}) \leq \mathcal{I}_{\mathtt{X},\mathtt{Y}}(\uptheta) = \mathcal{I}_{\mathtt{X}}(\uptheta) $;
		\item If $ \mathcal{I}_{\mathtt{X}}(\uptheta|\mathtt{Z}) = 0 $, then $ \mathcal{I}_{\mathtt{Y}}(\uptheta|\mathtt{Z}) \leq \mathcal{I}_{\mathtt{Y}}(\uptheta|\mathtt{X},\mathtt{Z}) $.
	\end{enumerate} 
\end{corollaryx}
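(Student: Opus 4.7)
The plan is to derive all three claims from Lemma~\ref{lemma:chain} (the unconditional chain rule for Fisher information) together with the nonnegativity of (conditional) Fisher information, which is automatic from its definition as an expected outer product. Only part (a) contains genuinely new content; once it is in hand, parts (b) and (c) are one-line rearrangements.

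For part (a), the plan is to factor $\mP(\mathtt{X},\mathtt{Y}\mid\mathtt{Z},\uptheta) = \mP(\mathtt{X}\mid\mathtt{Z},\uptheta)\,\mP(\mathtt{Y}\mid\mathtt{X},\mathtt{Z},\uptheta)$, take $\partial_\uptheta\ln$ of both sides so that the joint score splits additively into an $\mathtt{X}$-score and a $\mathtt{Y}$-score, and then expand the outer product of this sum. The two squared pieces give $\mathcal{I}_{\mathtt{X}}(\uptheta\mid\mathtt{Z})$ and $\mathcal{I}_{\mathtt{Y}}(\uptheta\mid\mathtt{X},\mathtt{Z})$ directly from the definitions. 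The cross term is handled by iterated conditioning on $(\mathtt{X},\mathtt{Z},\uptheta)$ and the standard zero-mean identity $\mE[\partial_\uptheta\ln\mP(\mathtt{Y}\mid\mathtt{X},\mathtt{Z},\uptheta)\mid\mathtt{X},\mathtt{Z},\uptheta] = 0$, obtained by exchanging differentiation and integration under the same regularity that underlies Lemma~\ref{lemma:chain} itself. This is essentially the proof of Lemma~\ref{lemma:chain} carried out under the conditional law given $\mathtt{Z}$.

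For part (b), the plan is to write Lemma~\ref{lemma:chain} in the two orderings of $(\mathtt{X},\mathtt{Y})$:
\begin{align*}
\mathcal{I}_{\mathtt{X},\mathtt{Y}}(\uptheta) = \mathcal{I}_{\mathtt{X}}(\uptheta) + \mathcal{I}_{\mathtt{Y}}(\uptheta\mid\mathtt{X}) = \mathcal{I}_{\mathtt{Y}}(\uptheta) + \mathcal{I}_{\mathtt{X}}(\uptheta\mid\mathtt{Y}).
\end{align*}
The hypothesis $\mathcal{I}_{\mathtt{Y}}(\uptheta\mid\mathtt{X}) = 0$ collapses the first equality to $\mathcal{I}_{\mathtt{X},\mathtt{Y}}(\uptheta) = \mathcal{I}_{\mathtt{X}}(\uptheta)$, and dropping the nonnegative term $\mathcal{I}_{\mathtt{Y}}(\uptheta)$ in the second equality yields $\mathcal{I}_{\mathtt{X}}(\uptheta\mid\mathtt{Y}) \leq \mathcal{I}_{\mathtt{X},\mathtt{Y}}(\uptheta)$. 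Part (c) is the same manoeuvre performed inside the conditioning on $\mathtt{Z}$, using part (a) as the conditional chain rule in both orderings; substituting $\mathcal{I}_{\mathtt{X}}(\uptheta\mid\mathtt{Z}) = 0$ into the first expression and discarding the nonnegative $\mathcal{I}_{\mathtt{X}}(\uptheta\mid\mathtt{Y},\mathtt{Z})$ in the second yields $\mathcal{I}_{\mathtt{Y}}(\uptheta\mid\mathtt{Z}) \leq \mathcal{I}_{\mathtt{Y}}(\uptheta\mid\mathtt{X},\mathtt{Z})$. The only obstacle worth flagging is the vanishing of the cross term in (a); this requires differentiation-under-the-integral regularity in $\uptheta$, but no new assumption is needed beyond what is already used to invoke Lemma~\ref{lemma:chain}.
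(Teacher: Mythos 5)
Your proposal is correct, and parts (b) and (c) are essentially the paper's own argument: apply the chain rule in both orderings of the pair, use the vanishing-information hypothesis to collapse one decomposition, and drop a nonnegative term in the other. The only genuine divergence is in part (a). You prove the conditional chain rule from first principles, factoring $\mP(\mathtt{X},\mathtt{Y}\mid\mathtt{Z},\uptheta)=\mP(\mathtt{X}\mid\mathtt{Z},\uptheta)\,\mP(\mathtt{Y}\mid\mathtt{X},\mathtt{Z},\uptheta)$, splitting the score, and killing the cross term via the zero-mean score identity under iterated conditioning; this is self-contained but quietly re-imports the regularity (interchange of $\partial_\uptheta$ and the integral, existence and zero mean of the conditional score) that the paper never has to state, because the paper instead obtains (a) purely algebraically from \cref{lemma:chain}: it writes $\mathcal{I}_{\mathtt{X},\mathtt{Y}}(\uptheta\mid\mathtt{Z})=\mathcal{I}_{\mathtt{X},\mathtt{Y},\mathtt{Z}}(\uptheta)-\mathcal{I}_{\mathtt{Z}}(\uptheta)$ (the cited chain rule with $\mathtt{Z}$ placed first) and then re-expands $\mathcal{I}_{\mathtt{X},\mathtt{Y},\mathtt{Z}}(\uptheta)$ through $(\mathtt{X},\mathtt{Z})$ first and $\mathtt{Y}$ second, so that all analytic content stays inside the reference \cite{zamir1998Fisher}. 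Your route buys a transparent, definition-level derivation of the conditional statement (and makes explicit exactly where regularity is used), while the paper's route is shorter and keeps the corollary a purely formal consequence of the quoted lemma; both are valid, and your flag about the cross term is precisely the point where the two proofs allocate the work differently.
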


\begin{proof}
	a) By \cref{lemma:chain}, we have
	\begin{align*}
		&\mathcal{I}_{\mathtt{X},\mathtt{Y}}(\uptheta|\mathtt{Z})
		=  \mathcal{I}_{\mathtt{X},\mathtt{Y},\mathtt{Z}}(\uptheta) - \mathcal{I}_{\mathtt{Z}}(\uptheta) \\
		= & \mathcal{I}_{\mathtt{X}}(\uptheta|\mathtt{Y},\mathtt{Z}) + \mathcal{I}_{\mathtt{X},\mathtt{Z}}(\uptheta|\mathtt{Y}) - \mathcal{I}_{\mathtt{Z}}(\uptheta) 
		= \mathcal{I}_{\mathtt{X}}(\uptheta|\mathtt{Z}) + \mathcal{I}_{\mathtt{Y}}(\uptheta|\mathtt{X},\mathtt{Z}). 
	\end{align*}
	
	b) By \cref{lemma:chain}, we have
	\begin{align*}
		\mathcal{I}_{\mathtt{X}}(\uptheta|\mathtt{Y}) 
		\leq \mathcal{I}_{\mathtt{X},\mathtt{Y}}(\uptheta)
		=  \mathcal{I}_{\mathtt{X}}(\uptheta) + \mathcal{I}_{\mathtt{Y}}(\uptheta|\mathtt{X}) 
		= \mathcal{I}_{\mathtt{X}}(\uptheta).
	\end{align*}

	c) By a), we have
	\begin{align*}
		\mathcal{I}_{\mathtt{Y}}(\uptheta|\mathtt{Z})
		= & \mathcal{I}_{\mathtt{X},\mathtt{Y}}(\uptheta|\mathtt{Z}) - \mathcal{I}_{\mathtt{X}}(\uptheta|\mathtt{Y},\mathtt{Z}) 
		\leq \mathcal{I}_{\mathtt{X},\mathtt{Y}}(\uptheta|\mathtt{Z}) \\
		= & \mathcal{I}_{\mathtt{X}}(\uptheta|\mathtt{Z}) + \mathcal{I}_{\mathtt{Y}}(\uptheta|\mathtt{X},\mathtt{Z}) 
		= \mathcal{I}_{\mathtt{Y}}(\uptheta|\mathtt{X},\mathtt{Z}). \qedhere
	\end{align*}

\end{proof}

\begin{lemmax}\label{lemma:Fisher_independent}
	For random variables $ \mathtt{X}, \uptheta $, and random variable sequences $ \mathtt{Y}_k = \{\mathtt{Y}_{i,k} : i = 1,\ldots,N\}, \mathtt{Z}_k = \{\mathtt{Z}_{i,k} : i = 1,\ldots,N\} $ for all $ k \in \mathbb{N} $, if
	\begin{enumerate}[label={\roman*)},leftmargin=1.4em]
		\item $ \mathtt{Y}_{1,k}, \ldots, \mathtt{Y}_{N,k} \neq 0 $, $ \mathtt{Z}_{1,k}, \ldots, \mathtt{Z}_{N,k} \in \{0,1\} $;
		\item Given $ \uptheta $, $ \mathtt{X} $ and $ \breve{\mathtt{Z}}_{k-1} $, the sequence $ \mathtt{Y}_k $ is independent, and independent of $ \mathtt{Z}_k $, where $ \breve{\mathtt{Z}}_k = \bigcup_{t=1}^{k} \hat{\mathtt{Z}}_t $ and $ \hat{\mathtt{Z}}_k = \{ \mathtt{Z}_{i,k} \mathtt{Y}_{i,k}, i = 1,\ldots, N\} $; 
		\item $ \mathcal{I}_{\mathtt{Z}_k}(\uptheta|\mathtt{X},\breve{\mathtt{Z}}_{k-1}) = 0 $, 
	\end{enumerate}
	then $ \mathcal{I}_{\breve{\mathtt{Z}}_{\infty}} (\uptheta|\mathtt{X})
	= \sum_{k=1}^\infty \sum_{i=1}^{N} \mathcal{I}_{\mathtt{Z}_{i,k} \mathtt{Y}_{i,k}} (\uptheta|\mathtt{X},\breve{\mathtt{Z}}_{k-1}) $.
\end{lemmax}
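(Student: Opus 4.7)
The plan is to decompose the joint Fisher information twice: first along the time index using the chain rule (Corollary A.1 a)), and second along the sensor index using the conditional-independence structure of condition ii) together with condition iii). So the two-step target is
\begin{align*}
\mathcal{I}_{\breve{\mathtt{Z}}_{\infty}}(\uptheta|\mathtt{X})
\;=\; \sum_{k=1}^{\infty} \mathcal{I}_{\hat{\mathtt{Z}}_k}(\uptheta|\mathtt{X},\breve{\mathtt{Z}}_{k-1})
\;=\; \sum_{k=1}^{\infty} \sum_{i=1}^{N} \mathcal{I}_{\mathtt{Z}_{i,k}\mathtt{Y}_{i,k}}(\uptheta|\mathtt{X},\breve{\mathtt{Z}}_{k-1}).
\end{align*}
The first equality is obtained by iterating Corollary A.1 a) on the nested decomposition $\breve{\mathtt{Z}}_k = \breve{\mathtt{Z}}_{k-1}\cup\hat{\mathtt{Z}}_k$ (with a standard passage to the limit). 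The second equality is the content of this lemma and is where the work lies.

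For the second equality I would exploit that the factor $\mathtt{Y}_{i,k}\neq 0$ makes $\mathtt{Z}_{i,k}=\mathbb{I}_{\{\hat{\mathtt{Z}}_{i,k}\neq 0\}}$ a deterministic function of $\hat{\mathtt{Z}}_{i,k}$, so that $\mathtt{Z}_k$ is $\sigma(\hat{\mathtt{Z}}_k)$-measurable. Combining condition ii) (which, given $\uptheta,\mathtt{X},\breve{\mathtt{Z}}_{k-1}$, makes $\mathtt{Y}_{1,k},\ldots,\mathtt{Y}_{N,k}$ mutually independent and jointly independent of $\mathtt{Z}_k$) with condition iii) (under which $\mathcal{I}_{\mathtt{Z}_k}(\uptheta|\mathtt{X},\breve{\mathtt{Z}}_{k-1})=0$ forces $p(\mathtt{Z}_k|\uptheta,\mathtt{X},\breve{\mathtt{Z}}_{k-1})=p(\mathtt{Z}_k|\mathtt{X},\breve{\mathtt{Z}}_{k-1})$), I can factor
\begin{align*}
p\bigl(\hat{\mathtt{Z}}_k\,\big|\,\uptheta,\mathtt{X},\breve{\mathtt{Z}}_{k-1}\bigr)
= p\bigl(\mathtt{Z}_k\,\big|\,\mathtt{X},\breve{\mathtt{Z}}_{k-1}\bigr)\prod_{i:\mathtt{Z}_{i,k}=1} p\bigl(\mathtt{Y}_{i,k}=\hat{\mathtt{Z}}_{i,k}\,\big|\,\uptheta,\mathtt{X},\breve{\mathtt{Z}}_{k-1}\bigr),
\end{align*}
where the marginalization in $\mathtt{Y}_{i,k}$ for $\{i:\mathtt{Z}_{i,k}=0\}$ produces factors of $1$ because the $\mathtt{Y}_{i,k}$'s are proper conditional densities and are independent of $\mathtt{Z}_k$. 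Differentiating the log-density in $\uptheta$ then kills the $\mathtt{Z}_k$-factor and gives the score $\sum_i \mathtt{Z}_{i,k}\,\partial_{\uptheta}\ln p(\mathtt{Y}_{i,k}|\uptheta,\mathtt{X},\breve{\mathtt{Z}}_{k-1})$.

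Computing the expected outer product of this score, the off-diagonal $(i\neq j)$ cross-terms vanish because each $\partial_{\uptheta}\ln p(\mathtt{Y}_{i,k}|\cdot)$ has conditional mean zero and the $\mathtt{Y}_{i,k}$ are conditionally independent (and independent of the $\mathtt{Z}_k$ weights). The diagonal $(i=i)$ terms, using $\mathtt{Z}_{i,k}^2=\mathtt{Z}_{i,k}$, reduce to $\mathbb{E}[\mathtt{Z}_{i,k}\,s_i s_i^\top|\mathtt{X},\breve{\mathtt{Z}}_{k-1}]$; applying the same factorization to the single-sensor density $p(\hat{\mathtt{Z}}_{i,k}|\uptheta,\mathtt{X},\breve{\mathtt{Z}}_{k-1})$ identifies this with $\mathcal{I}_{\mathtt{Z}_{i,k}\mathtt{Y}_{i,k}}(\uptheta|\mathtt{X},\breve{\mathtt{Z}}_{k-1})$, completing the per-slice decomposition and hence the lemma. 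The main obstacle I anticipate is the density bookkeeping in the factorization above: one has to justify rigorously (possibly via a partition on the realization of $\mathtt{Z}_k$, since $\hat{\mathtt{Z}}_k$ mixes discrete and continuous components) that the $\uptheta$-dependence of $p(\hat{\mathtt{Z}}_k|\cdot)$ enters only through the surviving product of $\mathtt{Y}_{i,k}$-densities and that the resulting score is additive in $i$.
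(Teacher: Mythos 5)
Your proposal is correct, and its skeleton matches the paper's: both first apply the chain rule over time to reduce to $\sum_k \mathcal{I}_{\hat{\mathtt{Z}}_k}(\uptheta|\mathtt{X},\breve{\mathtt{Z}}_{k-1})$, and both exploit the three same facts — that $\mathtt{Z}_{i,k}$ is recoverable from $\mathtt{Z}_{i,k}\mathtt{Y}_{i,k}$ (condition i)), the conditional independence structure (condition ii)), and $\mathcal{I}_{\mathtt{Z}_k}(\uptheta|\mathtt{X},\breve{\mathtt{Z}}_{k-1})=0$ (condition iii)). Where you differ is the mechanism for the per-slice split: the paper stays entirely at the level of the abstract identities in Corollary A.1, writing $\mathcal{I}_{\hat{\mathtt{Z}}_k}(\uptheta|\mathtt{X},\breve{\mathtt{Z}}_{k-1})=\mathcal{I}_{\hat{\mathtt{Z}}_k,\mathtt{Z}_k}(\uptheta|\mathtt{X},\breve{\mathtt{Z}}_{k-1})=\mathcal{I}_{\hat{\mathtt{Z}}_k}(\uptheta|\mathtt{X},\breve{\mathtt{Z}}_{k-1},\mathtt{Z}_k)$ (inserting $\mathtt{Z}_k$ costs nothing by i) and iii)), splitting over $i$ by conditional independence, and then deleting the irrelevant $\mathtt{Z}_{j,k}$, $j\neq i$, from the conditioning by the reverse use of Corollary A.1 b), c); you instead factor the mixed discrete--continuous density of $\hat{\mathtt{Z}}_k$ explicitly and compute the score. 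Your route is more computational but makes visible exactly where each hypothesis enters; the paper's route avoids the density bookkeeping you flag as the main obstacle (though that regularity is implicitly needed for the chain rule anyway). Two small points to tighten: condition iii) gives only that the score $\partial_\uptheta \ln p(\mathtt{Z}_k|\uptheta,\mathtt{X},\breve{\mathtt{Z}}_{k-1})$ vanishes almost surely at the given $\uptheta$, not that the conditional law of $\mathtt{Z}_k$ is literally $\uptheta$-free — but the vanishing score is all your factorization argument uses, so this is only a wording issue; and your identification of the diagonal term with $\mathcal{I}_{\mathtt{Z}_{i,k}\mathtt{Y}_{i,k}}(\uptheta|\mathtt{X},\breve{\mathtt{Z}}_{k-1})$ tacitly needs the marginal score of $\mathtt{Z}_{i,k}$ alone to vanish, which does follow from iii) by monotonicity of Fisher information (or because the marginal score is the conditional expectation of the joint score), so the gap is easily closed.
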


\begin{proof}
	Note that by i), we have $ \mathtt{Z}_{i,k} $ can be uniquely determined by $ \mathtt{Z}_{i,k} \mathtt{Y}_{i,k} $. Then, by ii), given $ \uptheta $, $ \mathtt{X} $, $ \breve{\mathtt{Z}}_{k-1} $ and $ \mathtt{Z}_k $, we have $ \hat{\mathtt{Z}}_k $ is independent. Hence, by \cref{coro:chain}, 
	\begin{align}\label{eq:I_k=sum}
		\mathcal{I}_{\breve{\mathtt{Z}}_{\infty}} (\uptheta|\mathtt{X})
		= & \sum_{k=1}^\infty\mathcal{I}_{\hat{\mathtt{Z}}_{k}} (\uptheta|\mathtt{X}, \breve{\mathtt{Z}}_{k-1}) 
		= \sum_{k=1}^\infty\mathcal{I}_{\hat{\mathtt{Z}}_{k},\mathtt{Z}_k} (\uptheta|\mathtt{X}, \breve{\mathtt{Z}}_{k-1}) \nonumber\\
		= & \sum_{k=1}^\infty\mathcal{I}_{\hat{\mathtt{Z}}_{k}} (\uptheta|\mathtt{X}, \breve{\mathtt{Z}}_{k-1},\mathtt{Z}_k) + \mathcal{I}_{\mathtt{Z}_k}(\uptheta|\mathtt{X},\breve{\mathtt{Z}}_{k-1}) \nonumber\\
		= & \sum_{k=1}^\infty\sum_{i = 1}^{N} \mathcal{I}_{\mathtt{Z}_{i,k} \mathtt{Y}_{i,k}} (\uptheta|\mathtt{X},\breve{\mathtt{Z}}_{k-1},\mathtt{Z}_k). 
	\end{align}
	By iii), given $ \uptheta $, $ \mathtt{X} $, $ \breve{\mathtt{Z}}_{k-1} $ and $ \mathtt{Z}_{i,k} $, we have $ \mathtt{Y}_{i,k} $ is independent of $ \mathtt{Z}_{j,k} $ for all $ j \neq i $. Therefore, by \cref{coro:chain}, 
	\begin{align*}
		& \mathcal{I}_{\mathtt{Z}_{i,k} \mathtt{Y}_{i,k}} (\uptheta|\mathtt{X},\breve{\mathtt{Z}}_{k-1},\mathtt{Z}_k)
		= \mathcal{I}_{\mathtt{Z}_{i,k} \mathtt{Y}_{i,k}} (\uptheta|\mathtt{X},\breve{\mathtt{Z}}_{k-1},\mathtt{Z}_{i,k}) \\
		= & \mathcal{I}_{\mathtt{Z}_{i,k} \mathtt{Y}_{i,k},\mathtt{Z}_{i,k}} (\uptheta|\mathtt{X},\breve{\mathtt{Z}}_{k-1}) - \mathcal{I}_{\mathtt{Z}_{i,k}} (\uptheta|\mathtt{X},\breve{\mathtt{Z}}_{k-1}) \\
		= & \mathcal{I}_{\mathtt{Z}_{i,k} \mathtt{Y}_{i,k}} (\uptheta|\mathtt{X},\breve{\mathtt{Z}}_{k-1}),
	\end{align*}
	which together with \eqref{eq:I_k=sum} implies the lemma. 
\end{proof}

\begin{lemmax}\label{lemma:+}
	For a matrix $ H $, set $ Q = H^\top H $ and $ J = Q^+ Q $. Then, $ HJ=H $. 
\end{lemmax}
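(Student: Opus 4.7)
The plan is to exploit two standard identities: that $Q^+Q$ is the orthogonal projector onto $\operatorname{range}(Q^\top)$ (equivalently, the complement of $\ker Q$), and that $\ker(H^\top H)=\ker H$. Combining these two facts yields $HJ=H$ almost immediately.

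First, I would recall that for any real matrix $A$, the product $A^+A$ equals the orthogonal projection onto $\operatorname{range}(A^\top)=(\ker A)^\perp$. Applying this with $A=Q=H^\top H$, which is symmetric, gives that $J=Q^+Q$ is the orthogonal projector onto $(\ker Q)^\perp$; equivalently, $I_n-J$ is the orthogonal projector onto $\ker Q$.

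Second, I would prove the elementary identity $\ker(H^\top H)=\ker H$. The inclusion $\ker H\subseteq \ker(H^\top H)$ is immediate. For the reverse, if $H^\top H x=0$ then $\|Hx\|^2=x^\top H^\top H x=0$, so $Hx=0$. Combining with the previous step, $I_n-J$ is the projector onto $\ker H$, so $H(I_n-J)=0$, hence $HJ=H$. This is the whole proof; I do not foresee any real obstacle, since both supporting facts are textbook results about $Q^+$ for a symmetric positive semidefinite matrix.

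As an alternative (in case a more self-contained argument is preferred over invoking the properties of $Q^+$), I would diagonalize $Q=U\Lambda U^\top$ with $\Lambda=\operatorname{diag}(\lambda_1,\ldots,\lambda_r,0,\ldots,0)$ where $\lambda_1,\ldots,\lambda_r>0$. Then $Q^+=U\Lambda^+ U^\top$ with $\Lambda^+=\operatorname{diag}(\lambda_1^{-1},\ldots,\lambda_r^{-1},0,\ldots,0)$, and $J=U\operatorname{diag}(I_r,0)U^\top$. Writing $H=(HU)U^\top$ and noting that the last $n-r$ columns of $HU$ lie in $\ker H$ (because they span $\ker Q=\ker H$), one sees $HU\operatorname{diag}(I_r,0)=HU$, from which $HJ=H$ follows. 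Either presentation is short; I would lead with the projector-based argument for brevity.
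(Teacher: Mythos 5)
Your proof is correct, but it follows a different route from the paper's. The paper's argument is purely algebraic: invoking Theorem 1 of the cited Greville (1966) reference, it writes $H=(H^\top)^+H^\top H$, then inserts $J=Q^+Q$ and uses the Moore--Penrose identity $QQ^+Q=Q$ to collapse $HJ=(H^\top)^+QQ^+Q=(H^\top)^+Q=H$. You instead use the geometric characterization that $Q^+Q$ is the orthogonal projector onto $(\ker Q)^\perp$ together with the elementary identity $\ker(H^\top H)=\ker H$, concluding $H(I_n-J)=0$; your spectral-decomposition alternative is a concrete restatement of the same idea. Both arguments are short and valid; the paper's buys brevity by citing a product formula for generalized inverses, while yours is more self-contained and makes the underlying geometry (that $J$ projects onto $(\ker H)^\perp$, which is exactly why multiplying $H$ by $J$ loses nothing) explicit, at the cost of recalling the projector characterization of $A^+A$ and the kernel identity.
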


\begin{proof}
	By Theorem 1 of \cite{Greville1966generailized}, 
	\begin{align*}
		HJ
		= & (H^\top)^+ H^\top H Q^+ Q
		= (H^\top)^+ Q Q^+ Q \\
		= & (H^\top)^+ Q 
		= (H^\top)^+ H^\top H
		= H. \qedhere
	\end{align*}
%
\end{proof}

\begin{lemmax}\label{lemma:+eigen}
	For a positive semi-definite matrix $ Q $, set $ J = Q^+ Q $. Then, $ \lambda_{\max}(J-\beta Q) = 1 - \beta \lambda_{\min}^+(Q) $, where $ \beta \in \left[0,\frac{1}{\lambda_{\min}^+(Q)}\right] $, and $ \lambda_{\max}(\cdot) $, $ \lambda_{\min}^+(\cdot) $ are defined in \cref{thm:privacy/finite}. 
\end{lemmax}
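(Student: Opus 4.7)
The plan is to diagonalize everything in a single orthonormal basis and then just read off the spectrum of $J-\beta Q$. Since $Q$ is symmetric positive semi-definite, write $Q = U\Lambda U\tr$ with $U$ orthogonal and $\Lambda = \diag(\lambda_1,\ldots,\lambda_n)$ where each $\lambda_i\geq 0$. By the usual formula for the Moore--Penrose pseudoinverse of a symmetric matrix, $Q^+ = U\Lambda^+ U\tr$, where $\Lambda^+$ is the diagonal matrix obtained from $\Lambda$ by inverting every positive entry and leaving every zero entry in place.

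Next I would compute $J = Q^+Q = U(\Lambda^+\Lambda)U\tr = UDU\tr$, where $D$ is the diagonal matrix with $D_{ii}=1$ if $\lambda_i>0$ and $D_{ii}=0$ if $\lambda_i=0$. Therefore
\begin{equation*}
J-\beta Q \;=\; U(D-\beta\Lambda)U\tr,
\end{equation*}
which is symmetric and is diagonalized by the same $U$. Its eigenvalues are thus precisely the diagonal entries of $D-\beta\Lambda$, namely $1-\beta\lambda_i$ on each index with $\lambda_i>0$ and $0$ on each index with $\lambda_i=0$.

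Finally I would maximize over these eigenvalues. For every index with $\lambda_i>0$ one has $\lambda_i\geq\lambda_{\min}^+(Q)$, so $1-\beta\lambda_i\leq 1-\beta\lambda_{\min}^+(Q)$, with equality at any index where $\lambda_i=\lambda_{\min}^+(Q)$. The hypothesis $\beta\in[0,1/\lambda_{\min}^+(Q)]$ guarantees $1-\beta\lambda_{\min}^+(Q)\geq 0$, so this value also dominates the zero eigenvalues contributed by $\ker Q$. Combining, $\lambda_{\max}(J-\beta Q) = 1-\beta\lambda_{\min}^+(Q)$.

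This argument is essentially routine linear algebra, so I do not anticipate any real obstacle; the only subtlety worth flagging is the role of the interval constraint on $\beta$, which is exactly what prevents the positive-eigenvalue branch $1-\beta\lambda_{\min}^+(Q)$ from dropping below the kernel branch $0$ and thereby changing which eigenvalue is the largest.
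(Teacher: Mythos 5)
Your proof is correct and takes essentially the same route as the paper: both arguments rest on the fact that $Q$, $Q^+Q$ and hence $J-\beta Q$ are simultaneously diagonalized, so the spectrum of $J-\beta Q$ consists of $1-\beta\lambda$ over the positive eigenvalues $\lambda$ of $Q$ together with $0$ on $\ker Q$ (the paper cites Scroggs--Odell for the shared eigenvectors, while you compute the spectral decomposition directly). Your explicit remark that $\beta\le 1/\lambda_{\min}^+(Q)$ keeps $1-\beta\lambda_{\min}^+(Q)\ge 0$, so it dominates the kernel eigenvalues, fills in a comparison the paper leaves implicit.
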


\begin{proof}
	By Theorem 5 of \cite{Scroggs1966alternate}, all the eigenvectors $ v $ for $ Q $ are eigenvectors for $ J-\beta Q $. If $ Q v = 0 $, then $ (J - \beta Q) v = 0 $. If $ Q v = \lambda v $ for some $ \lambda > 0 $, then $ (J - \beta Q) v = (1-\beta \lambda) v $. The lemma is thereby proved. 
\end{proof}

\begin{lemmax}\label{lemma:sum_prod}
	If sequences $ \{a_k:k\in\mathbb{N}\} $, $ \{b_k:k\in\mathbb{N}\} $ and $ \{\eta_k:k\in\mathbb{N}\} $ satisfy 
	\begin{enumerate}[label={\roman*)},leftmargin=1.4em]
		\item $ a_k \in [0,\bar{a}]$ for some $ \bar{a} < 1 $, and $ \eta_k > 0 $;
		\item $ \sum_{t=1}^{\infty}\prod_{l=1}^{t} \eta_t (1-a_l)^p < \infty $ for some positive integer $ p $;
		\item $ b_k > 0 $ and $ \sum_{k=1}^{\infty} b_k < \infty $,
	\end{enumerate}
	then $ \sum_{t=k}^{\infty} \prod_{l=k}^{t} \eta_t (1-a_l+b_l)^p < \infty $. 
\end{lemmax}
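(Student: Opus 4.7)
The plan is to factor the perturbed product multiplicatively and reduce the claim to the hypothesis (ii). First I would use that $a_l \le \bar a < 1$, which gives $1 - a_l \ge 1-\bar a > 0$, so the identity
\[
(1-a_l+b_l)^p = (1-a_l)^p\left(1 + \tfrac{b_l}{1-a_l}\right)^p
\]
is valid for every $l$. Substituting into the product in the conclusion yields
\[
\prod_{l=k}^{t}(1-a_l+b_l)^p \;=\; \prod_{l=k}^{t}(1-a_l)^p \cdot \prod_{l=k}^{t}\left(1+\tfrac{b_l}{1-a_l}\right)^p.
\]

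The second factor can be bounded uniformly in $k$ and $t$. Using $\frac{b_l}{1-a_l} \le \frac{b_l}{1-\bar a}$ together with $1+x \le e^x$ and hypothesis (iii), I would estimate
\[
\prod_{l=k}^{t}\left(1+\tfrac{b_l}{1-a_l}\right)^p \;\le\; \exp\!\left(\tfrac{p}{1-\bar a}\sum_{l=1}^{\infty} b_l\right) \;=:\; C \;<\; \infty.
\]
So $\prod_{l=k}^{t}(1-a_l+b_l)^p \le C\prod_{l=k}^{t}(1-a_l)^p$ for every $k\le t$.

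Next I would relate the shifted product to the one starting at $l=1$ by writing
\[
\prod_{l=k}^{t}(1-a_l)^p \;=\; \frac{\prod_{l=1}^{t}(1-a_l)^p}{\prod_{l=1}^{k-1}(1-a_l)^p} \;\le\; \frac{1}{(1-\bar a)^{(k-1)p}} \prod_{l=1}^{t}(1-a_l)^p,
\]
where the denominator is bounded below by $(1-\bar a)^{(k-1)p} > 0$ by (i). Reading condition (ii) with $\eta_t$ pulled out in front of the product, I then conclude
\[
\sum_{t=k}^{\infty} \eta_t \prod_{l=k}^{t}(1-a_l+b_l)^p \;\le\; \frac{C}{(1-\bar a)^{(k-1)p}} \sum_{t=1}^{\infty} \eta_t \prod_{l=1}^{t}(1-a_l)^p \;<\; \infty,
\]
which is the desired summability.

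The argument is essentially bookkeeping, so there is no genuinely hard step; the only point that deserves care is making sure the factorization is legitimate term-by-term, which is guaranteed by the uniform bound $a_l \le \bar a < 1$ from (i). That same bound is what makes both the product $\prod_l(1+b_l/(1-a_l))^p$ converge (via summability of $b_l$) and the shift in the starting index of $\prod(1-a_l)^p$ cost only a finite, $k$-dependent constant.
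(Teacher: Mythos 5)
Your proof is correct and follows essentially the same route as the paper: factor $(1-a_l+b_l)^p=(1-a_l)^p\bigl(1+\frac{b_l}{1-a_l}\bigr)^p$, bound the correction product uniformly via $\sum_l b_l<\infty$ and $1-a_l\geq 1-\bar a$, and reduce the shifted sum to hypothesis (ii) by dividing by the positive factor $\prod_{l=1}^{k-1}(1-a_l)^p$. The only cosmetic differences are your explicit use of $1+x\leq e^x$ and the bound $(1-\bar a)^{(k-1)p}$ where the paper simply keeps the finite constant $\prod_{l=1}^{k-1}(1-a_l)^p$.
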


\begin{proof}
	Firstly, we have 
	\begin{align*}
		 \sum_{t=k}^{\infty}\prod_{l=k}^{t} \eta_t (1-a_l)^p
		= & \frac{\sum_{t=k}^{\infty}\prod_{l=1}^{t} \eta_t (1-a_l)^p}{\prod_{l=1}^{k-1} (1-a_l)^p}\\
		\leq & \frac{\sum_{t=1}^{\infty}\prod_{l=1}^{t}\eta_t (1-a_l)^p}{\prod_{l=1}^{k-1}(1-a_l)^p} < \infty. 
	\end{align*}
	
	Then, one can get
	\begin{align*}
		& \sum_{t=k}^{\infty} \prod_{l=k}^{t}\eta_t (1-a_l+b_l)^p \\
		\leq & \sum_{t=k}^{\infty}\prod_{l=k}^{t} \eta_t (1-a_l)^p\left( 1+\frac{b_l}{1-\bar{a}} \right)^p \\
		\leq & \left(\sum_{t=k}^{\infty}\prod_{l=k}^{t} \eta_t (1-a_l)^p\right)\left( \prod_{t=1}^{\infty} \left( 1+\frac{b_t}{1-\bar{a}} \right) \right)^p
		< \infty. \qedhere
	\end{align*}
\end{proof}

\begin{lemmax}\label{lemma:sum_exp}
	If $ c, k_0 > 0 $, $ g \geq 0 $ and $ p \in (0,1] $ satisfy $ cpk^{p}_{0} \geq 1-p-g $, then
	\begin{align*}
		& \sum_{t=1}^{k} \frac{\exp\left( -c(t+k_0)^p \right)}{(t+k_0)^g} \\ 
		\leq & \frac{k_0^{1-p-g} \exp(-ck_0^p) - (k+k_0)^{1-p-g} \exp(-c(k+k_0)^p )}{cp-(1-p-g)k_0^{-p}}. 
	\end{align*}
\end{lemmax}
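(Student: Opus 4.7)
The plan is to bound the sum by an integral and then bound the integral by exhibiting a primitive that matches the shape of the right-hand side. First, note that $\phi(u) := u^{-g}\exp(-cu^p)$ is monotonically decreasing on $(0,\infty)$, since
\begin{align*}
\phi'(u) = -u^{-g-1}\exp(-cu^p)\,(g+cpu^p) < 0.
\end{align*}
The standard Darboux comparison for a decreasing function then yields $\sum_{t=1}^{k}\phi(t+k_0) \leq \int_{k_0}^{k+k_0}\phi(u)\,du$, which reduces the problem to bounding this integral by $[G(k_0)-G(k+k_0)]/D$, where I use the shorthand $G(u):=u^{1-p-g}\exp(-cu^p)$ and $D:=cp-(1-p-g)k_0^{-p}$.

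The identification of $G$ is motivated by a direct computation:
\begin{align*}
-G'(u) = u^{-g}\exp(-cu^p)\bigl[cp-(1-p-g)u^{-p}\bigr] = \phi(u)\,\gamma(u),
\end{align*}
with $\gamma(u):=cp-(1-p-g)u^{-p}$ and $D=\gamma(k_0)$. The hypothesis $cpk_0^p\geq 1-p-g$ is precisely what guarantees $D>0$. The key pointwise step is then $\phi(u)\leq -G'(u)/D$ for $u\geq k_0$, which is equivalent to $\gamma(u)\geq D$. Using $\gamma(u)-D=(1-p-g)(k_0^{-p}-u^{-p})$ together with $k_0^{-p}\geq u^{-p}$ on $[k_0,\infty)$, this is immediate when $1-p-g\geq 0$. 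Integrating $\phi\leq -G'/D$ over $[k_0,k+k_0]$ then gives
\begin{align*}
\int_{k_0}^{k+k_0}\phi(u)\,du \leq \frac{G(k_0)-G(k+k_0)}{D},
\end{align*}
and chaining with the initial Darboux estimate produces the stated bound.

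The hard part is the regime $1-p-g<0$, which I expect to arise in the application of the lemma inside the proof of \cref{thm:privacy/finite} (where $1-p-g=\delta_i-2\epsilon_{ij}$ can be negative). In that regime the inequality $\gamma(u)\geq D$ reverses strictly for $u>k_0$, so the pointwise majorization $\phi\leq -G'/D$ fails and in fact the integral $\int_{k_0}^{k+k_0}\phi$ can exceed $[G(k_0)-G(k+k_0)]/D$. The conclusion nevertheless holds, because the discrete-to-integral slack $\int\phi-\sum\phi$ absorbs the excess. A natural rescue is to compare $\sum\phi(t+k_0)$ directly against $\sum\phi(t+k_0)\gamma(t+k_0)$: using $-G'=\phi\gamma$ is decreasing (which follows from $G$ being convex on $(0,\infty)$, as $(\ln G)''>0$), one obtains $\sum_{t=1}^k\phi(t+k_0)\gamma(t+k_0)\leq \int_{k_0}^{k+k_0}\phi(u)\gamma(u)\,du=G(k_0)-G(k+k_0)$, and I would then try to control $D\sum\phi-\sum\phi\gamma=(p+g-1)\sum\phi(t+k_0)[k_0^{-p}-(t+k_0)^{-p}]$ by the gap $D[\int\phi-\sum\phi]$, perhaps via an Abel/summation-by-parts rearrangement using the decomposition $cp-(1-p-g)u^{-p}=D+(1-p-g)(k_0^{-p}-u^{-p})$. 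Closing this last estimate uniformly in the sign of $1-p-g$ is what I anticipate to be the main technical obstacle; the clean pointwise argument handles $1-p-g\geq 0$ in two lines.
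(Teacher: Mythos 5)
Your ``easy case'' $1-p-g\geq 0$ is in fact the whole of the paper's proof. The paper bounds the sum by $\int_{k_0}^{k+k_0} t^{-g}e^{-ct^p}\,\mathrm{d}t$ and then multiplies the integrand by $\bigl(cp-(1-p-g)t^{-p}\bigr)/\bigl(cp-(1-p-g)k_0^{-p}\bigr)$, asserting this factor is at least $1$ on $[k_0,k+k_0]$, after which the new integrand is exactly $-\frac{\mathrm{d}}{\mathrm{d}t}\bigl[t^{1-p-g}e^{-ct^p}\bigr]$ over the constant denominator and the integral telescopes. That factor is $\geq 1$ precisely when $(1-p-g)(k_0^{-p}-t^{-p})\geq 0$, i.e. when $1-p-g\geq 0$; the paper never splits off the other case. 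So, up to presentation (you phrase it through the antiderivative $G$ and the pointwise bound $\phi\leq -G'/D$), your two-line argument and the paper's proof coincide on that range.

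The genuine gap is the case $1-p-g<0$, which you leave open, and it cannot be closed: your assertion that ``the conclusion nevertheless holds'' is false under the lemma's stated hypotheses. Take $p=g=1$ (so $1-p-g=-1$), $c=0.01$, $k_0=1$: the hypothesis $cpk_0^p\geq 1-p-g$ holds trivially, the right-hand side is at most $e^{-0.01}/1.01<0.99$ for every $k$, yet the left-hand side $\sum_{t=1}^{k}e^{-0.01(t+1)}/(t+1)$ already exceeds $2$ at $k=20$ (bound each term below by $e^{-0.21}/(t+1)$) and tends to about $3.6$. Structurally, the discrete-to-integral slack you hope to exploit is at most $\phi(k_0)$ (telescoping the per-cell errors of a decreasing integrand), whereas the deficit $\int\phi-\frac{1}{D}\int\phi\gamma=\frac{p+g-1}{D}\int\phi(u)\bigl(k_0^{-p}-u^{-p}\bigr)\mathrm{d}u$ is a fixed fraction of $\int\phi$ once $ck_0^p$ is small, and $\int\phi$ can be made arbitrarily large relative to $\phi(k_0)$; no summation-by-parts rearrangement will recover the stated constant. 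So the honest outcome is: your proof is complete exactly when $1-p-g\geq 0$ (where it matches the paper), and the lemma as stated -- together with the paper's own proof of it -- actually requires the additional hypothesis $1-p-g\geq 0$ or a reformulated bound in the other regime. This is worth flagging, since \cref{thm:privacy/finite} invokes the lemma with $1-p-g=\delta_i-2\epsilon_{ij}$, which can be negative (there $k_0=k\to\infty$, so an asymptotic variant would suffice, but that is not what this lemma provides).
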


\begin{proof}
	From the condition of the lemma, we have
	\begin{align*}
		& \frac{\sum_{t=1}^{k} \exp\left( -c(t+k_0)^p \right)}{(t+k_0)^g}
		\leq \int_{k_0}^{k+k_0} \frac{\exp\left( -ct^p \right)}{t^g} \text{d}t \\
		\leq & \int_{k_0}^{k+k_0} \frac{cp - (1-p-g)t^{-p}}{cp - (1-p-g)k_0^{-p}} \frac{\exp\left( -ct^p \right)}{t^g} \text{d}t \\
		= & \frac{\int_{k_0}^{k+k_0} \! cp t^{-g} \exp\left( -c t^p \right) - (1-p-g) t^{-p-g} \exp\left( -c t^p \right) \text{d}t}{cp-(1-p-g)k_0^{-p}} \\
		= & \frac{k_0^{1-p-g} \exp(-ck_0^p) - (k+k_0)^{1-p-g} \exp(-c(k+k_0)^p )}{cp-(1-p-g)k_0^{-p}}. 
	\end{align*}
	The lemma is thereby proved. 
\end{proof}

\begin{lemmax}\label{lemma:conv}
	Assume that
	\begin{enumerate}[label={\roman*)},leftmargin=1.4em]
		\item $ \{\alpha_k:k\in\mathbb{N}\} $, $ \{\beta_k:k\in\mathbb{N}\} $ and $ \{\gamma_k:k\in\mathbb{N}\} $ are posi- tive sequences  satisfying $ \sum_{k=1}^{\infty} \alpha_k = \infty $, $ \sum_{k=1}^{\infty} \beta_k^2 < \infty $ and $ \sum_{k=1}^{\infty} \gamma_k^2 < \infty $;
		\item $ \{\mathcal{F}_k:k\in\mathbb{N}\} $ is a $ \sigma $-algebra sequence with $ \mathcal{F}_{k-1} \subseteq \mathcal{F}_k $ for all $ k $;
		\item $ \{\mathtt{W}_k,\mathcal{F}_k:k\in\mathbb{N}\} $ is a sequence of adaptive random variables satisfying $ \sum_{k=1}^\infty \Abs{\mE\left[ \mathtt{W}_k \middle|  \mathcal{F}_{k-1} \right]} < \infty $ and $ \mE \left[\Absl{\mathtt{W}_k-\mE\left[ \mathtt{W}_k \middle|  \mathcal{F}_{k-1} \right]}^{\rho} \middle| \mathcal{F}_{k-1} \right] = O\left( \beta_k^\rho \right) $ almost surely for some $ \rho > 2 $;
		\item $ \{\mathtt{U}_k:k\in\mathbb{N}\} $ is a sequence with $ \sum_{k=1}^{\infty} \alpha_k^2 \Abs{\mathtt{U}_k}^2 < \infty $. And, $ \mathtt{U}_k $ is $ \mathcal{F}_{k-1} $-measurable;
		\item $ \mathtt{U}_k + \mathtt{U}_k^\top \geq 2 a I_n $ for some $ p \in \mathbb{N} $, $ a > 0 $ and all $ k \in \mathbb{N} $ almost surely; 
		\item $ \{\mathtt{X}_k,\mathcal{F}_k:k\in\mathbb{N}\} $ is a sequence of adaptive random variables with
		\begin{equation}\label{eq:lemma/state}
			\mathtt{X}_k = \left(I_n - \alpha_k \mathtt{U}_k + O(\gamma_k) \right) \mathtt{X}_{k-1} + \mathtt{W}_k, \ \as
		\end{equation}
	\end{enumerate}
	Then, $ \mathtt{X}_k $ converges to $ 0 $ almost surely.
\end{lemmax}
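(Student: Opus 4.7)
The plan is a Lyapunov plus Robbins--Siegmund argument that parallels the proof of \cref{thm:conv}. Set $V_k := \Abs{\mathtt{X}_k}^2$ and decompose $\mathtt{W}_k = \bar{\mathtt{W}}_k + \tilde{\mathtt{W}}_k$ with $\bar{\mathtt{W}}_k = \mE[\mathtt{W}_k\mid\mathcal{F}_{k-1}]$, so that by iii) the drift satisfies $\sum_k \Abs{\bar{\mathtt{W}}_k} < \infty$ a.s.\ and the centred part satisfies $\mE[\Abs{\tilde{\mathtt{W}}_k}^2\mid\mathcal{F}_{k-1}] = O(\beta_k^2)$. Writing $M_k := I_n - \alpha_k \mathtt{U}_k + R_k$ with $\Abs{R_k} = O(\gamma_k)$, I would expand
\begin{equation*}
V_k = \Abs{M_k \mathtt{X}_{k-1}}^2 + 2 \mathtt{W}_k\tr M_k \mathtt{X}_{k-1} + \Abs{\mathtt{W}_k}^2,
\end{equation*}
and use the symmetric-part lower bound $\mathtt{U}_k + \mathtt{U}_k\tr \geq 2 a I_n$ from v) to extract the leading decay $-2 a \alpha_k V_{k-1}$ inside $\Abs{M_k \mathtt{X}_{k-1}}^2$. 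The $\alpha_k^2 \Abs{\mathtt{U}_k}^2 V_{k-1}$ remainder is summable by iv).

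Taking $\mE[\cdot \mid \mathcal{F}_{k-1}]$, the martingale-difference cross term $\tilde{\mathtt{W}}_k\tr M_k \mathtt{X}_{k-1}$ vanishes; the drift coupling $2 \bar{\mathtt{W}}_k\tr M_k \mathtt{X}_{k-1}$ is split by Young's inequality into pieces whose $V_{k-1}$-multipliers are $O(\Abs{\bar{\mathtt{W}}_k})$ and therefore a.s.\ summable; and $\mE[\Abs{\mathtt{W}_k}^2\mid\mathcal{F}_{k-1}]$ contributes $O(\beta_k^2) + O(\Abs{\bar{\mathtt{W}}_k}^2)$, again summable. Collecting terms gives a recursion
\begin{equation*}
\mE[V_k \mid \mathcal{F}_{k-1}] \leq (1 + \lambda_k) V_{k-1} - 2 a \alpha_k V_{k-1} + \mu_k,
\end{equation*}
with $\sum_k \lambda_k < \infty$ and $\sum_k \mu_k < \infty$ a.s. Theorem~1 of \cite{robbins1971convergence} then delivers $V_k \to V_\infty < \infty$ a.s.\ and $\sum_k \alpha_k V_{k-1} < \infty$ a.s.; combined with $\sum_k \alpha_k = \infty$ from i), this forces $\liminf_k V_k = 0$, hence $V_\infty = 0$, i.e., $\mathtt{X}_k \to 0$ a.s.

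The main obstacle will be the $O(\gamma_k)$ perturbation in $M_k$: since $\gamma_k$ is only $\ell^2$-summable under i), the cross term $2 \mathtt{X}_{k-1}\tr (I_n - \alpha_k \mathtt{U}_k)\tr R_k \mathtt{X}_{k-1}$ arising in $M_k\tr M_k$ is of nominal order $\gamma_k V_{k-1}$ and does not directly slot into $\lambda_k$. I plan to handle this by a two-step Robbins--Siegmund: first establish a.s.\ boundedness of $\{V_k\}$ via the auxiliary weighted Lyapunov $V_k \prod_{t\le k}(1 + c \gamma_t^2)^{-1}$ (which neutralises the non-summable drift), and then absorb the first-order $\gamma_k$ contribution against a fraction of the $2 a \alpha_k$ decay through a Young-type splitting, leaving only a $\gamma_k^2 V_{k-1}$ residue that is summable by boundedness and i). The moment condition $\rho>2$ in iii) is not essential for convergence itself but gives headroom for a Burkholder--Davis--Gundy refinement if the Young estimates need sharpening.
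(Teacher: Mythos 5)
Your main line of attack --- working directly with $V_k=\Abs{\mathtt{X}_k}^2$, conditioning on $\mathcal{F}_{k-1}$, handling the drift cross term $2\bar{\mathtt{W}}_k\tr M_k\mathtt{X}_{k-1}$ by Young's inequality with the a.s.\ summable multiplier $\Abs{\bar{\mathtt{W}}_k}$, and closing with Robbins--Siegmund (convergence of $V_k$ plus $\sum_k\alpha_k V_{k-1}<\infty$ and $\sum_k\alpha_k=\infty$) --- is sound and is a genuinely different, more self-contained route than the paper's. The paper instead subtracts an auxiliary sequence $\mathtt{Y}_k$ that absorbs the conditional means $\mE[\mathtt{W}_k\mid\mathcal{F}_{k-1}]$, proves $\mathtt{Y}_k\to0$ by a deterministic recursion lemma (Lemma 2 of \cite{WangYQ2024Tailoring}), and then treats the martingale-difference-driven remainder $\mathtt{X}_k'=\mathtt{X}_k-\mathtt{Y}_k$ through a conditional second-moment inequality, again closed by the same cited lemma rather than by \cite{robbins1971convergence}. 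Your decomposition avoids the auxiliary sequence and the external lemma; the paper's avoids the Young estimates for the drift. Either split is acceptable, and your $\rho>2$ remark is consistent with the fact that only second conditional moments are needed for convergence.

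The genuine gap is in your treatment of the $O(\gamma_k)$ perturbation, which you correctly single out as the obstacle but then do not actually resolve. (a) Renormalizing by $\prod_{t\le k}(1+c\gamma_t^2)^{-1}$ is a bounded change of Lyapunov function (the product converges because $\sum_t\gamma_t^2<\infty$), so it can only neutralise terms of order $\gamma_k^2 V_{k-1}$, not the first-order term $\gamma_k V_{k-1}$; the modified recursion still carries a non-summable coefficient of order $\gamma_k$, so a.s.\ boundedness does not follow. (b) The Young split against the decay gives $\gamma_k\le\epsilon\alpha_k+\gamma_k^2/(4\epsilon\alpha_k)$, so the residue is $(\gamma_k^2/\alpha_k)V_{k-1}$, not $\gamma_k^2V_{k-1}$, and since i) imposes no relation between $\gamma_k$ and $\alpha_k$ this need not be summable (e.g.\ $\alpha_k=1/k$, $\gamma_k=k^{-3/5}$). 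In fact no absorption argument can work under the hypotheses as literally stated: taking $n=1$, $\mathtt{U}_k=1$, $\mathtt{W}_k\equiv0$, $\alpha_k=1/k$, and the admissible perturbation $+\gamma_k$ with $\gamma_k=k^{-3/5}$ in \eqref{eq:lemma/state} satisfies i)--vi) with $a=1$, yet $\mathtt{X}_k=\mathtt{X}_0\prod_{t\le k}(1-t^{-1}+t^{-3/5})\to\infty$ because $\sum_t(t^{-3/5}-t^{-1})=\infty$. The paper's own proof does not perform the absorption you attempt either: it simply carries the $O(\gamma_k)$ coefficient into the cited Lemma 2 of \cite{WangYQ2024Tailoring}, and in every application in the paper these perturbations are geometrically decaying, hence summable --- consistent with the companion results \cref{lemma:1/k} and \cref{lemma:1/k^c}, which explicitly require $\sum_k\gamma_k<\infty$ and $\sum_k\abs{\gamma_k}<\infty$. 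The honest fix is therefore to read (or strengthen) condition i) as $\sum_k\gamma_k<\infty$; with that, the $\gamma_k V_{k-1}$ term slots directly into your summable $\lambda_k$, and the remainder of your argument is complete.
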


\begin{proof}
	Consider $ \mathtt{X}_k^\prime = \mathtt{X}_k - \mathtt{Y}_k $, where $ Y_0 = 0 $ and $ \mathtt{Y}_k = \left(I_n - \alpha_k \mathtt{U}_k + O(\gamma_k) \right) \mathtt{Y}_{k-1} + \mE\left[ \mathtt{W}_k \middle|  \mathcal{F}_{k-1} \right] $. Since $ \Abs{\mathtt{Y}_k} \leq (1-a\alpha_k+ O(\alpha_k^2\Abs{\mathtt{U}_k}^2 + \gamma_k)) \Abs{\mathtt{Y}_{k-1}} + \Abs{\mE\left[ \mathtt{W}_k \middle|  \mathcal{F}_{k-1} \right]} $, by Lemma 2 of \cite{WangYQ2024Tailoring}, $ \mathtt{Y}_k $ converges to $ 0 $ almost surely. Therefore, it suffices to prove the convergence of $ \mathtt{X}_k^\prime $, which satisfies
	\begin{align}\label{eq:Xprime}
		\!\!\!\!\mathtt{X}_k^\prime = \left(I_n - \alpha_k \mathtt{U}_k + O(\gamma_k) \right) \mathtt{X}_{k-1}^\prime + \mathtt{W}_k - \mE\left[ \mathtt{W}_k \middle|  \mathcal{F}_{k-1} \right]. 
	\end{align}
	
	By \eqref{eq:Xprime}, one can get
	\begin{align}\label{ineq:condiE/lemma_conv}
		& \mE\left[ \Absl{\mathtt{X}_k^\prime}^2 \middle| \mathcal{F}_{k-1} \right] \nonumber\\
		\leq &  \left(  1 - 2 a \alpha_k  + \alpha_k^2  \Abs{\mathtt{U}_k}^2  + O\left( \gamma_k \right)  \right)  \Absl{\mathtt{X}_{k-1}^\prime}^2  + O\left( \beta_k^2 \right).   
	\end{align}
	Then by Lemma 2 of \cite{WangYQ2024Tailoring}, $\mathtt{X}_{k}$ converges to $0$ almost surely.
\end{proof}

\begin{corollaryx}\label{coro:conv}
	If i)-iv) and vi) in \cref{lemma:conv} hold, $ \alpha_k = O\left( \alpha_{k-1} \right) $ and
	\begin{align}\label{condi:PE/coro}
		\frac{1}{p}\sum_{t=k-p+1}^{k} \left(\mathtt{U}_t + \mathtt{U}_t\tr\right) \geq 2 a I_n 
	\end{align}
	for some $ p \in \mathbb{N} $, $ a > 0 $ and all $ k \in \mathbb{N} $ almost surely, 
	then $ \mathtt{X}_k $ converges to $ 0 $ almost surely.
\end{corollaryx}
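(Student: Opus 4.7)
The plan is to reduce Corollary \ref{coro:conv} to Lemma \ref{lemma:conv} by studying the recursion on a sub-sampled time scale of block length $p$. Over any window of $p$ consecutive steps the persistent excitation condition \eqref{condi:PE/coro} plays the role that the pointwise coercivity v) plays in Lemma \ref{lemma:conv}, while the additional hypothesis $\alpha_k = O(\alpha_{k-1})$ makes all step-sizes inside a single window comparable.

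First, as in the proof of Lemma \ref{lemma:conv}, I would split $\mathtt{X}_k = \mathtt{X}_k' + \mathtt{Y}_k$ to reduce to the driftless recursion \eqref{eq:Xprime}. Then set $\mathtt{Y}_m := \mathtt{X}_{mp}'$ and iterate \eqref{eq:Xprime} across the block $T_m := \{(m-1)p+1, \ldots, mp\}$ to obtain
\begin{align*}
    \mathtt{Y}_m = \Pi_m \mathtt{Y}_{m-1} + \bar{\mathtt{W}}_m,
\end{align*}
where $\Pi_m$ is the product of the factors $I_n - \alpha_t \mathtt{U}_t + O(\gamma_t)$, $t \in T_m$, taken in decreasing order of $t$, and $\bar{\mathtt{W}}_m$ is the propagated martingale difference. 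Using $\alpha_k = O(\alpha_{k-1})$, every $\alpha_t$ with $t \in T_m$ is a constant multiple of $\alpha_{mp}$. Expanding $\Pi_m$ and discarding cross products of order $O(\alpha_{mp}^2)$ yields
\begin{align*}
    \Pi_m = I_n - \alpha_{mp} \widetilde{\mathtt{U}}_m + R_m, \quad \widetilde{\mathtt{U}}_m := \alpha_{mp}^{-1} \sum_{t \in T_m} \alpha_t \mathtt{U}_t,
\end{align*}
with $\|R_m\|$ of order $\max_{t \in T_m}(\alpha_t^2 \|\mathtt{U}_t\|^2 + \gamma_t)$, which is summable in $m$. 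Combining the two-sided bounds on $\alpha_t/\alpha_{mp}$ with \eqref{condi:PE/coro} would then deliver $\widetilde{\mathtt{U}}_m + \widetilde{\mathtt{U}}_m^\top \geq 2\tilde a I_n$ for some $\tilde a > 0$.

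The subsampled recursion then satisfies all hypotheses of Lemma \ref{lemma:conv} on the new time index $m$: $\sum_m \alpha_{mp} = \infty$ from i) together with block comparability, $\sum_m \alpha_{mp}^2 \|\widetilde{\mathtt{U}}_m\|^2 < \infty$ from iv), the conditional $\rho$-th moment bound for $\bar{\mathtt{W}}_m$ follows from iii) via the Burkholder inequality, and the coercivity comes from the previous step. Applying Lemma \ref{lemma:conv} yields $\mathtt{X}_{mp}' \to 0$ almost surely, after which I would extend to the full sequence by iterating \eqref{eq:Xprime} at most $p$ additional times from each $\mathtt{X}_{mp}'$ and controlling the residual noise via $\sum \beta_k^2 < \infty$ and Borel--Cantelli.

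The main obstacle is the coercivity estimate $\widetilde{\mathtt{U}}_m + \widetilde{\mathtt{U}}_m^\top \geq 2\tilde a I_n$. Since the ratios $\alpha_t/\alpha_{mp}$ are only two-sided-bounded by $C^{\pm(p-1)}$ and the individual summands $\mathtt{U}_t + \mathtt{U}_t^\top$ may be sign-indefinite, the reweighted sum does not automatically inherit positivity from \eqref{condi:PE/coro}. The cleanest route seems to be to split $\sum_t \alpha_t \mathtt{U}_t = \alpha_{mp}\sum_t \mathtt{U}_t + \sum_t (\alpha_t - \alpha_{mp}) \mathtt{U}_t$, apply \eqref{condi:PE/coro} to the first term to obtain a leading contribution $2ap\alpha_{mp} I_n$, and absorb the correction using the vanishing of $(\alpha_t - \alpha_{mp})/\alpha_{mp}$ as $m \to \infty$ (which holds for the polynomial step-sizes of interest for Theorems \ref{thm:conv}--\ref{thm:conv_rate}) together with the boundedness of $\alpha_t \|\mathtt{U}_t\|$ supplied by iv).
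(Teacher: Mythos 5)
Your overall architecture coincides with the paper's: iterate the recursion \eqref{eq:lemma/state} over windows of length $p$, expand the $p$-fold product to first order so that the aggregated drift is $\sum_{t=k-p+1}^{k}\alpha_t\mathtt{U}_t$ plus a summable remainder, extract coercivity of this aggregate from \eqref{condi:PE/coro}, and then invoke \cref{lemma:conv}. However, your treatment of the one step you yourself flag as the main obstacle has a genuine gap relative to the corollary as stated. First, the claimed two-sided comparability "every $\alpha_t$ with $t\in T_m$ is a constant multiple of $\alpha_{mp}$" does not follow from $\alpha_k=O(\alpha_{k-1})$: that hypothesis only bounds upward jumps, so it yields $\alpha_t\geq C^{-(p-1)}\alpha_{mp}$ for $t\in T_m$ but gives no upper bound on $\alpha_t/\alpha_{mp}$ within the block. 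Second, your proposed repair of the coercivity estimate, splitting $\sum_t\alpha_t\mathtt{U}_t=\alpha_{mp}\sum_t\mathtt{U}_t+\sum_t(\alpha_t-\alpha_{mp})\mathtt{U}_t$ and absorbing the second term because $(\alpha_t-\alpha_{mp})/\alpha_{mp}\to0$, imports an additional assumption (slowly varying, e.g.\ polynomial, step-sizes) that is not part of the corollary's hypotheses; as you concede, it does not cover the stated generality, so the proof as written establishes only a restricted version of the result.

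The paper avoids this issue by never normalizing by $\alpha_{mp}$: it bounds the weighted aggregate directly via $\frac{1}{p}\sum_{t=k-p+1}^{k}\alpha_t\left(\mathtt{U}_t+\mathtt{U}_t\tr\right)\geq\left(\min_{k-p<t\leq k}\alpha_t\right)\frac{1}{p}\sum_{t=k-p+1}^{k}\left(\mathtt{U}_t+\mathtt{U}_t\tr\right)\geq 2a\left(\min_{k-p<t\leq k}\alpha_t\right)I_n$, and then uses $\alpha_k=O(\alpha_{k-1})$ solely to guarantee $\sum_{k}\min_{k-p<t\leq k}\alpha_t=\infty$ (Lemma A.2 of \cite{Ke2024SCDE}), after which \cref{lemma:conv} applies with effective step-size $\min_{k-p<t\leq k}\alpha_t$; no convergence of in-block step-size ratios is needed. (Note that the first inequality above itself uses positive semidefiniteness of the summands $\mathtt{U}_t+\mathtt{U}_t\tr$, which holds in the paper's application in \cref{thm:conv} where $\mathtt{U}_t$ is built from $\bar{\mathbb{H}}_{\beta,t}$ and Laplacian terms; if you adopt this route you should state that this is where the structure of $\mathtt{U}_t$ enters.) Your remaining ingredients, the decomposition $\mathtt{X}_k=\mathtt{X}_k^\prime+\mathtt{Y}_k$, the martingale moment bound for the propagated noise, and the patching of the $p$ interleaved subsequences, are consistent with the paper's argument, so replacing your normalization by the paper's $\min$-based bound would close the gap.
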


\begin{proof}
	By \eqref{eq:lemma/state}, $ \mathtt{X}_k =  \prod_{t=k-p+1}^{k} \left( I_n - \alpha_t \mathtt{U}_t + O(\gamma_t) \right) \mathtt{X}_{k-p}
	 + \sum_{t=k-p+1}^{k} \prod_{l=t+1}^{k} \left( I_n - \alpha_l \mathtt{U}_l \right) \mathtt{W}_t$. 
	In this recursive function, $ \prod_{t=k-p+1}^{k} \left( I_n - \alpha_t \mathtt{U}_t + O(\gamma_t) \right) = I_n -  \sum_{t=k-p+1}^{k} \alpha_t \mathtt{U}_t + O\left( \sum_{t=k-p+1}^{k} \left(\gamma_t + \alpha_k^2 \Abs{\mathtt{U}_k}^2\right) \right) $. 
	Note that by \eqref{condi:PE/coro}, 
	\begin{align*}
		& \frac{1}{p}\sum_{t=k-p+1}^{k} \alpha_t \left(\mathtt{U}_t + \mathtt{U}_t\tr\right) \\
		\geq & \left(\min_{k-p < t\leq k} \alpha_t\right) \frac{1}{p}\sum_{t=k-p+1}^{k} \left(\mathtt{U}_t + \mathtt{U}_t\tr\right)
		\geq 2 a \left(\min_{k-p < t\leq k} \alpha_t\right)  I_n,  
	\end{align*}
	and by Lemma A.2 of \cite{Ke2024SCDE}, $ \sum_{k=p+1}^{\infty} \min_{k-p < t \leq k} \alpha_t = \infty $. Then, the corollary can be proved by \cref{lemma:conv}. 
\end{proof}

\begin{lemmax}\label{lemma:1/k}
	If an adaptive sequence $ \{\mathtt{V}_k,\mathcal{F}_k: k \in \mathbb{N} \} $ satisfies $ \mE\left[ \mathtt{V}_k \middle| \mathcal{F}_{k-1} \right] \leq \left( 1 - \frac{a}{k} + \gamma_k \right) \mathtt{V}_{k-1} + O\left( \frac{1}{k^b} \right) $
	with $ a > 0 $, $ b > 1 $ and $ \sum_{k=1}^{\infty}\gamma_k < \infty $, then
	\begin{align*}
		\mathtt{V}_k = \begin{cases}
			O\left(\frac{1}{k^{a}}\right), & \text{if}\ b - a > 1 ;\\
			O\left(\frac{(\ln k)^2}{k^{b-1}}\right), & \text{if}\ b - a \leq 1.
		\end{cases}
	\end{align*}
\end{lemmax}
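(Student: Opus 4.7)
My plan is to reduce to a deterministic recursion by taking unconditional expectations on both sides of the hypothesis. Writing $v_k = \mE \mathtt{V}_k$ (the quantity is non-negative in the intended applications, so this loses nothing), the hypothesis gives
\[ v_k \leq \left(1 - \tfrac{a}{k} + \gamma_k\right) v_{k-1} + \tfrac{C}{k^b} \]
for some constant $C > 0$. I choose $k_0$ large enough that $1 - a/k + \gamma_k \in (0,1)$ for all $k \geq k_0$, which is legitimate since $\gamma_k \to 0$ follows from $\sum \gamma_k < \infty$.

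Next I iterate. Let $P_{k,t} = \prod_{s=t}^{k}\left(1 - a/s + \gamma_s\right)$. Unrolling the recursion gives
\[ v_k \leq P_{k,k_0}\, v_{k_0 - 1} + C \sum_{t = k_0}^{k} P_{k,t+1}\, t^{-b}. \]
The key estimate is the sharp asymptotic $P_{k,t} \leq C_0 (t/k)^a$ for $k_0 \leq t \leq k$. To establish it I factor $1 - a/s + \gamma_s = (1 - a/s)\bigl(1 + \gamma_s/(1 - a/s)\bigr)$. Because $\gamma_s \to 0$ and $\sum \gamma_s < \infty$, the product $\prod_{s \geq k_0}\bigl(1 + \gamma_s/(1-a/s)\bigr)$ converges to a finite positive constant, so the $\gamma$-perturbation only contributes a bounded multiplicative factor. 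The remaining product $\prod_{s=t}^{k}(1 - a/s)$ is a telescoping Gamma-function ratio whose Stirling asymptotics give $(t/k)^a$ up to a bounded constant.

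Plugging this bound back yields
\[ v_k \leq C_1 k^{-a} + C_2 k^{-a} \sum_{t=k_0}^{k} t^{a-b}, \]
and the conclusion follows by case analysis on the tail sum. If $b - a > 1$ the sum converges, leaving $v_k = O(k^{-a})$. If $b - a = 1$ the sum is $O(\ln k)$, yielding $O(\ln k / k^{a}) = O(\ln k / k^{b-1})$. If $b - a < 1$, the sum is $O(k^{a-b+1})$, giving $O(k^{1-b}) = O(k^{-(b-1)})$. In all sub-cases of $b - a \leq 1$ the bound is dominated by the stated $O\bigl((\ln k)^2 / k^{b-1}\bigr)$; the extra logarithmic factor is harmless and accommodates a uniform statement across the boundary $b - a = 1$.

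The only delicate step is the sharp product estimate $P_{k,t} \asymp (t/k)^a$ under the $\gamma_k$ perturbation: naive bounds like $\exp(-a\sum 1/s) \leq (t/k)^a$ suffice for an upper bound, but one must verify that $\sum_{s \geq k_0}\gamma_s/(1-a/s)$ is absolutely convergent so that the perturbation does not destroy the polynomial decay rate. Once this is in hand, the rest of the argument is a routine splitting of $\sum t^{a-b}$ according to whether it converges, is logarithmic, or is polynomial in $k$.
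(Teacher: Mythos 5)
There is a genuine gap: your argument bounds the wrong object. By taking unconditional expectations you obtain a rate for $v_k=\mE\,\mathtt{V}_k$, but the lemma (and the way it is used in the paper, in the proof of the almost sure convergence rate) asserts a pathwise bound on the random sequence $\mathtt{V}_k$ itself, i.e.\ $\mathtt{V}_k=O(1/k^{a})$ (resp.\ $O((\ln k)^2/k^{b-1})$) \emph{almost surely}. A bound on $\mE\,\mathtt{V}_k$ does not imply the same almost sure rate; Markov plus Borel--Cantelli would only recover a strictly weaker rate along the lines of $\mathtt{V}_k=o(k^{-a+1+\varepsilon})$ a.s. Moreover, in the setting where the lemma is applied, the perturbation $\gamma_k$ and the $O(1/k^{b})$ term are themselves random (e.g.\ $\gamma_k$ contains $\alpha_k^2\Abs{\mathtt{U}_k}^2$, which is only known to be summable almost surely, with no integrability control), so the very first step of your reduction --- passing the inequality through an unconditional expectation --- is not available without extra assumptions.

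The paper's proof avoids this by staying at the pathwise/conditional level: it multiplies by the target weight ($k^{a}$ when $b-a>1$, and $k^{b-1}/(\ln k)^2$ when $b-a\le 1$), checks that the weighted sequence satisfies
$\mE\left[ k^{a}\mathtt{V}_k \,\middle|\, \mathcal{F}_{k-1}\right] \le \left(1+\gamma_k+O(1/k^2)\right)(k-1)^{a}\mathtt{V}_{k-1}+O\!\left(1/k^{\,b-a}\right)$
with almost surely summable perturbations, and then invokes the Robbins--Siegmund almost-supermartingale theorem to conclude that the weighted sequence converges (hence is bounded) almost surely, which is exactly the claimed a.s.\ rate. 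Your product estimate $P_{k,t}\asymp (t/k)^{a}$ and the splitting of $\sum_t t^{a-b}$ are fine as a deterministic calculation, but they prove a statement about expectations, not the statement the lemma makes; to repair the proposal you would need to replace the expectation reduction by a supermartingale (Robbins--Siegmund type) argument applied to the suitably weighted sequence.
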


\begin{proof}
	If $ b-a > 1 $, then
	\begin{align*}
		& \mE\left[ k^a \mathtt{V}_k \middle| \mathcal{F}_{k-1} \right] \\
		\leq & \left( 1 - \frac{a}{k} + \gamma_k \right)\left( 1 + \frac{a}{k} + O\left(\frac{1}{k^2}\right) \right) (k-1)^a \mathtt{V}_{k-1} \\
		& + O\left( \frac{1}{k^{b-a}} \right) \\
		\leq & \left( 1 + \gamma_k + O\left(\frac{1}{k^2}\right) \right) (k-1)^a \mathtt{V}_{k-1} + O\left( \frac{1}{k^{b-a}} \right),
	\end{align*}
	which together with Theorem 1 of \cite{robbins1971convergence} implies that $ \mathtt{V}_k = O\left( \frac{1}{k^a} \right) $ almost surely.
	
	If $ b - a \leq 1 $, then
	\begin{align*}
		& \mE\left[ \frac{k^{b-1}}{(\ln k)^2} \mathtt{V}_k \middle| \mathcal{F}_{k-1}  \right] \\
		\leq & \left( 1 - \frac{a}{k} + \gamma_k \right)\left( 1 + \frac{b-1}{k} + O\left(\frac{1}{k^2}\right) \right) \frac{(b-1)^{b-1}}{(\ln (k-1))^2} \mathtt{V}_{k-1} \\
		& + O\left( \frac{1}{k (\ln k)^2} \right) \\
		\leq & \left( 1 + \gamma_k + O\left(\frac{1}{k^2}\right) \right)\frac{(b-1)^{b-1}}{(\ln (k-1))^2} \mathtt{V}_{k-1} + O\left( \frac{1}{k (\ln k)^2} \right), 
	\end{align*}
	which together with Theorem 1 of \cite{robbins1971convergence} implies that $ \mathtt{V}_k = O\left( \frac{(\ln k)^2}{k^{b-1}} \right) $ almost surely.  The lemma is thereby proved.  
\end{proof}

\begin{lemmax}\label{lemma:1/k^c}
	If sequences $ \{V_k: k \in \mathbb{N}\} $, $ \{\xi_k: k \in \mathbb{N}\} $, $ \{\eta_k: k \in \mathbb{N}\} $ and $ \{\gamma_k: k \in \mathbb{N}\} $ satisfy
	\begin{enumerate}[label={\roman*)},leftmargin=1.4em]
		\item $ \xi_k\geq 0 $, $ \varlimsup_{k\to\infty} \xi_k < 1 $;
		\item $ \sum_{k=1}^{\infty} \eta_k < \infty $, $ \sum_{k=1}^{\infty} \abs{\gamma_k} < \infty $;
		\item $ V_k \leq \left( 1 - \xi_k + \gamma_k \right) V_{k-1} + \eta_k + O\left( \xi_k \right) $,
	\end{enumerate}
	then $ V_k $ is uniformly upper bounded. 	
\end{lemmax}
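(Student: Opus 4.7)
The plan is to iterate the recursion from a well-chosen index and control the resulting product and sum. Since $\varlimsup_{k\to\infty}\xi_k<1$ and $\gamma_k\to 0$ (forced by $\sum|\gamma_k|<\infty$), I would fix $c\in(\varlimsup_k\xi_k,\,1)$ and pick $k_0$ so that for every $l\geq k_0$ both $\xi_l\leq c$ and $|\gamma_l|\leq(1-c)/2$ hold. Then $1-\xi_l+\gamma_l\in[(1-c)/2,\,(3-c)/2]$ is strictly positive, and one can choose $C>0$ such that the $O(\xi_k)$ term in iii) is bounded by $C\xi_k$ for $k\geq k_0$.

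Because these coefficients are nonnegative, I can unroll the recursion by induction and obtain, for every $k\geq k_0$,
\begin{equation*}
V_k\leq\Bigl(\prod_{l=k_0+1}^{k}(1-\xi_l+\gamma_l)\Bigr)V_{k_0}+\sum_{t=k_0+1}^{k}\Bigl(\prod_{l=t+1}^{k}(1-\xi_l+\gamma_l)\Bigr)\bigl(\eta_t+C\xi_t\bigr).
\end{equation*}
Factoring $1-\xi_l+\gamma_l=(1-\xi_l)\bigl(1+\gamma_l/(1-\xi_l)\bigr)$ and using $|\gamma_l/(1-\xi_l)|\leq|\gamma_l|/(1-c)$, the infinite product $C_1:=\prod_{l\geq k_0}\bigl(1+|\gamma_l|/(1-\xi_l)\bigr)$ converges by ii). Consequently every finite product satisfies $\prod_{l=a}^{b}(1-\xi_l+\gamma_l)\leq C_1\prod_{l=a}^{b}(1-\xi_l)\leq C_1$, since each $1-\xi_l\in(0,1]$ for $l\geq k_0$. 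This bounds the first term by $C_1 V_{k_0}$ and the $\eta_t$-contribution by $C_1\sum_{t}\eta_t<\infty$. For the $\xi_t$-piece I would use the telescoping identity $\xi_t\prod_{l=t+1}^{k}(1-\xi_l)=\prod_{l=t+1}^{k}(1-\xi_l)-\prod_{l=t}^{k}(1-\xi_l)$, which upon summation yields $\sum_{t=k_0+1}^{k}\xi_t\prod_{l=t+1}^{k}(1-\xi_l)=1-\prod_{l=k_0+1}^{k}(1-\xi_l)\leq 1$, so this contribution is at most $CC_1$. Combining the three bounds with the trivial finiteness of $V_1,\ldots,V_{k_0}$ yields the desired uniform upper bound.

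The main subtlety is sign handling. A priori, $V_{k-1}$ may be negative and the coefficient $1-\xi_k+\gamma_k$ can exceed $1$ or even turn negative for small $k$, which would obstruct the direct induction. Starting the iteration from $k_0$ sidesteps both issues: the coefficients are nonnegative there, so replacing $V_{l-1}$ by an upper bound preserves the inequality regardless of its sign, while the fluctuations introduced by $\gamma_l$ are absorbed cleanly into the convergent factor $C_1$. Note that the telescoping cancellation is what makes the $\xi_t$-sum bounded without requiring $\sum\xi_t<\infty$, which is essential since $\xi_k$ is only known to eventually stay below $c$, not to be summable.
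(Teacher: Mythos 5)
Your proposal is correct, but it proves the lemma by a different mechanism than the paper. You unroll the recursion from a time $k_0$ after which $1-\xi_l+\gamma_l$ is strictly positive, control all products through the factorization $1-\xi_l+\gamma_l=(1-\xi_l)\bigl(1+\gamma_l/(1-\xi_l)\bigr)$ and the convergent infinite product coming from $\sum_k\absl{\gamma_k}<\infty$, and then kill the non-summable $O(\xi_t)$ forcing term with the telescoping identity $\sum_{t}\xi_t\prod_{l=t+1}^{k}(1-\xi_l)=1-\prod_{l=t_0}^{k}(1-\xi_l)\leq 1$ --- a discrete variation-of-constants/Gr\"onwall argument that even yields an explicit bound of the form $C_1\bigl(\absl{V_{k_0}}+\sum_t\eta_t+C\bigr)$. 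The paper instead never unrolls: it assumes $\gamma_k\geq 0$ without loss of generality, passes to the transformed sequence $U_k=\prod_{t=k_0}^{k}\bigl(1-\gamma_t-\absl{\gamma_t}/2\bigr)\bigl(V_k-\sum_{t=k_0}^{k}\eta_t\bigr)$, which absorbs both the summable perturbation $\gamma_k$ and the $\eta_k$ terms, shows $U_k\leq\bigl(1-\tfrac12(\xi_k+\absl{\gamma_k})\bigr)U_{k-1}+M(\xi_k+\absl{\gamma_k})$, and concludes by the two-case drift argument ($U_{k-1}<2M$ versus $U_{k-1}\geq 2M$) that $U_k\leq\max\{U_{k-1},2M\}$; your telescoping step plays exactly the role of that case analysis. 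The only blemishes in your write-up are cosmetic: the first term should be bounded by $C_1\max\{V_{k_0},0\}$ rather than $C_1V_{k_0}$ when $V_{k_0}<0$, and the $\eta_t$-contribution should be read as $C_1\sum_t\max\{\eta_t,0\}$ (harmless under the intended nonnegative error terms, and no less rigorous than the paper's own treatment); neither affects the uniform upper bound.
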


\begin{proof}
	Without loss of generality, assume $ \gamma_k \geq 0 $. Besides, by $ \sum_{k=1}^{\infty} \gamma_k < \infty $, there exists $ k_0 $ such that $ \gamma_k < \frac{1}{3} $ and $ \xi_k < 1 + \gamma_k $ for all $ k \geq k_0 $. Set $ U_k = \prod_{t=k_0}^{k} \left( 1 - \gamma_t - \frac{\abs{\gamma_t}}{2}\right) \left( V_k - \sum_{t=k_0}^{k} \eta_t \right) $. Then, there exists $ M > 0 $ such that
	\begin{align*}
		U_k =& \prod_{t=k_0}^{k} \left( 1 - \gamma_t - \frac{\abs{\gamma_t}}{2}\right) \left( V_k - \sum_{t=k_0}^{k} \eta_t \right) \\
		\leq & \prod_{t=k_0}^{k} \left( 1 - \gamma_t - \frac{\abs{\gamma_t}}{2}\right) \\
		&\cdot \left( \left( 1 - \xi_k + \gamma_k \right) \left(V_{k-1} - \sum_{t=k_0}^{k-1} \eta_t\right) + O\left( \xi_k + \abs{\gamma_k} \right) \right) \\
		= & \left( 1 - \gamma_k - \frac{\abs{\gamma_k}}{2}\right) \left( 1 - \xi_k + \gamma_k \right) U_{k-1}  + M \left( \xi_k + \abs{\gamma_k} \right). 
	\end{align*}
	If $ U_{k-1} < 2M $, then 
	\begin{align*}
		U_k < & \left( 1 - \gamma_k - \frac{\abs{\gamma_k}}{2}\right) \left( 1 - \xi_k + \gamma_k \right) 2M + M \left( \xi_k + \abs{\gamma_k} \right) \\
		\leq &  \left( 1 - \frac{1}{2}\left( \xi_k + \abs{\gamma_k} \right) \right) 2M + M \left( \xi_k + \abs{\gamma_k} \right) \leq 2M. 
	\end{align*}
	If $ U_{k-1} \geq 2M $, then 
	\begin{align*}
		U_k \leq \left( 1 - \frac{1}{2}\left( \xi_k + \abs{\gamma_k} \right) \right) \mathtt{U}_{k-1} + M \left( \xi_k + \abs{\gamma_k} \right) \leq \mathtt{U}_{k-1}.
	\end{align*}
	Therefore, $ U_{k} \leq \max\{U_{k-1}, 2M\} $, which implies the uniformly boundedness of $ U_k $ and further $ V_{k} $. 
\end{proof}

\begin{lemmax}\label{lemma:conv_rate}
	If i)-vi) in \cref{lemma:conv} hold, $ \rho > 4 $, $ \alpha_k = \frac{1}{k^c} $, $ \beta_k = \frac{1}{k^b} $ for $ c \in (\frac{1}{2},1] $ and $ b > 1 $, and $ \Abs{\mE\left[ \mathtt{W}_k \middle| \mathcal{F}_{k-1} \right]} \leq \lambda^k $ for some $ \lambda\in(0,1) $, then
	\begin{align}\label{concl:rate/lemma}
		\mathtt{X}_k = \begin{cases}
			O\left( \frac{1}{k^{a}} \right), & \text{if}\ c = 1, 2b - 2a > 1; \\
			O\left( \frac{\ln k}{k^{b-1/2}} \right), & \text{if}\ c = 1, 2b - 2a \leq 1; \\
			O\left( \frac{1}{k^{b-c/2}} \right), & \text{if}\ c \in (\frac{1}{2},1),
		\end{cases}
		\ \text{a.s.}
	\end{align}
\end{lemmax}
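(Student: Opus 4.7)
The plan is to follow the reduction used in Lemma \ref{lemma:conv}: decompose $\mathtt{X}_k = \mathtt{X}_k' + \mathtt{Y}_k$ with $\mathtt{Y}_k$ absorbing the deterministic mean $\mE[\mathtt{W}_k \mid \mathcal{F}_{k-1}]$. Since by hypothesis $\|\mE[\mathtt{W}_k \mid \mathcal{F}_{k-1}]\| \leq \lambda^k$ with $\lambda \in (0,1)$, the sequence $\mathtt{Y}_k$ decays at least geometrically (using the negative drift $-\alpha_k \mathtt{U}_k$ absorbing $O(\lambda^k)$ inputs), which is strictly faster than any of the polynomial rates in \eqref{concl:rate/lemma}; hence it suffices to bound $\mathtt{V}_k := \|\mathtt{X}_k'\|^2$. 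From the recursion \eqref{ineq:condiE/lemma_conv} in the proof of Lemma \ref{lemma:conv},
\[
\mE[\mathtt{V}_k \mid \mathcal{F}_{k-1}] \leq (1 - 2 a \alpha_k + \tilde{\gamma}_k) \mathtt{V}_{k-1} + O(\beta_k^2),
\]
where $\tilde{\gamma}_k := \alpha_k^2 \|\mathtt{U}_k\|^2 + O(\gamma_k)$ is almost-surely summable by hypotheses (i) and (iv) of Lemma \ref{lemma:conv}.

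For the case $c = 1$, I would plug $\alpha_k = 1/k$ and $\beta_k^2 = 1/k^{2b}$ directly into Lemma \ref{lemma:1/k}, applied with the substitutions $a \leftarrow 2a$ and $b \leftarrow 2b$. This yields $\mathtt{V}_k = O(1/k^{2a})$ almost surely when $2b - 2a > 1$, and $\mathtt{V}_k = O((\ln k)^2 / k^{2b-1})$ almost surely otherwise. Taking square roots gives the first two cases of \eqref{concl:rate/lemma}.

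For the case $c \in (1/2, 1)$, the target rate $O(1/k^{b - c/2})$ suggests normalising by $k^{2b-c}$. Set $\mathtt{U}_k := k^{2b-c} \mathtt{V}_k$. A Taylor expansion $(k/(k-1))^{2b-c} = 1 + (2b-c)/k + O(k^{-2})$ yields
\[
\mE[\mathtt{U}_k \mid \mathcal{F}_{k-1}] \leq \left(1 - \frac{2a}{k^c} + \frac{2b-c}{k} + \tilde{\gamma}_k + O(k^{-2})\right) \mathtt{U}_{k-1} + O(k^{-c}).
\]
The key observation is that $c < 1$ forces $(2b-c)/k = o(1/k^c)$, so for all sufficiently large $k$ the net drift remains at least $-a/k^c$, reducing the recursion to
\[
\mE[\mathtt{U}_k \mid \mathcal{F}_{k-1}] \leq (1 - a/k^c + \hat{\gamma}_k) \mathtt{U}_{k-1} + O(k^{-c}),
\]
with $\sum_k \hat{\gamma}_k < \infty$. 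An argument in the spirit of Lemma \ref{lemma:1/k^c}---comparing $\mathtt{U}_k$ against a threshold $M$ so large that $aM/k^c$ dominates the constant in $O(k^{-c})$, combined with a Robbins--Siegmund-type application---then establishes that $\mathtt{U}_k$ is almost-surely uniformly bounded, whence $\mathtt{V}_k = O(1/k^{2b-c})$ and $\|\mathtt{X}_k'\| = O(1/k^{b - c/2})$ almost surely.

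The main obstacle is the case $c \in (1/2, 1)$. Unlike the $c = 1$ case where the factor $(k/(k-1))^a$ exactly cancels the drift $-a/k$, here the growth factor $(k/(k-1))^{2b-c}$ leaves a non-summable residue $(2b-c)/k$ that must be dominated by the stronger decay $1/k^c$; this precludes a verbatim application of Lemma \ref{lemma:1/k} and necessitates an almost-sure variant of Lemma \ref{lemma:1/k^c}. The moment hypothesis $\rho > 4$ is most likely invoked at this step, either to control higher-order fluctuations of $\mathtt{V}_k$ via a Burkholder--Davis--Gundy estimate, or to justify a Borel--Cantelli argument on the tail probabilities of $\mathtt{U}_k$ so that the deterministic boundedness argument of Lemma \ref{lemma:1/k^c} lifts to the conditional-expectation setting.
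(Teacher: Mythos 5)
Your $c=1$ case is exactly the paper's argument (apply \cref{lemma:1/k} to $\|\mathtt{X}_k'\|^2$ with $a\mapsto 2a$, $b\mapsto 2b$ and take square roots), and your reduction to $\mathtt{X}_k'$ is also the paper's first step — although your claim that $\mathtt{Y}_k$ decays "at least geometrically" is false: with contraction factor $1-a/k$ and geometric input, $\mathtt{Y}_k$ is only $O(1/k^{a})$ when $c=1$ (superpolynomial when $c<1$), which happens to still be dominated by the target rates, so this is a repairable misstatement rather than a fatal one.

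The genuine gap is in the case $c\in(\frac{1}{2},1)$. After normalising by $k^{2b-c}$ you arrive at a recursion for the \emph{conditional expectation} of $k^{2b-c}\|\mathtt{X}_k'\|^2$ whose additive forcing term is $O(k^{-c})$, which is not summable. Robbins--Siegmund cannot absorb a non-summable perturbation, and \cref{lemma:1/k^c} — which is precisely the tool designed for a non-summable $O(\xi_k)$ term — is a deterministic, pathwise statement: it requires the inequality to hold for the random sequence itself, not for its conditional expectation. Your proposal never closes this loop; you only gesture at "an almost-sure variant of Lemma \ref{lemma:1/k^c}" and speculate that $\rho>4$ enters via BDG or Borel--Cantelli. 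The paper's actual mechanism is different and is the real content of this case: it expands $k^{2b-c}\|\mathtt{X}_k'\|^2$ \emph{pathwise} (no conditional expectation taken), isolating the martingale cross term $2k^{2b-c}\tilde{\mathtt{W}}_k^\top(I_n-\alpha_k\mathtt{U}_k+O(\gamma_k))\mathtt{X}_{k-1}'$ and the quadratic term $k^{2b-c}\|\tilde{\mathtt{W}}_k\|^2$; it controls the cross-term partial sums by Lemma 2 of Wei (1985) as $O(1)+o\left(\sum_t t^{2b-2c}\|\mathtt{X}_{t-1}'\|^2\right)$ a.s., shows $\sum_t t^{2b-c}\left(\|\tilde{\mathtt{W}}_t\|^2-\mE\left[\|\tilde{\mathtt{W}}_t\|^2\middle|\mathcal{F}_{t-1}\right]\right)=O(1)$ a.s. — this is exactly where $\rho>4$ is needed, so that $\|\tilde{\mathtt{W}}_t\|^2$ has conditional moments of order $\rho/2>2$ — then bootstraps a preliminary rate $\mathtt{X}_k'=O(1/k^{b})$ to make the cross term $O(1)$, and only then applies the deterministic \cref{lemma:1/k^c} to the resulting pathwise recursion. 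Without this pathwise martingale-transform step your conditional-expectation route yields at best a bound on $\mE\|\mathtt{X}_k'\|^2$ (an $L^2$/in-probability rate), not the almost-sure rate claimed in \eqref{concl:rate/lemma}.
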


\begin{proof}
	Consider $ \mathtt{X}_k^\prime $ and $ \mathtt{Y}_k $ in the proof of \cref{lemma:conv}. One can get
	\begin{align*}
		& \frac{\Abs{\mathtt{Y}_k}}{\prod_{t=1}^{k}\left( 1-\frac{a}{k^c} \right)} \\ 
		\leq & (1 + O(\alpha_k^2\Abs{\mathtt{U}_k}^2 + \gamma_k)) \frac{\Abs{\mathtt{Y}_{k-1}}}{\prod_{t=1}^{k-1}\left( 1-\frac{a}{k^c} \right)} + \frac{\lambda^k}{\prod_{t=1}^{k}\left( 1-\frac{a}{k^c} \right)}. 
	\end{align*}
	By Lemma A.2 of \cite{WangJM2024bipartite} and \cref{lemma:1/k^c}, 
	\begin{align}\label{eq:Y_rate}
		\mathtt{Y}_k = \begin{cases}
			O\left( \frac{1}{k^a} \right),& \text{if}\ c = 1;\\
			O\left( \exp\left( \frac{a}{1-c} k^{1-c} \right) \right),& \text{if}\ c \in (\frac{1}{2},1). 
		\end{cases} 
	\end{align}
	Therefore, it suffices to calculate the convergence rate of $ \mathtt{X}_k^\prime $. 	
	
	If $ c = 1 $, then the lemma can be proved by \eqref{ineq:condiE/lemma_conv} and \cref{lemma:1/k}. Then, it suffices to analyze the case of $ c < 1 $. 
	
	For convenience, denote $ \tilde{\mathtt{W}}_k = \mathtt{W}_k - \mE\left[ \mathtt{W}_k \middle| \mathcal{F}_{k-1} \right] $. 
	By \eqref{eq:Xprime}, one can get
	\begin{align}\label{ineq:kX2/c<1}
		& k^{2b-c} \Abs{\mathtt{X}_k^\prime}^2 \nonumber \\
		\leq & \left( 1 - \frac{a}{k^c} + \alpha_k^2 \Abs{\mathtt{U}_k}^2 + O\left( \gamma_k \right) \right) (k-1)^{2b-c} \Absl{\mathtt{X}_{k-1}^\prime}^2 \cr
		& + 2 k^{2b-c} \tilde{\mathtt{W}}_k^\top \left(I_n - \alpha_k \mathtt{U}_k + O(\gamma_k) \right) \mathtt{X}_{k-1}^\prime \nonumber\\
		& + k^{2b-c} \Absl{\tilde{\mathtt{W}}_k}^2,\ \as 
	\end{align}
	Then, by Lemma 2 of \cite{wei1985asymptotic},  
	\begin{align}\label{ineq:sum_WX}
		& \sum_{t=1}^{k} 2 t^{2b-c} \tilde{\mathtt{W}}_t^\top \left(I_n - \alpha_t \mathtt{U}_t + O(\gamma_t) \right) \mathtt{X}_{t-1}^\prime \cr
		\leq & \sum_{t=1}^{k} (2 t^{b} \tilde{\mathtt{W}}_t)^\top \left( t^{b-c} \left(I_n - \alpha_t \mathtt{U}_t + O(\gamma_t) \right) \mathtt{X}_{t-1}^\prime \right) \cr
		= & O\left( 1 \right) + o\left( \sum_{t=1}^{k} t^{2b-2c} \Abs{\mathtt{X}_{t-1}^\prime}^2 \right),\ \as,
	\end{align}
	and
	\begin{align*}
		& \sum_{t=1}^{k} t^{2b-c} \left(\Absl{\tilde{\mathtt{W}}_t}^2 - \mE\left[ \Absl{\tilde{\mathtt{W}}_t}^2 \middle| \mathcal{F}_{k-1} \right]\right) \\
		\leq & \sum_{t=1}^{k} t^{2b} \left(\Absl{\tilde{\mathtt{W}}_t}^2 - \mE\left[ \Absl{\tilde{\mathtt{W}}_t}^2 \middle| \mathcal{F}_{k-1} \right]\right) \cdot \frac{1}{k^c} 
		= O(1),\ \as 
	\end{align*}
	Therefore, $ \mathtt{X}_k = O\left( \frac{1}{k^{b}} \right) $ almost surely, which together with \eqref{ineq:sum_WX} implies $ \sum_{t=1}^{k} 2 t^{2b-c} \tilde{\mathtt{W}}_t^\top \left(I_n - \alpha_t \mathtt{U}_t + O(\gamma_t) \right) \mathtt{X}_{t-1}^\prime = O(1) $.
	Then, the lemma can be proved by \eqref{eq:Y_rate}, \eqref{ineq:kX2/c<1} and \cref{lemma:1/k^c}. 
\end{proof}

\begin{corollaryx}\label{coro:rate}
	Suppose i)-iv), vi) in \cref{lemma:conv} and \eqref{condi:PE/coro} in \cref{coro:conv} hold. $ \rho $, $ \alpha_k $ and $ \beta_k $ are set as \cref{lemma:conv_rate}. Then, $ \mathtt{X}_k $ achieves the almost sure convergence rate as \eqref{concl:rate/lemma}. 
\end{corollaryx}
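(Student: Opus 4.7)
The plan is to reduce the PE condition \eqref{condi:PE/coro} to the pointwise lower bound required by \cref{lemma:conv_rate} by aggregating the recursion over $p$ consecutive steps, then invoking \cref{lemma:conv_rate} on the aggregated recursion. This mirrors the reduction already used in \cref{coro:conv} for the almost-sure convergence statement.

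\textbf{Step 1 (Aggregation).} Iterate \eqref{eq:lemma/state} $p$ times to write
\begin{align*}
\mathtt{X}_k = \Bigl(I_n - \sum_{t=k-p+1}^{k}\alpha_t\mathtt{U}_t + \mathtt{R}_k\Bigr)\mathtt{X}_{k-p} + \mathtt{W}'_k,
\end{align*}
where $\mathtt{R}_k = O\bigl(\sum_{t=k-p+1}^{k}(\gamma_t+\alpha_t^2\Absl{\mathtt{U}_t}^2)\bigr)$ and $\mathtt{W}'_k=\sum_{t=k-p+1}^{k}\prod_{l=t+1}^{k}(I_n-\alpha_l\mathtt{U}_l+O(\gamma_l))\mathtt{W}_t$. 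Set $\mathtt{Z}_m=\mathtt{X}_{pm}$ and $\widetilde{\mathcal{F}}_m=\mathcal{F}_{pm}$, so that $\mathtt{Z}_m = (I_n - \widetilde\alpha_m \widetilde{\mathtt{U}}_m + O(\widetilde\gamma_m))\mathtt{Z}_{m-1} + \widetilde{\mathtt{W}}_m$ with $\widetilde\alpha_m=\sum_{t=pm-p+1}^{pm}\alpha_t$ and $\widetilde{\mathtt{U}}_m = \widetilde\alpha_m^{-1}\sum_{t=pm-p+1}^{pm}\alpha_t\mathtt{U}_t$.

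\textbf{Step 2 (Pointwise PE for the aggregated drift).} Because $\alpha_k=O(\alpha_{k-1})$ and $\alpha_k=1/k^c$, we have $\min_{pm-p<t\le pm}\alpha_t \ge c_0\alpha_{pm}$ for some $c_0>0$ independent of $m$. Combining this with \eqref{condi:PE/coro} yields
\begin{align*}
\widetilde{\mathtt{U}}_m+\widetilde{\mathtt{U}}_m\tr \ge \frac{c_0\alpha_{pm}}{\widetilde\alpha_m}\sum_{t=pm-p+1}^{pm}(\mathtt{U}_t+\mathtt{U}_t\tr) \ge 2a'I_n
\end{align*}
almost surely for a deterministic $a'>0$. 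Hence $\widetilde{\mathtt{U}}_m$ obeys the pointwise lower bound required by assumption v) of \cref{lemma:conv}. A direct computation also shows $\widetilde\alpha_m = \Theta(m^{-c})$ (up to a $p^{1-c}$ factor that is absorbed into constants).

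\textbf{Step 3 (Noise moment bound for the aggregated recursion).} Writing $\widetilde{\mathtt{W}}_m$ as the sum of the $p$ one-step noises propagated by factors that are uniformly bounded (since $\alpha_t\Absl{\mathtt{U}_t}$ is bounded and $\gamma_t$ is summable), one gets $\mE[\Absl{\widetilde{\mathtt{W}}_m-\mE[\widetilde{\mathtt{W}}_m|\widetilde{\mathcal{F}}_{m-1}]}^\rho\mid\widetilde{\mathcal{F}}_{m-1}] = O(\beta_{pm}^\rho)$ and $\Abs{\mE[\widetilde{\mathtt{W}}_m|\widetilde{\mathcal{F}}_{m-1}]}\le p\lambda^{pm-p+1}=O(\widetilde\lambda^m)$ for $\widetilde\lambda=\lambda^p\in(0,1)$. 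Thus the aggregated sequence $\{\mathtt{Z}_m\}$ satisfies every hypothesis of \cref{lemma:conv_rate} with step-size $\Theta(m^{-c})$ and noise scale $\Theta(m^{-b})$.

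\textbf{Step 4 (Transferring the rate back).} \cref{lemma:conv_rate} applied to $\{\mathtt{Z}_m\}$ gives exactly the rate \eqref{concl:rate/lemma} with $k$ replaced by $m$, hence with $k=pm$ up to a constant. For indices $k$ not of the form $pm$, one iterates \eqref{eq:lemma/state} at most $p-1$ extra times from the nearest $pm\le k$; since the one-step factor $I_n-\alpha_k\mathtt{U}_k+O(\gamma_k)$ has norm uniformly bounded by some constant $K$ (using assumption iv) that $\alpha_k^2\Absl{\mathtt{U}_k}^2$ is summable, hence bounded) and the added noise contribution is $O(\beta_k)$, the rate for $\mathtt{X}_k$ differs from that of $\mathtt{Z}_{\lfloor k/p\rfloor}$ only by a multiplicative constant and lower-order terms, establishing \eqref{concl:rate/lemma} for $\mathtt{X}_k$.

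\textbf{Main obstacle.} The delicate point is Step 2: one must show that averaging $\mathtt{U}_t$ with the non-uniform weights $\alpha_t$ preserves the PE lower bound. This works only because $\alpha_t$ varies slowly over a window of length $p$, i.e.\ $\alpha_t=O(\alpha_{t-p})$ gives a uniform ratio bound. The argument would break down for step-sizes decaying faster than any polynomial. The noise moment propagation in Step 3 is also slightly technical because the $p$-step product is random; boundedness of $\alpha_k\Absl{\mathtt{U}_k}$ (from assumption iv) with $c>1/2$) is what keeps the propagation factors uniformly bounded almost surely.
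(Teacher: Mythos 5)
Your plan is correct and follows essentially the same route the paper intends: the paper simply states that the proof of \cref{coro:rate} is ``similar to \cref{coro:conv}'', i.e.\ the same $p$-step aggregation of \eqref{eq:lemma/state}, using \eqref{condi:PE/coro} together with the slowly varying step-size bound $\min_{k-p<t\le k}\alpha_t$ to recover the pointwise drift condition, followed by an application of \cref{lemma:conv_rate}. Your Steps 3--4 (aggregated noise moments and transferring the rate to indices not of the form $pm$) merely spell out details the paper leaves implicit, so there is no substantive difference in approach.
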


The proof of \cref{coro:rate} is similar to \cref{coro:conv}, and thereby omitted here. 

\section{Lemmas and propositions on Gaussian, Laplacian and Cauchy distributions}\label[appen]{appen2}

\setcounter{equation}{0}
\renewcommand{\theequation}{B.\arabic{equation}}

	Following lemmas provide some useful properties on Gaussian, Laplacian and Cauchy distributions. 
	
\begin{lemmax}\label{lemma:zeta}
	If the noise $ \mathtt{d}_{ij,k} $ obeys the distribution $ \mathcal{N}(0,\sigma_{ij,k}^2) $ with $ \inf_k \sigma_{ij,k} > 0 $ (resp., $ \text{\textit{Lap}}(0,b_{ij,k}) $ with $ \inf_k b_{ij,k} > 0 $, then $ \zeta_{ij,k} $, $ \text{\textit{Cauchy}}(0,r_{ij,k}) $ with $ \inf_k r_{ij,k} > 0 $), then $ \zeta_{ij,k} $ in iii) of \cref{assum:privacy_noise} can be $ \frac{\sigma_{ij,1}}{\sigma_{ij,k}} $ (resp., $ \frac{b_{ij,1}}{b_{ij,k}} $, $ \frac{r_{ij,1}}{r_{ij,k}} $).
\end{lemmax}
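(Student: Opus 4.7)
The statement reduces to a direct verification that the density quotients $f_{ij,k}(x)/\zeta_{ij,k}$ admit a positive uniform lower bound over $(i,j)\in\mathcal{E}$, $k\in\mathbb{N}$, and $x$ ranging in any fixed compact set $\mathcal{X}$. So the plan is to treat the three families separately and in each case substitute the proposed $\zeta_{ij,k}$ into the ratio, observing that the scale-parameter prefactor cancels and what remains is a manifestly bounded expression.

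For the Gaussian case I would write
\begin{align*}
	\frac{f_{ij,k}(x)}{\zeta_{ij,k}}
	= \frac{1}{\sqrt{2\pi}\,\sigma_{ij,1}}\,\exp\!\left(-\frac{x^{2}}{2\sigma_{ij,k}^{2}}\right),
\end{align*}
set $M=\sup_{x\in\mathcal{X}}|x|<\infty$ and $\underline{\sigma}_{ij}=\inf_{k}\sigma_{ij,k}>0$, and lower-bound the exponential by $\exp(-M^{2}/(2\underline{\sigma}_{ij}^{2}))$. The Laplacian and Cauchy cases proceed analogously: after substitution the ratio becomes
\begin{align*}
	\frac{1}{2b_{ij,1}}\exp\!\left(-\tfrac{|x|}{b_{ij,k}}\right)
	\quad \text{and}\quad
	\frac{1}{\pi r_{ij,1}\bigl(1+(x/r_{ij,k})^{2}\bigr)}
\end{align*}
respectively, which are bounded below by $\tfrac{1}{2b_{ij,1}}\exp(-M/\underline{b}_{ij})$ and $\tfrac{1}{\pi r_{ij,1}(1+M^{2}/\underline{r}_{ij}^{2})}$ on $\mathcal{X}$.

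Finally, since $\mathcal{E}$ is finite, taking the minimum of these finitely many positive lower bounds over $(i,j)\in\mathcal{E}$ yields a single strictly positive constant, giving exactly the condition in \cref{assum:privacy_noise} iii). Honestly there is no substantive obstacle here; the only thing to be explicit about is why $\inf_{k}\sigma_{ij,k}>0$ (and the Laplacian/Cauchy analogues) is both necessary and sufficient for the exponent/denominator to stay controlled as $k$ varies, while the possibility that the scale parameter grows unboundedly is harmless because it only increases the exponential factor toward $1$.
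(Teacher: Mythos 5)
Your proof is correct and follows essentially the same route as the paper's: substitute the proposed $\zeta_{ij,k}$ so that the scale parameter cancels out of the density ratio, then lower-bound the remaining factor uniformly on compact sets using $\inf_k \sigma_{ij,k}>0$ (resp.\ $\inf_k b_{ij,k}>0$, $\inf_k r_{ij,k}>0$). The only difference is cosmetic: the paper argues via the standard density evaluated on a pulled-back compact set and treats only the Gaussian case explicitly, whereas you write out all three densities and bound them at the worst point $M=\sup_{x\in\mathcal{X}}|x|$, which is equally valid.
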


\begin{proof}
	For the Gaussian distribution case, denote $ f_G^\star(\cdot) $ as the density function of the standard Gaussian distribution. Then, $ f_{ij,k} (x) = \frac{1}{\sigma_{ij,k}} f_G^\star\left( \frac{x}{\sigma_{ij,k}} \right) $.
	Since $ \inf_k \sigma_{ij,k} > 0 $, there exists a compact set $ \mathcal{X}^\prime $ such that $ \frac{x}{\sigma_{ij,k}} \in \mathcal{X}^\prime $ for all $ (i,j)\in\mathcal{E} $, $ k\in\mathbb{N} $ and $ x \in \mathcal{X}^\prime $. Therefore, when $ \zeta_{ij,k} = \frac{1}{\sigma_{ij,k}} $, $ \inf_{(i,j)\in\mathcal{E},k\in\mathbb{N},x\in\mathcal{X}} \frac{f_{ij,k}(x)}{\zeta_{ij,k}}
	\geq \inf_{z\in\mathcal{X}^\prime} \frac{f_G^\star(z)}{\sigma_{ij,1}}>0 $, which implies the lemma. The proofs for Laplacian and Cauchy distribution cases are similar the Gaussian one, and hence, omitted here. 
%
\end{proof}

\begin{lemmax}\label{lemma:Lap}
	Given $ b > 0 $, if $ F_{L}(\cdot;b) $ and $ f_L(\cdot;b) $ are the distribution function and the density function of the distribution $ \text{\textit{Lap}}(0,b) $, respectively, then 
	\begin{align*}
		\sup_{x\in\mathbb{R}} \frac{f_L^2(x;b)}{F_{L}(x;b) (1-F_{L}(x;b))} = \frac{1}{b^2}.
	\end{align*}
\end{lemmax}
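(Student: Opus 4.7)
The plan is to exploit the symmetry of the Laplace density about $0$, reduce to the half-line $x\ge 0$, and then reparametrize by $u=e^{-x/b}$ to reduce the optimization to a one-variable problem on $(0,1]$.

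First I would write down the explicit CDF: $F_L(x;b)=\tfrac12 e^{x/b}$ for $x<0$ and $F_L(x;b)=1-\tfrac12 e^{-x/b}$ for $x\ge 0$. Since the density is even and the map $x\mapsto F_L(x;b)(1-F_L(x;b))$ is symmetric about $x=0$ (as $F_L(-x;b)=1-F_L(x;b)$), the ratio $f_L^2/(F_L(1-F_L))$ is itself an even function of $x$. So it suffices to take the supremum over $x\ge 0$.

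On $x\ge 0$, substituting the closed-form expressions gives
\begin{equation*}
\frac{f_L^2(x;b)}{F_L(x;b)(1-F_L(x;b))}
=\frac{\tfrac{1}{4b^2}e^{-2x/b}}{\bigl(1-\tfrac12 e^{-x/b}\bigr)\tfrac12 e^{-x/b}}
=\frac{1}{b^2}\cdot\frac{u}{2-u},
\end{equation*}
where $u:=e^{-x/b}\in(0,1]$. The function $u\mapsto u/(2-u)$ has positive derivative $2/(2-u)^2$ on $(0,1]$, hence is strictly increasing and attains its maximum value $1$ at $u=1$, i.e., at $x=0$.

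Therefore the supremum equals $1/b^2$ and is attained at $x=0$. No step here is really an obstacle; the only care needed is to verify the symmetry so that the half-line reduction is legitimate, and to notice that the attained value at $x=0$ (where $f_L(0;b)=1/(2b)$ and $F_L(0;b)=1/2$) directly gives $(1/(2b))^2/(1/4)=1/b^2$, confirming the bound is sharp.
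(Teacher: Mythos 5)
Your proposal is correct and follows essentially the same route as the paper's proof: reduce to $x\ge 0$ by symmetry and observe that the ratio, which equals $\frac{1}{2b^2(e^{x/b}-1/2)}$ there, is monotone and attains its supremum $1/b^2$ at $x=0$; your substitution $u=e^{-x/b}$ is just a cosmetic repackaging of the same computation.
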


\begin{proof}
	By $ f_L(x;b) = \frac{1}{2b} \exp\left( - \frac{\abs{x}}{b} \right) $, $ F_L(x;b) = \frac{1}{2} \exp\left( \frac{x}{b} \right) $ if $ x < 0 $; and $ 1 - \frac{1}{2} \exp\left( - \frac{x}{b} \right) $, otherwise. 
	
	By symmetry, 
	\begin{align*}
		& \sup_{x\in\mathbb{R}} \frac{f_L^2(x;b)}{F_{L}(x;b) (1-F_{L}(x;b))} 
		= \sup_{x\geq 0} \frac{f_L^2(x;b)}{F_{L}(x;b) (1-F_{L}(x;b))} \\
		= & \sup_{x\geq 0} \frac{1}{2b^2 \left( \exp\left( \frac{x}{b} \right) - \frac{1}{2} \right)} 
		= \frac{1}{b^2}.
	\end{align*}
	The lemma is thereby proved. 
\end{proof}

\begin{lemmax}\label{lemma:Cauchy}
	Given $ r > 0 $, if $ F_{C}(\cdot;r) $ and $ f_C(\cdot;r) $ are the distribution function and the density function of the distribution $ \text{\textit{Cauchy}}(0,r) $, respectively, then 
	\begin{align*}
		\sup_{x\in\mathbb{R}} \frac{f_C^2(x;r)}{F_{C}(x;r) (1-F_{C}(x;r))} = \frac{4}{\pi^2 r^2}.
	\end{align*}
\end{lemmax}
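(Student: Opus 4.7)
My plan is to parametrize the Cauchy distribution by $\phi = \arctan(x/r) \in (-\pi/2, \pi/2)$, which will collapse the ratio into a simple trigonometric expression that I can bound using Jordan's inequality.

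First I would write out the explicit formulas: $f_C(x;r) = \frac{1}{\pi r(1+(x/r)^2)}$ and $F_C(x;r) = \frac{1}{\pi}\arctan(x/r) + \frac{1}{2}$. Under the substitution $\phi = \arctan(x/r)$, we have $1+(x/r)^2 = \sec^2\phi$, so $f_C(x;r) = \frac{\cos^2\phi}{\pi r}$ and $F_C(x;r) = \frac{\phi}{\pi} + \frac{1}{2}$. A short computation then gives
\begin{equation*}
    F_C(x;r)\bigl(1-F_C(x;r)\bigr) = \frac{1}{4} - \frac{\phi^2}{\pi^2} = \frac{\pi^2 - 4\phi^2}{4\pi^2},
\end{equation*}
so the ratio in question becomes $\frac{f_C^2(x;r)}{F_C(x;r)(1-F_C(x;r))} = \frac{4\cos^4\phi}{r^2(\pi^2 - 4\phi^2)}$. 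By the symmetry of Cauchy it suffices to take $\phi \in [0,\pi/2)$; the value at $\phi = 0$ (equivalently $x = 0$) is exactly $\frac{4}{\pi^2 r^2}$, which matches the claimed supremum.

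To finish, I would reduce the sup claim to the pointwise inequality $\pi^2\cos^4\phi \leq \pi^2 - 4\phi^2$ on $[0,\pi/2)$. Factoring $\pi^2 - \pi^2\cos^4\phi = \pi^2\sin^2\phi\,(1+\cos^2\phi)$ turns this into showing $\pi^2\sin^2\phi\,(1+\cos^2\phi) \geq 4\phi^2$. The key step, and the only nontrivial inequality in the proof, is Jordan's inequality $\sin\phi \geq \tfrac{2\phi}{\pi}$ on $[0,\pi/2]$, which gives $\pi^2\sin^2\phi \geq 4\phi^2$; multiplying by $(1+\cos^2\phi) \geq 1$ closes the bound, with equality attained at $\phi = 0$. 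This confirms the supremum is $\tfrac{4}{\pi^2 r^2}$ and is attained at $x = 0$.

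The main potential obstacle is just locating the right sharp inequality on $\sin$; without the Jordan bound one would be tempted to work directly with $h(u) = u\pi^2 - 4u\arctan^2 u - 2\arctan u$ via calculus, but that path is messier because $h$ is neither monotone nor obviously sign-definite without additional work. Passing through $\phi = \arctan(x/r)$ and appealing to Jordan avoids all of that, and it parallels the clean trigonometric style used in the proof of \cref{lemma:Lap} for the Laplacian case.
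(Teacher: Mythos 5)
Your proposal is correct, and it takes a genuinely different route from the paper. The paper reduces to $r=1$ by scaling and then studies $h_{C,1}(x)=(1+x^2)^2\left(\frac{\pi^2}{4}-\arctan^2 x\right)$ directly by calculus: it factors $h_{C,1}'(x)=(1+x^2)h_{C,2}(x)$ with $h_{C,2}(x)=\pi^2 x-2\arctan x-4x\arctan^2 x$, analyzes the sign of $h_{C,2}''$, uses $\lim_{x\to\pm\infty}h_{C,2}'(x)=0$ to conclude $h_{C,2}'>0$, and then deduces from $h_{C,2}(0)=0$ that $h_{C,1}$ attains its minimum at $x=0$, giving the supremum $4/\pi^2$. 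You instead substitute $\phi=\arctan(x/r)$, which turns the ratio into $\frac{4\cos^4\phi}{r^2(\pi^2-4\phi^2)}$, and reduce the claim to the pointwise inequality $\pi^2\sin^2\phi\,(1+\cos^2\phi)\geq 4\phi^2$, which follows at once from Jordan's inequality $\sin\phi\geq\frac{2\phi}{\pi}$ on $[0,\pi/2]$; equality (and hence attainment of the supremum) occurs at $\phi=0$, i.e.\ $x=0$, and the even symmetry in $\phi$ (or Jordan's inequality applied to $|\phi|$) handles the negative half-line. All your computations check out, including the cross-multiplication step, which is legitimate because $\pi^2-4\phi^2>0$ on $(-\pi/2,\pi/2)$, and your argument also handles general $r>0$ directly without the paper's preliminary scaling step. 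What each approach buys: yours is shorter and more elementary, replacing the paper's two-level derivative and sign analysis (including a second derivative and limits at infinity) with one classical sharp inequality; the paper's argument is more mechanical and does not require recognizing the Jordan bound, but is decidedly messier — exactly the obstacle you identified. Your closing remark that this parallels the trigonometric style of \cref{lemma:Lap} is slightly off (that proof is a direct symmetry-plus-monotonicity computation with exponentials, not trigonometric), but this is cosmetic and does not affect the validity of your proof.
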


\begin{proof}
	Since $ f_C(x;r) = \frac{1}{\pi r \left[ 1 + \left( x \middle/ r \right)^2 \right]} $, one can get $ F_C(x;r) = \frac{1}{2} + \frac{1}{\pi} \arctan\left( \frac{x}{r} \right) $. 
	
	Note that $ F_C(x;r) = F_C\left(\frac{x}{r};1\right),\ f_C(x;r) = \frac{1}{r} f_C\left(\frac{x}{r};1\right) $. Then, it suffices to consider the case of $ r = 1 $. 
	In this case, $ F_C(x;1) $ and and $ f_C(\cdot;r) $ are abbreviated as $ F_C(x) $ and $ f_C(x) $, respectively. Denote 
	\begin{align*}
		h_{C,1}(x) \! = \! \frac{F_{C}(x) (1-F_{C}(x))}{f_C^2(x)} \! = \! (1+x^2)^2 \left(\! \frac{\pi^2}{4} - \arctan^2 x \! \right)\!. 
	\end{align*}
	Then, $ h_{C,1}^\prime(x) = (1+x^2) \left( \pi^2 x - 2 \arctan x - 4 x \arctan^2 x \right) $. 
	
	Furthermore, denote $ h_{C,2} (x) = \pi^2 x - 2 \arctan x - 4 x \arctan^2 x $.
	Then, $ h_{C,1}^\prime(x) = (1+x^2) h_{C,2} (x) $, and
	\begin{align*}
		h_{C,2}^\prime (x) = & \pi^2 - \frac{2}{1+x^2} - 4 \arctan^2 x - \frac{8 x \arctan x}{1+x^2}, \\
		h_{C,2}^{\prime\prime} (x) = & \frac{- 4 x - 16 \arctan x}{(1+x^2)^2}. 
	\end{align*}
	Note that $ h_{C,2}^{\prime\prime} (x) > 0 $ when $ x < 0 $; $ h_{C,2}^{\prime\prime} (x) < 0 $ when $ x > 0 $; and $ \lim_{x\to\infty} h_{C,2}^\prime (x) = \lim_{x\to-\infty} h_{C,2}^\prime (x) = 0 $. Then, $ h_{C,2}^\prime (x) > 0 $, which implies that $ h_{C,2}(x) $ is strictly monotonously increasing. Furthermore, by $ h_{C,2} (0) = 0 $, we have $ h_{C,2} (x) < 0 $ when $ x < 0 $; and $ h_{C,2} (x) > 0 $ when $ x > 0 $. 
	
	Note that $ h_{C,1}^\prime(x) = (1+x^2) h_{C,2} (x) $. Then, $ h_{C,1}^\prime(x) < 0 $ when $ x < 0 $; and $ h_{C,1}^\prime(x) > 0 $ when $ x > 0 $. Therefore, 
	\begin{align*}
		\sup_{x\in\mathbb{R}} \frac{f_C^2(x)}{F_{C}(x) (1-F_{C}(x))} = \frac{1}{\inf_{x\in\mathbb{R}} h_{C,1} (x)} = \frac{1}{h_{C,1}(0)} = \frac{4}{\pi^2}. 
	\end{align*}
	The lemma is thereby proved. 
%
\end{proof}

The following propositions gives sufficient conditions on privacy noises satisfying \cref{assum:privacy_noise,assum:step}, 
when the privacy noises are Gaussian, Laplacian and Cauchy. 

\begin{propositionx}\label{prop:assum3}
	For the noise distribution $ \mathcal{N}(0,\sigma_{ij,k}^2) $ (resp., $ \text{\textit{Lap}}(0,b_{ij,k}) $, $ \text{\textit{Cauchy}}(0,r_{ij,k}) $), \cref{assum:privacy_noise} ii) holds when $ \sigma_{ij,k}>0 $ (resp., $ b_{ij,k} > 0 $, $ r_{ij,k} > 0 $), and \cref{assum:privacy_noise} iii) holds when $ \inf_{k\in\mathbb{N}} \sigma_{ij,k} > 0 $ (resp., $ \inf_{k\in\mathbb{N}} b_{ij,k} > 0 $, $ \inf_{k\in\mathbb{N}} r_{ij,k} > 0 $). \MODIFY{Additionally, if $ \epsilon_{ij} \geq 0 $ and $ \sigma_{ij,k} = \sigma_{ij,1} k^{\epsilon_{ij}} $ (resp., $ b_{ij,k} = b_{ij,1} k ^{\epsilon_{ij}}$, $ r_{ij,k} = r_{ij,1} k ^{\epsilon_{ij}} $), then v) of \cref{thm:privacy/finite} holds.}
\end{propositionx}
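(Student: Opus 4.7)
The plan is to treat all three distribution families uniformly by exploiting their common scale–location structure. For each case I would write the density as $f_{ij,k}(x)=\frac{1}{s_{ij,k}}f^\star(x/s_{ij,k})$, where $s_{ij,k}$ denotes the relevant scale parameter ($\sigma_{ij,k}$, $b_{ij,k}$, or $r_{ij,k}$) and $f^\star$ is the corresponding density with unit scale. The distribution function then satisfies $F_{ij,k}(x)=F^\star(x/s_{ij,k})$, so a change of variables $z=x/s_{ij,k}$ gives the scale-invariant identity
\begin{equation*}
\eta_{ij,k}=\frac{1}{s_{ij,k}^2}\,\eta^\star,\qquad \eta^\star:=\sup_{z\in\mathbb{R}}\frac{[f^\star(z)]^2}{F^\star(z)(1-F^\star(z))}.
\end{equation*}
This identity drives every part of the proposition.

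For part ii) (\cref{assum:privacy_noise} ii)), it suffices to verify $\eta^\star<\infty$ in each case. For Laplacian and Cauchy, \cref{lemma:Lap} and \cref{lemma:Cauchy} give $\eta^\star=1$ and $\eta^\star=4/\pi^2$ respectively. For Gaussian, I would argue that the ratio $[f_G^\star(z)]^2/[F_G^\star(z)(1-F_G^\star(z))]$ is continuous and strictly positive on any compact set, while the standard Mills' ratio bound $1-F_G^\star(z)\geq z f_G^\star(z)/(1+z^2)$ (and its mirror for $z\to-\infty$) implies the ratio behaves like $O(|z|f_G^\star(z))\to 0$ as $|z|\to\infty$, so the supremum is attained and finite; this fact is also recorded in Lemma 5.3 of \cite{WangY2024JSSC}.

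For part iii) (\cref{assum:privacy_noise} iii)), I would choose $\zeta_{ij,k}=1/s_{ij,k}$, so that $f_{ij,k}(x)/\zeta_{ij,k}=f^\star(x/s_{ij,k})$. Given any compact $\mathcal{X}\subset\mathbb{R}$, finiteness of $\mathcal{E}$ together with $\inf_{(i,j),k}s_{ij,k}>0$ ensures that $x/s_{ij,k}$ stays in a fixed compact subset of $\mathbb{R}$ as $(i,j,k,x)$ ranges over $\mathcal{E}\times\mathbb{N}\times\mathcal{X}$. Since $f^\star$ is continuous and strictly positive on $\mathbb{R}$ for Gaussian, Laplacian and Cauchy alike, its infimum on that compact subset is positive, which yields the required uniform lower bound. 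This matches the assignment already used in \cref{lemma:zeta} up to an irrelevant multiplicative constant.

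Finally, for condition v) of \cref{thm:privacy/finite}, substituting $s_{ij,k}=s_{ij,1}k^{\epsilon_{ij}}$ into the scale-invariant identity gives $\eta_{ij,k}=\eta^\star/(s_{ij,1}^2 k^{2\epsilon_{ij}})=\eta_{ij,1}/k^{2\epsilon_{ij}}$ with $\eta_{ij,1}:=\eta^\star/s_{ij,1}^2$, which is exactly the required inequality (in fact, an equality). The only step that requires any genuine analysis is the Gaussian tail estimate underlying $\eta^\star<\infty$; the remainder is bookkeeping that reduces to the scale-invariance identity and continuity of the reference density.
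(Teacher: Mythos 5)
Your proposal is correct and follows essentially the same route as the paper: reduce each family to its unit-scale supremum (the paper quotes $\eta_{ij,k}=2/(\pi\sigma_{ij,k}^2)$ from Lemma 5.3 of \cite{WangY2024JSSC} and uses \cref{lemma:Lap,lemma:Cauchy} for the other two), take $\zeta_{ij,k}$ proportional to the reciprocal scale exactly as in \cref{lemma:zeta}, and substitute the polynomial scale growth to get condition v) of \cref{thm:privacy/finite}. Your only deviation is replacing the exact Gaussian constant by a self-contained Mills-ratio finiteness argument, which is a harmless (and slightly more economical) variant of the same computation.
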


\begin{proof}
	Consider Gaussian noise case. 
	By Lemma 5.3 of \cite{WangY2024JSSC}, 
\MODIFY{\begin{align}\label{eq:eta_rate}
		\eta_{ij,k} 
		& = \sup_{x\in\mathbb{R}} \frac{f_{ij,k}^2\left(x\right)}{F_{ij,k}(x)\left( 1 - F_{ij,k}(x) \right)} \nonumber\\ 
		& =  \frac{f_{ij,k}^2\left(0\right)}{F_{ij,k}(0)\left( 1 - F_{ij,k}(0) \right)} 
		= \frac{2}{\pi \sigma_{ij,k}^2}.
	\end{align}}Therefore, when $ \sigma_{ij,k} > 0 $, $ \eta_{ij,k} < \infty $. Besides, \cref{lemma:zeta} implies that $ \inf_{k\in\mathbb{N}} \sigma_{ij,k} > 0 $ is sufficient to achieve \cref{assum:privacy_noise} ii). \MODIFY{Additionally, if $ \epsilon_{ij} \geq 0 $ and $ \sigma_{ij,k} = \sigma_{ij,1} k^{\epsilon_{ij}} $, then \eqref{eq:eta_rate} implies v) of \cref{thm:privacy/finite}.}
	
	The analysis for the Laplacian and Cauchy noise cases is similar, and thereby omitted here. 
\end{proof}

\begin{remarkx}
	By \cref{prop:assum3}, for Gaussian and Laplacian privacy noises, \cref{assum:privacy_noise} ii) and iii) can be replaced with the condition that there is a uniform positive lower bound of the noise variances. The reasons to adopt the assumption are twofold. For the privacy, sufficient privacy noises can ensure the privacy-preserving capability of the algorithm. For the effectiveness, the privacy noises are also necessary dithered signals in the quantizers \cite{Ke2024SCDE}. The lack of sufficient dithered signals in the quantizers will result in the algorithm failing to converge. 
\end{remarkx}


\begin{propositionx}\label{prop:assum4}
	For the noise distribution $ \mathcal{N}(0,\sigma_{ij,k}^2) $ (resp., $ \text{\textit{Lap}}(0,b_{ij,k}) $, $ \text{\textit{Cauchy}}(0,r_{ij,k}) $) with $ \sigma_{ij,k} = \sigma_{ij,1} k ^{\epsilon_{ij}} $ (resp., $ b_{ij,k} = b_{ij,1} k ^{\epsilon_{ij}}$, $ r_{ij,k} = r_{ij,1} k ^{\epsilon_{ij}} $) and $ \zeta_{ij,k} = k^{-\epsilon_{ij}} $,
	there exists step-size sequences $ \{\alpha_{ij,k}:(i,j)\in\mathcal{E}, k\in\mathbb{N}\} $ and $ \{\beta_{i,k}:i\in\mathcal{V},k\in\mathbb{N}\} $ satisfying \cref{assum:step} if \MODIFY{and only if $ \epsilon_{ij} \leq \frac{1}{2} $}. 
\end{propositionx}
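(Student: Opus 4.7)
The plan is to reduce Assumption~\ref{assum:step} to two essentially independent questions: whether we can choose $\{\beta_{i,k}\}$ satisfying i), ii) of the assumption, and whether we can choose $\{\alpha_{ij,k}\}$ so that both $\sum_k \alpha_{ij,k}^2<\infty$ and $\sum_k \alpha_{ij,k}\zeta_{ij,k}=\infty$ hold for every edge $(i,j)\in\mathcal{E}$. The $\{\beta_{i,k}\}$ part is trivial: picking $\beta_{i,k}=\beta_{i,1}/k^{\delta_i}$ with any $\delta_i\in(1/2,1]$ gives $\sum\beta_{i,k}^2<\infty$ and $\sum\beta_{i,k}=\infty$, and is monotone, so it contributes a summable square series and a divergent sum regardless of $\epsilon_{ij}$. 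Moreover, since $z_k=\min\{\alpha_{ij,k}\zeta_{ij,k}:(i,j)\in\mathcal{E}\}\cup\{\beta_{i,k}:i\in\mathcal{V}\}$ satisfies $z_k\le \alpha_{ij,k}\zeta_{ij,k}$ for each edge, $\sum_k z_k=\infty$ already forces $\sum_k \alpha_{ij,k}\zeta_{ij,k}=\infty$ for every $(i,j)\in\mathcal{E}$; hence the question really boils down to the existence of $\alpha_{ij,k}$ with $\sum_k\alpha_{ij,k}^2<\infty$ and $\sum_k\alpha_{ij,k}/k^{\epsilon_{ij}}=\infty$.

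For the \textbf{only if} direction, I would apply the Cauchy--Schwarz inequality: for any $\{\alpha_{ij,k}\}$,
\begin{equation*}
  \sum_{k=1}^{\infty}\alpha_{ij,k}\zeta_{ij,k}
  \le \Bigl(\sum_{k=1}^{\infty}\alpha_{ij,k}^{2}\Bigr)^{1/2}\Bigl(\sum_{k=1}^{\infty}\zeta_{ij,k}^{2}\Bigr)^{1/2}
  =\Bigl(\sum_{k=1}^{\infty}\alpha_{ij,k}^{2}\Bigr)^{1/2}\Bigl(\sum_{k=1}^{\infty}k^{-2\epsilon_{ij}}\Bigr)^{1/2}.
\end{equation*}
If $\epsilon_{ij}>1/2$, the second factor is finite, so $\sum_k\alpha_{ij,k}^2<\infty$ forces $\sum_k\alpha_{ij,k}\zeta_{ij,k}<\infty$, which contradicts Assumption~\ref{assum:step} iii). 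Hence $\epsilon_{ij}\le 1/2$ is necessary.

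For the \textbf{if} direction I would construct the step-sizes explicitly, splitting into two subcases. When $\epsilon_{ij}<1/2$, take $\alpha_{ij,k}=\alpha_{ij,1}/k^{\gamma_{ij}}$ with $\gamma_{ij}\in(1/2,\,1-\epsilon_{ij}]$: then $\sum\alpha_{ij,k}^2<\infty$ because $\gamma_{ij}>1/2$, and $\alpha_{ij,k}\zeta_{ij,k}=\alpha_{ij,1}/k^{\gamma_{ij}+\epsilon_{ij}}\ge\alpha_{ij,1}/k$, whose sum diverges; combined with the choice of $\{\beta_{i,k}\}$ above, $z_k\ge c/k$ for some $c>0$, giving $\sum z_k=\infty$. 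The boundary case $\epsilon_{ij}=1/2$ is the main obstacle, since pure polynomial step-sizes cannot reconcile $\gamma_{ij}>1/2$ with $\gamma_{ij}+\epsilon_{ij}\le 1$. Here I would use a logarithmic correction, e.g.\ $\alpha_{ij,k}=\alpha_{ij,1}/(k^{1/2}\ln(k+1))$, so that $\alpha_{ij,k}^{2}\asymp 1/(k\ln^{2}k)$ is summable while $\alpha_{ij,k}\zeta_{ij,k}\asymp 1/(k\ln k)$ is not, and monotonicity $\alpha_{ij,k}=O(\alpha_{ij,k+1})$ is immediate. Since $1/(k\ln k)\le\beta_{i,k}$ eventually for any $\delta_i\le 1$, the min $z_k$ is still non-summable, completing the verification of Assumption~\ref{assum:step}. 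Finally, I would note that by Lemma~\ref{lemma:zeta} the choice $\zeta_{ij,k}=k^{-\epsilon_{ij}}$ is valid for all three noise families (Gaussian, Laplacian, Cauchy) under the stated scaling, so the same construction works uniformly across the three cases.
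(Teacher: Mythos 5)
Your proposal is correct and takes essentially the same route as the paper: necessity via the Cauchy--Schwarz (H\"older) bound, which together with $\sum_k\alpha_{ij,k}^2<\infty$ and $\sum_k\alpha_{ij,k}\zeta_{ij,k}=\infty$ forces $\sum_k k^{-2\epsilon_{ij}}=\infty$, i.e.\ $\epsilon_{ij}\le\frac{1}{2}$, and sufficiency via an explicit construction. The only cosmetic difference is that the paper uses the single uniform choice $\alpha_{ij,k}=\frac{\alpha_{ij,1}}{k^{1-\epsilon_{ij}}\ln k}$, $\beta_{i,k}=\frac{\beta_{i,1}}{k}$ for all $\epsilon_{ij}\le\frac{1}{2}$, whereas you split into $\epsilon_{ij}<\frac{1}{2}$ (pure polynomial) and $\epsilon_{ij}=\frac{1}{2}$ (logarithmic correction).
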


\begin{proof}
	\MODIFY{By H{\" o}lder inequality \cite{zorich}, $  \left(\sum_{k=1}^\infty \alpha_{ij,k}\zeta_{ij,k}\right)^2 \leq \left(\sum_{k=1}^\infty \alpha_{ij,k}^2\right)\left(\sum_{k=1}^\infty \zeta_{ij,k}^2\right). $ Then, under \cref{assum:step}, $ \sum_{k=1}^\infty \zeta_{ij,k}^2 = \infty $, which implies $ \epsilon_{ij} \leq \frac{1}{2} $. 
	When $ \epsilon_{ij} \leq \frac{1}{2} $, set $ \alpha_{ij,k} = \frac{\alpha_{ij,1}}{k^{1-\epsilon_{ij}}\ln k} $ and $ \beta_{i,k} = \frac{\beta_{i,1}}{k} $. Then, \cref{assum:step} holds. }
\end{proof}

\begin{remarkx}
	$ \zeta_{ij,k} $ in \cref{prop:assum4} is consistent with the one given in \cref{lemma:zeta}. 
\end{remarkx}

\section*{References}

\begin{IEEEbiography}[{\includegraphics[width=1in,height=1.25in,clip,keepaspectratio]{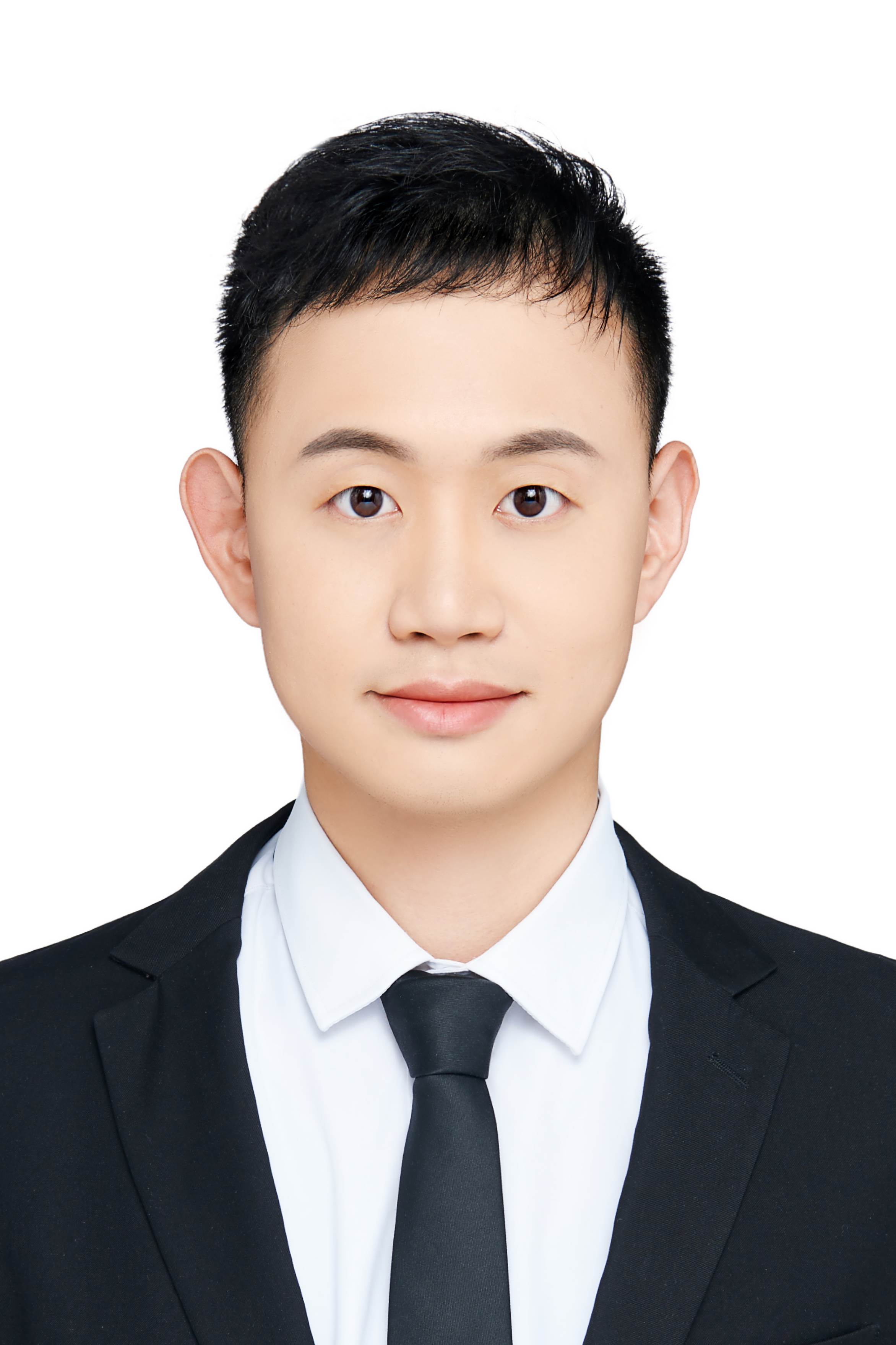}}]{Jieming Ke} received the B.S. degree in mathematics from University of Chinese Academy of Science, Beijing, China, in 2020, and the Ph.D. degree in system theory from the Academy of Mathematics and Systems Science (AMSS), Chinese Academy of Sciences (CAS), Beijing, China, in 2025. 
	
	He is currently a postdoctoral researcher at the Department of Information Engineering, University of Padova, Padova, Italy. His research interests include identification and control of quantized systems, privacy and security in stochastic systems. 
	
\end{IEEEbiography}

\begin{IEEEbiography}[{\includegraphics[width=1in,height=1.25in,clip,keepaspectratio]{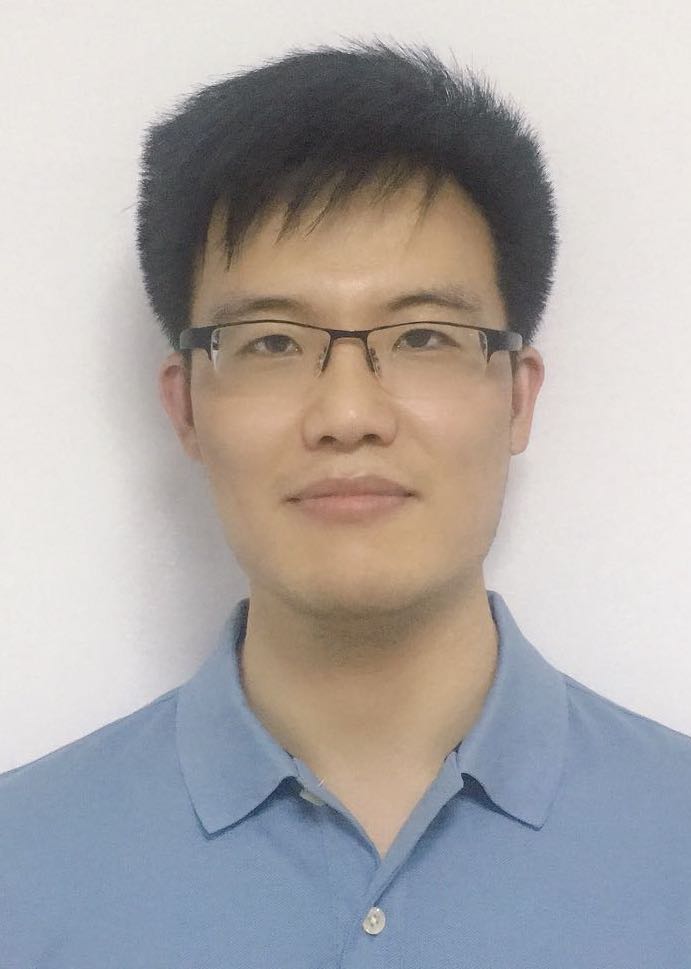}}]{Jimin Wang} (IEEE Member) 
	received the B.S. degree in mathematics from Shandong Normal University, China, in 2012 and the Ph.D. degree from School of Mathematics, Shandong University, China, in 2018. From May 2017 to May 2018, he was a joint Ph.D. student with the School of Electrical Engineering and Computing, The University of Newcastle, Australia. From July 2018 to December 2020, he was a postdoctoral researcher in the Institute of Systems Science, Chinese Academy of Sciences, China. He is currently an associate professor in the School of Automation and Electrical Engineering, University of Science and Technology Beijing. His current research interests include privacy and security in cyber-physical systems, stochastic systems and networked control systems.
	
	He is a member of the IEEE CSS Technical Committee on Security and Privacy, the IEEE CSS Technical Committee on Networks and Communication
	Systems, the IFAC Technical Committee 1.5 on Networked Systems. He was a recipient of Shandong University’s excellent doctoral dissertation. He serves as an Associate Editor for \textit{Systems \& Control Letters} and \textit{Journal of Automation and Intelligence}.
\end{IEEEbiography}

\begin{IEEEbiography}[{\includegraphics[width=1in,height=1.25in,clip,keepaspectratio]{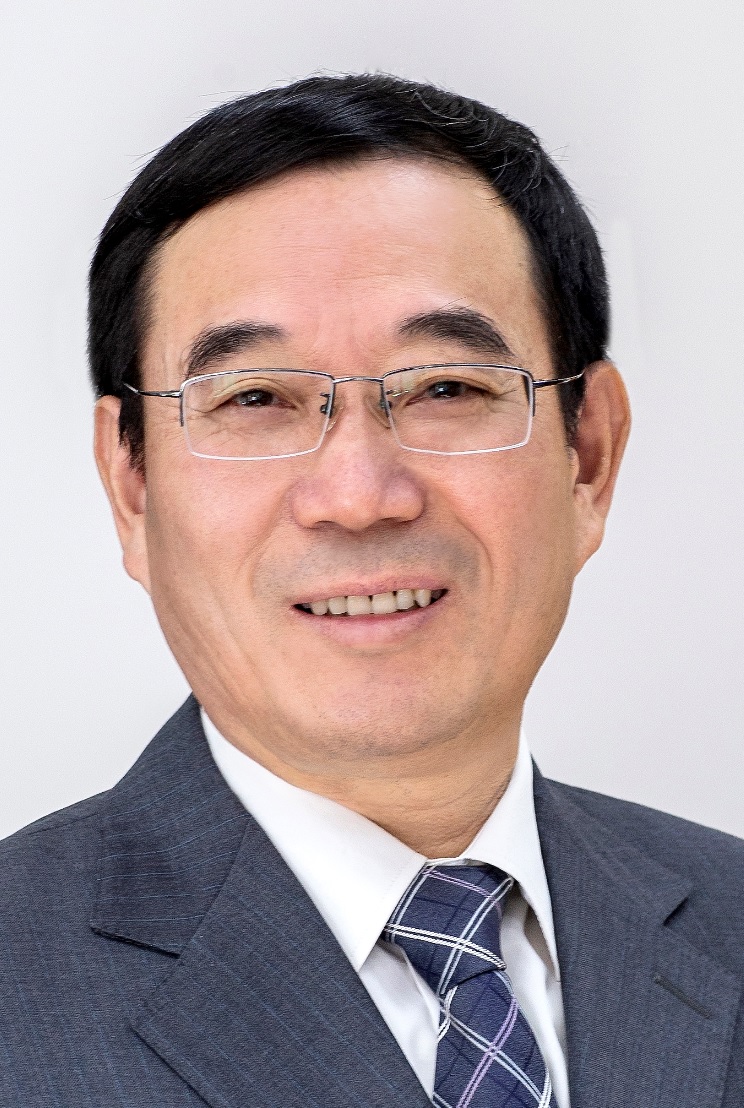}}]{Ji-Feng Zhang} (IEEE Fellow)
	received the B.S. degree in mathematics from Shandong University, China, in 1985, and the Ph.D. degree from the Institute of Systems Science, Chinese Academy of Sciences (CAS), China, in 1991. 
	Now he is with the School of Automation and Electrical Engineering, Zhongyuan University of Technology; and the State Key Laboratory of Mathematical Sciences, Academy of Mathematics and Systems Science, CAS.
	His current research interests include system modeling, adaptive control, stochastic systems, and multi-agent systems.
	
	He is an IEEE Fellow, IFAC Fellow, CAA Fellow, SIAM Fellow, member of the European Academy of Sciences and Arts, and Academician of the International Academy for Systems and Cybernetic Sciences. He received the Second Prize of the State Natural Science Award of China in 2010 and 2015, respectively. He was a Vice-President of the Chinese Association of Automation, the Chinese Mathematical Society and the Systems Engineering Society of China. He was a Vice-Chair of the IFAC Technical Board, member of the Board of Governors, IEEE Control Systems Society; Convenor of Systems Science Discipline, Academic Degree Committee of the State Council of China. He served as Editor-in-Chief, Deputy Editor-in-Chief or Associate Editor for more than 10 journals, including {\em Science China Information Sciences}, {\em IEEE Transactions on Automatic Control} and {\em SIAM Journal on Control and Optimization} etc.
\end{IEEEbiography}

\end{document}